%
\documentclass[draftclsnofoot,onecolumn,12pt]{IEEEtran}
\newtheorem{theorem}{Theorem}
\newtheorem{lemma}{Lemma}
\newtheorem{remark}{Remark}
\newtheorem{definition}{Definition}
\newtheorem{corollary}{Corollary}

\newtheorem{claim}{Claim}
\newtheorem{example}{Example}
\usepackage{graphicx}
\usepackage{cite}
\usepackage{verbatim}
\usepackage{dsfont}
\usepackage{amsmath, amssymb}
\graphicspath{{figs/}}
\usepackage[caption=false,font=footnotesize]{subfig}

\usepackage{fixltx2e}
\usepackage{pseudocode}
\usepackage{algorithmic,algorithm}

\usepackage[final,ulem=normalem]{changes}
\definechangesauthor{RW}{blue}
\definechangesauthor{TC}{red}

\begin{document}

\title{Coded Cooperative Data Exchange\\ in Multihop Networks}

\author{Thomas~A.~Courtade,~\IEEEmembership{Student~Member,~IEEE,}
        and~Richard~D.~Wesel,~\IEEEmembership{Senior~Member,~IEEE}
\thanks{The authors are with the Electrical Engineering Department, University of California, Los Angeles, CA, 90095 USA (email: tacourta@ee.ucla.edu; wesel@ee.ucla.edu).}
\thanks{This work was presented in part at the 2010 Military Communications Conference (MILCOM) \cite{bib:CourtadeWeselMILCOM2010} and the 2010-2011 Allerton Conference on Communication, Control, and Computing \cite{bib:CourtadeWeselAllerton2010, bib:CourtadeWeselAllerton2011}.}
\thanks{This research was supported by Rockwell Collins through contract \#4502769987.}
}
 
\maketitle

\begin{abstract}
Consider a connected network of $n$ nodes that all wish to recover $k$ desired packets.  Each node begins with a subset of the desired packets and exchanges coded packets with its neighbors.  This paper provides necessary and sufficient conditions which characterize the set of all transmission schemes that permit 
every node to ultimately learn (recover) all $k$ packets.  When the network satisfies certain regularity conditions and packets are randomly distributed, this paper provides tight concentration results on the number of transmissions required to achieve universal recovery.  For the case of a fully connected network, a polynomial-time algorithm for computing an optimal transmission scheme is derived.  An application to secrecy generation is discussed.

\end{abstract}

\begin{IEEEkeywords}
Coded Cooperative Data Exchange, Universal Recovery, Network Coding.
\end{IEEEkeywords}

\section{Introduction}\label{sec:Intro}
Consider a connected network of $n$ nodes that all wish to recover $k$ desired packets.  Each node begins with a subset of the desired packets and broadcasts messages to its neighbors over discrete, memoryless, and interference-free channels.  Furthermore, every node knows which packets are already known by each node and knows the topology of the network.  How many transmissions are required to disseminate the $k$ packets to every node in the network?  How should this be accomplished?  These are the essential questions addressed.  We refer to this as the \emph{Coded Cooperative Data Exchange} problem, or just the Cooperative Data Exchange problem.

This work is motivated in part by emerging issues in distributed data storage.  Consider the problem of backing up data on servers in a large data center.  One commonly employed method to protect data from corruption is  replication.  Using this method, large quantities of data are replicated in several locations so as to protect from various sources of corruption (e.g., equipment failure, power outages, natural disasters, etc.).  As the quantity of information in large data centers continues to increase, the number of file transfers required to complete a periodic replication task is becoming an increasingly important consideration due to time, equipment, cost, and energy constraints.  The results contained in this paper address these issues.

This model also has natural applications in the context of tactical networks, and we give one of them here.  Consider a scenario in which an aircraft flies over a group of nodes on the ground and tries to deliver a video stream.  Each ground node might only receive a subset of the transmitted packets due to interference, obstructions, and other signal integrity issues.  In order to recover the transmission, the nodes are free to communicate with their neighbors, but would like to minimize the number of transmissions in order to conserve battery power (or avoid detection, etc.).  How should the nodes share information, and what is the minimum number of transmissions required so that the entire network can recover the video stream? 

Beyond the examples mentioned above, the results presented herein can also be applied to practical secrecy generation amongst a collection of nodes.  We consider this application in detail in Section \ref{sec:URSecrecy}.

\subsection{Related Work}

Distributed data exchange problems have received a great deal of attention over the past several years.  The powerful techniques afforded by network coding \cite{bib:Ahlswede00networkinformation, bib:Li03linearnetwork} have paved the way for cooperative communications at the packet-level.

The coded cooperative data exchange problem (also called the universal recovery problem in \cite{bib:CourtadeWeselMILCOM2010,bib:CourtadeWeselAllerton2010,bib:CourtadeWeselAllerton2011}) was originally introduced by El Rouayheb {\it et al.} in \cite{bib:ElRouayhebChaudhryITW2007, bib:ElRouayheb2010ITW } for a fully connected network (i.e., a single-hop network).   For this special case, a randomized algorithm for finding an optimal transmission scheme was given in \cite{bib:SprintsonSadeghiISIT2010}, and the first deterministic algorithm was recently given in \cite{bib:SprintsonQshine2010}. In the concluding remarks of \cite{bib:SprintsonQshine2010}, the weighted universal recovery problem (in which the objective is to minimize the weighted sum of transmissions by nodes) was posed as an open problem. However, this was solved using a variant of the same algorithm in \cite{bib:OzgulSprintsonITA2011}, and independently by the present authors using a submodular algorithm in \cite{bib:CourtadeWeselAllerton2011}.  

The coded cooperative data exchange problem is related to the index coding problem originally introduced by Birk and Kol in \cite{bib:BirkKolIT2006}.  Specifically, generalizing the index coding problem to permit each node to be a transmitter (instead of having a single server) and further generalizing so that the network need not be a single hop network leads to a class of problems that includes our problem as a special case in which each node desires to receive all packets.

One  significant result in index coding is that nonlinear index coding outperforms the best linear index code in certain cases  \cite{bib:LubetzkyStavIT2009,
bib:AlonLubetzkyFOCS2008}. As discussed above, our problem is a special case of the generalized index coding problem, and it turns out that linear encoding does achieve the minimum number of transmissions required for universal recovery and  this solution is computable in polynomial time for some important cases.

This paper applies principles of cooperative data exchange to generate secrecy in the presence of an eavesdropper.  In this context, the secrecy generation problem was originally studied in \cite{bib:CsiszarNarayanIT2004}.  In \cite{bib:CsiszarNarayanIT2004}, Csiszar and Narayan gave single-letter characterizations of the secret-key and private-key capacities for a network of nodes connected by an error-free broadcast channel.  While  general and powerful, these results left two practical issues as open questions.  First, (as with many information-theoretic investigations) the results require the nodes to observe arbitrarily long sequences of i.i.d.~source symbols, which is generally not practical.  Second, no efficient algorithm is provided in \cite{bib:CsiszarNarayanIT2004} which achieves the respective secrecy capacities.  More recent work in \cite{bib:YeNarayanISIT2005,bib:YeReznikCISS2010}  addressed the latter point.

\subsection{Our Contributions}

In this paper, we provide necessary and sufficient conditions for achieving universal recovery\footnote{In this paper, we use the term \emph{universal recovery} to refer to the ultimate condition where every node has successfully recovered all packets.} in arbitrarily connected multihop networks.  
We specialize these necessary and sufficient conditions to obtain precise results in the case where the underlying network topology satisfies some modest regularity conditions.  

For the case of a fully connected network, we provide an algorithm based on submodular optimization which solves the cooperative data exchange problem.  This algorithm is unique from the others previously appearing in the literature (cf. \cite{bib:SprintsonSadeghiISIT2010,bib:SprintsonQshine2010,bib:OzgulSprintsonITA2011}) in that it exploits submodularity.  As a corollary, we provide exact concentration results when packets are randomly distributed in a network.  

In this same vein, we  also   obtain tight concentration results and approximate solutions when the underlying network is $d$-regular and packets are distributed randomly.

Furthermore, if packets are divisible (allowing transmissions to consist of partial packets), we prove that the traditional cut-set bounds can be achieved for any network topology.  In the case of $d$-regular and fully connected networks, we show that splitting packets does not typically provide any significant benefits.

Finally, for the application to secrecy generation, we leverage the results of \cite{bib:CsiszarNarayanIT2004} in the context of the cooperative data exchange problem for a fully connected network.  In doing so, we provide an efficient algorithm that achieves the secrecy capacity without requiring any quantities to grow asymptotically large.

\subsection{Organization}

This paper is organized as follows.  Section \ref{sec:ModelBcast} formally introduces the problem and provides basic definitions and notation.  Section \ref{sec:URMainResults} presents our main results. Section \ref{sec:URSecrecy} discusses the application of our results to secrecy generation by a collection of nodes in the presence of an eavesdropper.  Section \ref{sec:URProofs} contains the relevant proofs.   Section \ref{sec:URConclusion} delivers the conclusions and discusses directions for future work.

\section{System Model and Definitions}\label{sec:ModelBcast}
Before we formally introduce the problem, we establish some notation.  Let $\mathbb{N}={0,1,2,\dots}$ denote the set of natural numbers.
For two sets $A$ and $B$, the relation $A \subset B$ implies that $A$ is a proper subset of $B$ (i.e., $A\subseteq B$ and $A\neq B$).  For a set $A$, the corresponding power set is denoted $2^A:=\{B : B \subseteq A\}$. We  use the notation $[m]$ to denote the set $\{1,\dots,m\}$.

This paper considers a network  of $n$ nodes.  The network must be connected, but it need not be fully connected (i.e., it need not be a complete graph).  A graph $\mathcal{G}=(V,E)$ describes the specific connections in the network, where $V$ is the set of vertices $\{v_i: i \in \{1, \ldots, n\} \}$ (each corresponding to a node)  and $E$ is the set of edges connecting nodes. We assume that the edges in $E$ are undirected, but our results can be extended to directed graphs.  

Each node wishes to recover the same $k$ desired packets, and each node begins with a (possibly empty) subset of the desired packets.  Formally, let ${P}_i\subseteq \{p_1,\dots,p_k\}$ be the (indexed) set of packets originally available at node $i$, and $\{{P}_i\}_{i=1}^n$ satisfies $\bigcup_{i=1}^n {P}_i=\{p_1,\dots,p_k\}$.  Each $p_j\in\mathbb{F}$, where $\mathbb{F}$ is some finite field (e.g. $\mathbb{F}=\mbox{GF}(2^m)$).  For our purposes, it suffices to assume $|\mathbb{F}|\geq 2n$.  The set of packets initially missing at node $i$ is denoted $P_i^c:=\{p_1,\dots,p_k\}\backslash P_i$.

Throughout this paper, we assume that each packet $p_i\in \{p_1,\dots,p_k\}$ is equally likely to be any element of $\mathbb{F}$.  Moreover, we assume that packets are independent of one another.  Thus, no correlation between different packets or prior knowledge about unknown packets can be exploited.

To simplify notation, we will refer to a given problem instance (i.e., a graph and corresponding sets of packets available at each node) as a network $\mathcal{T}=\{\mathcal{G},P_1,\dots,P_n\}$.  When no ambiguity is present, we will refer to a network by $\mathcal{T}$ and omit the implicit dependence on the parameters $\{\mathcal{G},P_1,\dots,P_n\}$.

Let the set $\Gamma(i)$ be the neighborhood of node $i$. There exists an edge $e\in E$ connecting two vertices $v_i,v_j\in V$ iff $i\in \Gamma(j)$.   For convenience, we put $i\in \Gamma(i)$.  Node $i$ sends (possibly coded) packets to its neighbors $\Gamma(i)$ over discrete, memoryless, and interference-free channels.  In other words, if node $i$ transmits a message, then every node in $\Gamma(i)$ receives that message. If $S$ is a set of nodes, then we define $\Gamma({S})=\cup_{i\in {S}} \Gamma(i)$.  In a similar manner, we define $\partial(S)=\Gamma(S)\backslash S$ to be the boundary of the vertices in $S$.  An example of sets $S$, $\Gamma(S)$, and $\partial(S)$ is given in Figure \ref{fig:G_sets}.

\begin{figure}
\centering
\def\svgwidth{3in}
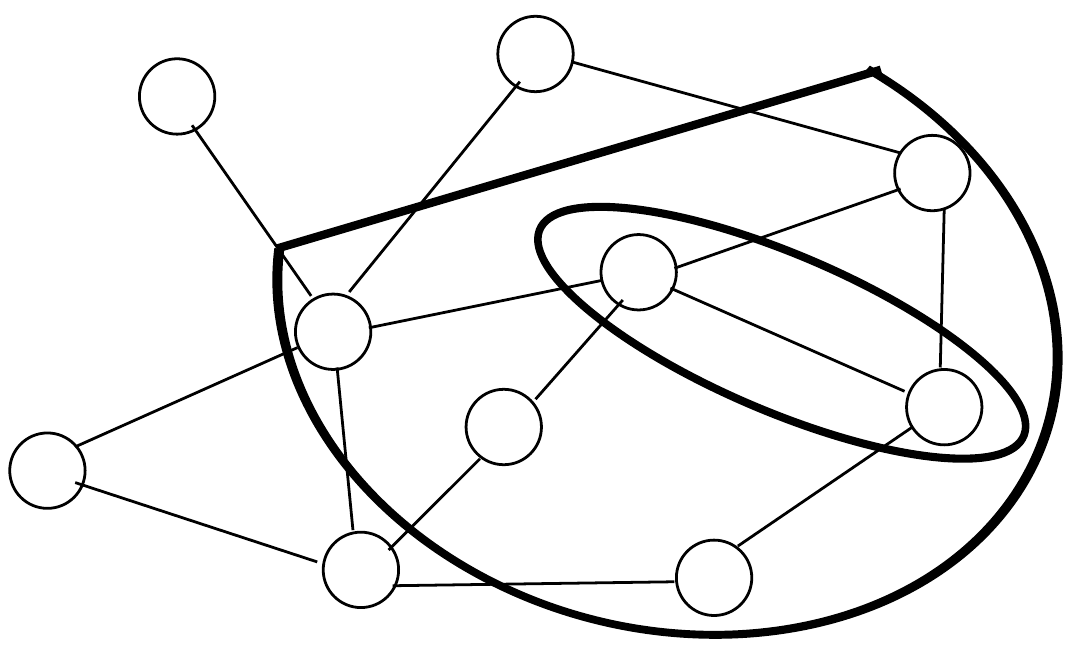
\caption{For the given graph, a set of vertices $S$ and its neighborhood $\Gamma(S)$ are depicted.  The set $\partial(S)$ (i.e., the boundary of $S$) consists of the four vertices in $\Gamma(S)$ which are not in $S$.}
\label{fig:G_sets}
\end{figure}

This paper seeks to determine the minimum number of transmissions required to achieve universal recovery (when every node has learned all $k$ packets).  We primarily consider the case where packets are deemed indivisible.  In this case, a single transmission by user $i$ consists of sending a packet (some $z\in\mathbb{F}$) to all nodes $j \in \Gamma(i)$.  This motivates the following definition.

\begin{definition}\label{def:URminNumberM}
Given a network $\mathcal{T}$, the minimum number of transmissions required to achieve universal recovery is denoted $M^*(\mathcal{T})$.
\end{definition}

To clarify this concept, we briefly consider two examples:

\begin{example}[Line Network] \label{ex:B_N_line}
Suppose $\mathcal{T}$ is a network of nodes connected along a line as follows: $V=\{v_1,v_2,v_3\}$, $E=\{(v_1,v_2),(v_2,v_3)\}$, ${P}_1=\{p_1\}$, ${P}_2=\emptyset$, and ${P}_3=\{p_2\}$.  Note that each node must transmit at least once in order for all nodes to recover  $\{p_1,p_2\}$, hence $M^*(\mathcal{T})\geq 3$.  Suppose node 1 transmits $p_1$ and node 3 transmits $p_2$. Then (upon receipt of $p_1$ and $p_2$ from nodes 1 and 3, respectively) node 2 transmits $p_1\oplus p_2$,  where $\oplus$ indicates addition in the finite field $\mathbb{F}$.  This strategy requires 3 transmissions and allows each user to recover $\{p_1,p_2\}$.  Hence $M^*(\mathcal{T})= 3$.
\end{example}

Example \ref{ex:B_N_line} demonstrates a transmission schedule that uses two \emph{rounds} of communication.  The transmissions by node $i$ in a particular round of communication can depend only on the information available to node $i$ prior to that round (i.e. ${P}_i$ and previously received transmissions from neighboring nodes).  In other words, the transmissions are causal.  The transmission scheme employed in Example \ref{ex:B_N_line} is illustrated in Figure \ref{fig:lineExFig}.

\begin{figure}
\centering
\def\svgwidth{5in}
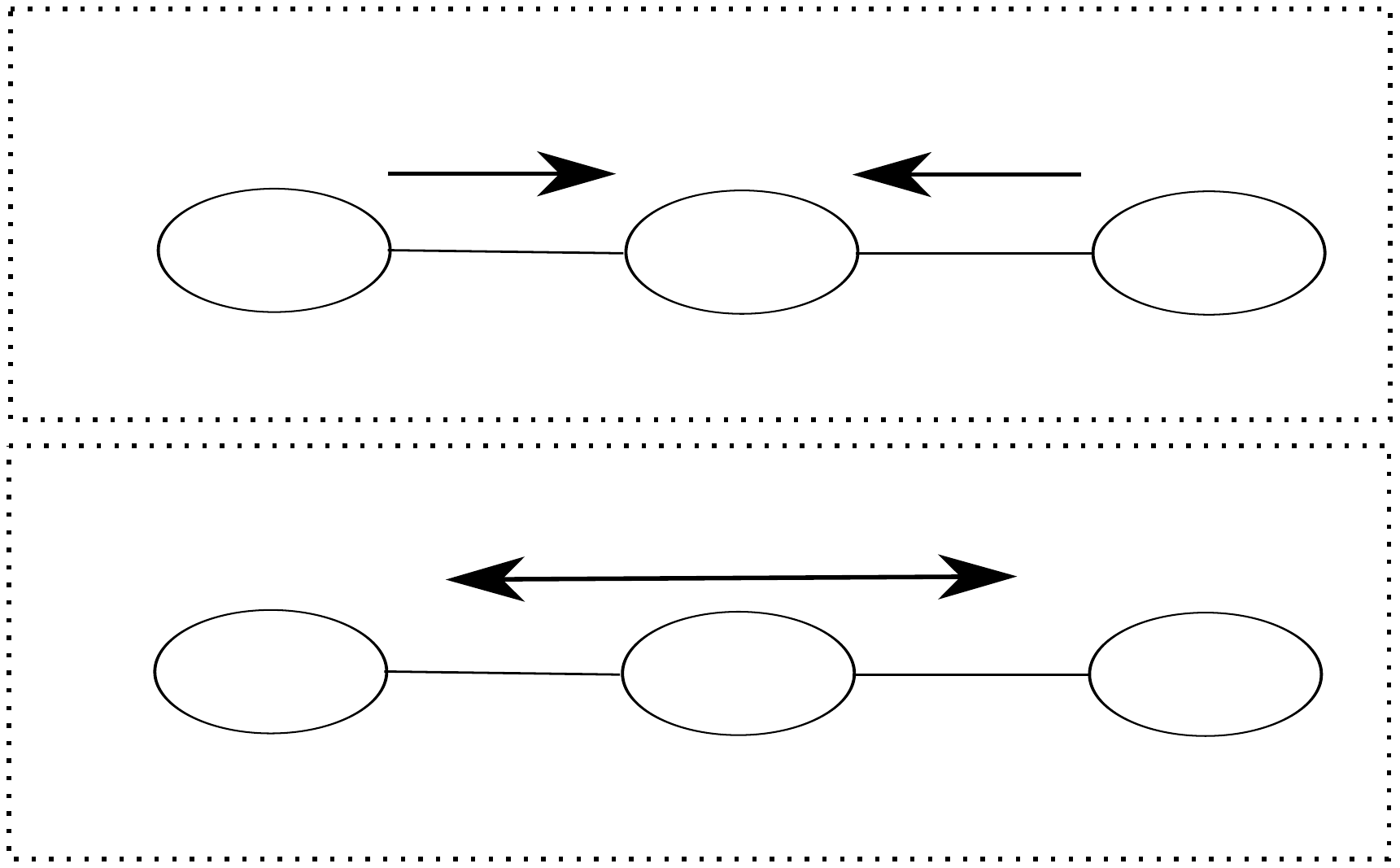
\caption{An illustration of the transmission scheme employed in Example \ref{ex:B_N_line}.  During the first time instant, Nodes 1 and 3 broadcast packets $p_1$ and $p_2$, respectively. During the second time instant, Node 2 broadcasts the XOR of packets $p_1$ and $p_2$.  This scheme requires three transmissions and achieves universal recovery.}
\label{fig:lineExFig}
\end{figure}

\begin{example}[Fully Connected Network]\label{ex:B_N_clique}
Suppose $\mathcal{T}$ is a 3-node fully connected network in which  $\mathcal{G}$ is a complete graph on 3 vertices, and  ${P}_i=\{p_1,p_2,p_3\} \backslash p_i$.  Clearly one transmission is not sufficient, thus $M^*(\mathcal{T})\geq 2$.  It can be seen that two transmissions suffice: let node 1 transmit $p_2$ which lets node 2 have ${P}_2 \cup p_2 = \{p_1,p_2,p_3\}$.  Now, node 2 transmits $p_1\oplus p_3$, allowing nodes 1 and 3 to each recover all three packets.  Thus $M^*(\mathcal{T})= 2$.  Since each transmission was only a function of the packets originally available at the corresponding node, this transmission strategy can be accomplished in a single round of communication.
\end{example}

In the above examples, we notice that the transmission schemes are partially characterized by a schedule of which nodes transmit during which round of communication.  We formalize this notion with the following definition:

\begin{definition}[Transmission Schedule] A set of integers $\{b_i^{j} : i\in [n],j\in [r], b_i^j\in \mathbb{N} \}$ is called a transmission schedule for $r$ rounds of communication if node $i$ makes exactly $b_i^j$ transmissions during communication round $j$.  \end{definition}

When the parameters $n$ and $r$ are clear from context, a transmission schedule will be denoted by the shorthand notation $\{b_i^j\}$.  Although finding a transmission schedule that achieves universal recovery is relatively easy (e.g., each node transmits all packets in their possession at each time instant), finding one that achieves universal recovery with $M^*(\mathcal{T})$ transmissions can be extremely difficult.  This is demonstrated by the following example:

\begin{example}[Optimal Cooperative Data Exchange is NP-Hard.]\label{ex:hardness}
Suppose $\mathcal{T}$ is a network with $k=1$ corresponding to a bipartite graph with left and right vertex sets $V_L$ and $V_R$ respectively.  Let $P_i=p_1$ for each $i\in V_L$, and let $P_i=\emptyset$ for each $i\in V_R$.  In this case, $M^*(\mathcal{T})$ is given by the minimum number of sets in $\{\Gamma(i)\}_{i\in V_L}$ which cover all vertices in $V_R$.  Thus, finding $M^*(\mathcal{T})$ is at least as hard as the Minimum Set Cover problem, which is NP-complete \cite{bib:Karp1972}.
\end{example}

Several of our results are stated in the context of \emph{randomly distributed packets}.  Assume $0<q<1$ is given.  Our model is essentially that each packet is available independently at each node with probability $q$.  However, we must condition on the event that each packet is available to at least one node.  Thus, when packets are randomly distributed, the underlying probability measure is given by
\begin{align}
\Pr\left[ p_i \in \bigcup_{j\in S} P_j \right] = \frac{1-(1-q)^{|S|}}{1-(1-q)^n}\label{eqn:probModel}
\end{align}
for all $i\in [k]$ and all nonempty $S\subseteq V =[n]$.

Finally, we introduce one more definition which links the network topology with the number of communication rounds, $r$.

\begin{definition}
For a graph $\mathcal{G}=(V,E)$ on $n$ vertices, define $\mathcal{S}^{(r)}(\mathcal{G})\subset (2^V)^{r+1}$ as follows: $(S_0,S_1,\dots,S_r) \in  \mathcal{S}^{(r)}(\mathcal{G})$ if and only if the sets $\{S_i\}_{i=0}^r$ satisfy the following two conditions:
\begin{align*}
\emptyset \subset S_i \subset V &\mbox{~~for each $0\leq i \leq r$, and} \\
S_{i-1} \subseteq S_{i} \subseteq \Gamma(S_{i-1}) &\mbox{~~for each $1\leq i \leq r$.}
\end{align*}
\end{definition}
In words, any element in $\mathcal{S}^{(r)}(\mathcal{G})$ is a nested sequence of subsets of vertices of $\mathcal{G}$.  Moreover, the constraint that each set in the sequence is contained in its predecessor's neighborhood implies that the sets cannot expand too quickly relative to the topology of $\mathcal{G}$. 

To make the definition of $\mathcal{S}^{(r)}(\mathcal{G})$ more concrete, we have illustrated a sequence $(S_0,S_1,S_2)\in\mathcal{S}^{(2)}(\mathcal{G})$ for a particular choice of graph $\mathcal{G}$ in Figure \ref{fig:Sets}.
\begin{figure}
\centering
\def\svgwidth{4in}
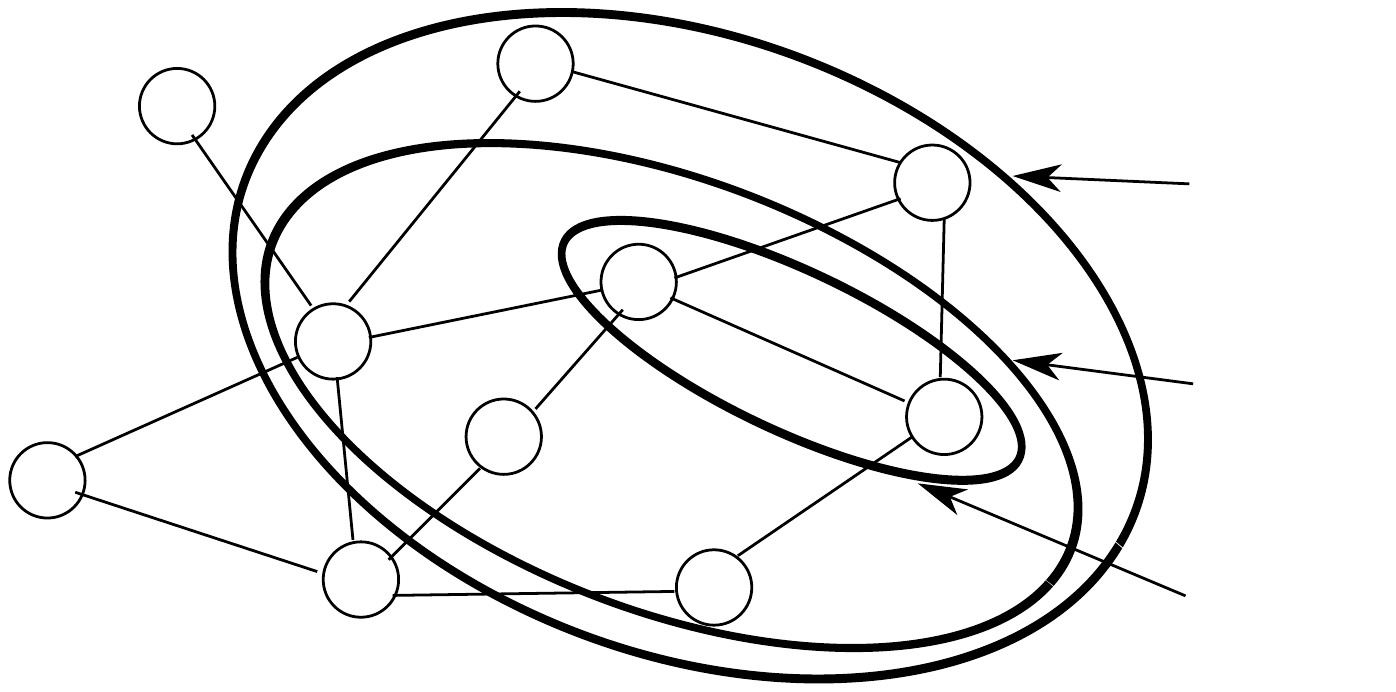
\caption{An example of a sequence $(S_0,S_1,S_2)\in\mathcal{S}^{(2)}(\mathcal{G})$ for a particular choice of graph $\mathcal{G}$.}
\label{fig:Sets}
\end{figure}

\section{Main Results}\label{sec:URMainResults}
In this section, we present our main results.  Proofs are delayed until Section \ref{sec:URProofs}.

\subsection{Necessary and Sufficient Conditions for Universal Recovery}
First, we provide necessary and sufficient conditions for achieving universal recovery in a network $\mathcal{T}$.  It turns out that these conditions are characterized by a particular set of transmission schedules $\mathcal{R}_r(\mathcal{T})$ which we define as follows:

\begin{definition}
For a network $\mathcal{T}=\{\mathcal{G},P_1,\dots,P_n\}$, define the region $\mathcal{R}_r(\mathcal{T}) \subseteq \mathbb{N}^{n \times r}$  to be the set of all transmission schedules $\{b_i^j\}$ satisfying:
\begin{align*}
\sum_{j=1}^r ~~& \sum_{i\in S_j^c \cap \Gamma(S_{j-1})} b^{(r+1-j)}_i \geq \left| \bigcap_{i\in S_r} P_i^c \right| \mbox{~~for each $(S_0,\dots,S_r) \in \mathcal{S}^{(r)}(\mathcal{G})$}.
\end{align*}
\end{definition}

\begin{theorem} \label{thm:generalFeasibleRegion}
For a network $\mathcal{T}$, a transmission schedule $\{b_i^j\}$ permits universal recovery in $r$ rounds of communication if and only if $\{b_i^j\}\in \mathcal{R}_r(\mathcal{T})$.
\end{theorem}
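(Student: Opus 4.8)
The plan is to prove the two implications separately: \emph{necessity} (any schedule achieving universal recovery must lie in $\mathcal{R}_r(\mathcal{T})$) by an information-flow/cut argument, and \emph{sufficiency} (any schedule in $\mathcal{R}_r(\mathcal{T})$ admits a linear code achieving universal recovery) by random linear network coding combined with a Schwartz--Zippel existence step that exploits $|\mathbb{F}|\geq 2n$. A convenient common language is the time-expanded graph: create a vertex $(v,t)$ for every node $v$ and round $t\in\{0,1,\dots,r\}$, add an infinite-capacity memory arc $(v,t-1)\to(v,t)$, and model each broadcast by a relay, namely an arc of capacity $b_i^t$ from $(i,t-1)$ into a fresh vertex followed by infinite-capacity arcs from that vertex to every $(w,t)$ with $w\in\Gamma(i)$, so that the $b_i^t$ symbols are shared by all neighbors. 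Reading the rounds in reverse, a nested sequence $(S_0,\dots,S_r)$ becomes a cut (round $r$ against $S_0,S_1$, down to round $1$ against $S_{r-1},S_r$) whose capacity is exactly the left-hand side of the defining inequality.

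For necessity, fix $(S_0,\dots,S_r)\in\mathcal{S}^{(r)}(\mathcal{G})$ and set $F=\bigcap_{i\in S_r}P_i^c$, the packets absent from every node of $S_r$; these are independent and uniform, so they carry entropy $|F|\log|\mathbb{F}|$. Define $I_t$ to be the mutual information between $F$ and the total knowledge (initial packets plus received symbols) of the nodes in $S_{r-t}$ after round $t$. Since no node of $S_r$ holds a packet of $F$ initially, $I_0=0$; since universal recovery forces every node of $S_0$ to know $F$ after round $r$, $I_r=|F|\log|\mathbb{F}|$. The crux is the one-step bound $I_{t+1}\le I_t+\big(\sum_{i\in S_{r-t}^c\cap\Gamma(S_{r-t-1})}b_i^{t+1}\big)\log|\mathbb{F}|$: because $S_{r-t-1}\subseteq S_{r-t}$, the round-$(t+1)$ knowledge of $S_{r-t-1}$ splits into what was already available to $S_{r-t}$ after round $t$ (transmissions from within $S_{r-t}$ are, by causality, functions of that knowledge) plus the genuinely new symbols broadcast in round $t+1$ by the boundary nodes $S_{r-t}^c\cap\Gamma(S_{r-t-1})$, each carrying at most $\log|\mathbb{F}|$. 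Here the nested condition $S_{r-t-1}\subseteq S_{r-t}\subseteq\Gamma(S_{r-t-1})$ is precisely what guarantees that the new frontier $S_{r-t-1}$ hears only through this boundary. Telescoping from $t=0$ to $t=r-1$ and matching $t+1=r+1-j$ yields $|F|\le\sum_{j=1}^r\sum_{i\in S_j^c\cap\Gamma(S_{j-1})}b_i^{r+1-j}$, the defining inequality of $\mathcal{R}_r(\mathcal{T})$. The only real care needed is the causal bookkeeping ensuring each transmission is charged to exactly one boundary and one round.

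For sufficiency, given $\{b_i^j\}\in\mathcal{R}_r(\mathcal{T})$ I would run random linear network coding on the schedule: in round $t$ each node $i$ broadcasts $b_i^t$ symbols, each an independent uniform $\mathbb{F}$-linear combination of all packets currently in its possession. Decodability at a fixed node $v$ says that the matrix expressing $v$'s received symbols together with $P_v$ in terms of $p_1,\dots,p_k$ has rank $k$, equivalently that a suitable $k\times k$ minor is a nonzero polynomial in the coding coefficients. By standard algebraic network-coding arguments this minor is not identically zero exactly when, in the time-expanded graph, every cut separating the initial holders of the packets $v$ lacks from $(v,r)$ has capacity at least the number of those packets lying entirely on the far side. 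The nested sequences in $\mathcal{S}^{(r)}(\mathcal{G})$ enumerate exactly these cuts, with value $\sum_{j}\sum_{i\in S_j^c\cap\Gamma(S_{j-1})}b_i^{(r+1-j)}$ and requirement $\big|\bigcap_{i\in S_r}P_i^c\big|$, so membership in $\mathcal{R}_r(\mathcal{T})$ forces every relevant min-cut to be adequate and hence each node's minor to be a nonzero polynomial. As there are only $n$ nodes, Schwartz--Zippel with $|\mathbb{F}|\geq 2n$ and a union bound produce a single assignment of coefficients for which all $n$ minors are simultaneously nonzero, so every node decodes; in particular linear coding suffices.

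I expect the main obstacle to be the sufficiency direction, specifically the translation of the combinatorial family of nested-sequence inequalities into the required min-cut guarantees in the time-expanded multi-source network. One must verify that the cuts enumerated by $\mathcal{S}^{(r)}(\mathcal{G})$ are \emph{exhaustive}, so that no cut outside this family can be violated, and that the broadcast-relay and memory-arc structure makes the cut value collapse to exactly $\sum_{j}\sum_{i\in S_j^c\cap\Gamma(S_{j-1})}b_i^{(r+1-j)}$ while the far-side source count collapses to $\big|\bigcap_{i\in S_r}P_i^c\big|$; one must also handle the fact that different packets have different initial holder sets $\{i:p_\ell\in P_i\}$. On the necessity side the analogous subtlety is purely organizational, namely slicing the causal dependence so that the reverse-time frontier matches the boundaries $S_j^c\cap\Gamma(S_{j-1})$ with round index $r+1-j$, and does not present a genuine difficulty once the time-expanded cut is identified.
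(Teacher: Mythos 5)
Your overall architecture is sound, and your necessity argument is both correct and genuinely different from the paper's. The paper proves \emph{both} directions by reducing universal recovery to a single-source multicast on a time-expanded graph $\mathcal{G}^{NC}$ (your construction is the same object, minus the super-source $s$ and packet vertices $u_1,\dots,u_k$) and invoking the max-flow min-cut theorem of Ahlswede et al.; necessity then follows because each nested sequence induces a specific cut. Your telescoping mutual-information argument ($I_0=0$, $I_r=|F|\log|\mathbb{F}|$, and the one-step increment bounded by the boundary transmissions $\sum_{i\in S_{r-t}^c\cap\Gamma(S_{r-t-1})}b_i^{t+1}$) is self-contained, avoids the network-coding machinery entirely for that direction, and the causal bookkeeping is handled correctly: since $S_{r-t-1}\subseteq S_{r-t}$, round-$(t+1)$ transmissions from within $S_{r-t}$ are functions of the collective knowledge $K_t(S_{r-t})$, so only boundary symbols can increase the mutual information. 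The index match $j=r-t$ is also right.

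The sufficiency direction, however, contains a genuine gap, and it sits exactly where you predicted: you assert that the nested sequences in $\mathcal{S}^{(r)}(\mathcal{G})$ ``enumerate exactly these cuts'' but never prove that they are \emph{exhaustive}, i.e., that the finitely many nested-sequence inequalities imply that \emph{every} cut in the time-expanded graph has adequate capacity. This is not a routine verification --- it is the bulk of the paper's proof. There, an arbitrary cut $(S,S^c)$ separating $s$ from some $v_i^r$ is normalized layer by layer (modifications M1--M6, justified by J1--J6): the infinite-capacity broadcast arcs force each relay $w_i^r$ to the sink side precisely when $i\in\Gamma(S_0)$, the infinite-capacity memory arcs then force $v_i^{r-1}$ to the sink side for $i\in S_0$ and to the source side for $i\notin\Gamma(S_0)$, and iterating down the layers produces nested sets $\emptyset\neq S_0\subseteq S_1\subseteq\dots\subseteq S_r$ with $S_j\subseteq\Gamma(S_{j-1})$, with the packet vertices finally assigned according to $\cup_{i\in S_r}P_i$ --- all without increasing cut capacity, so that every cut dominates one indexed by a nested sequence. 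Without this normalization lemma your claim ``membership in $\mathcal{R}_r(\mathcal{T})$ forces every relevant min-cut to be adequate'' is unsupported. A second, more minor issue: your appeal to Schwartz--Zippel with $|\mathbb{F}|\geq 2n$ does not work via a naive degree-based union bound, since the sink determinants have total degree far exceeding $2n$; one needs the per-variable-degree version of the lemma (Koetter--M\'edard), the random-coding bound of Ho et al.\ (failure probability per sink roughly $n/|\mathbb{F}|$ per randomized edge in the refined form), or the deterministic algorithm of Jaggi et al., which is what the paper cites.
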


Theorem \ref{thm:generalFeasibleRegion} reveals that the set of transmission schedules permitting universal recovery is characterized precisely by the region $\mathcal{R}_r(\mathcal{T})$.  In fact, given a transmission schedule in $\mathcal{R}_r(\mathcal{T})$, a corresponding coding scheme that achieves universal recovery can be  computed in polynomial time using the algorithm in \cite{bib:JaggiIT2005} applied to the network coding graph discussed in the proof of Theorem \ref{thm:generalFeasibleRegion}.  
Alternatively, one could employ random linear network coding over a sufficiently large field size \cite{bib:randLinNC}.  If transmissions are made in a manner consistent with a schedule in $\mathcal{R}_r(\mathcal{T})$, universal recovery will be achieved with high probability.

Thus, the problem of achieving universal recovery with the minimum number of transmissions reduces to solving a combinatorial optimization problem over $\mathcal{R}_r(\mathcal{T})$.  As this problem was shown to be NP-hard in Example \ref{ex:hardness}, we do not attempt to solve it in its most general form.  Instead, we apply Theorem \ref{thm:generalFeasibleRegion} to obtain surprisingly simple characterizations for several cases of interest.  

Before proceeding, we provide a quick example showing how the traditional cut-set bounds can be recovered from Theorem \ref{thm:generalFeasibleRegion}.

\begin{example}[Cut-Set Bounds]
Considering the constraint defining $\mathcal{R}_r(\mathcal{T})$ in which the nested subsets that form $\mathcal{S}^{(r)}(\mathcal{G})$ are all identical.  That is, $(S,S,\dots,S)\in \mathcal{S}^{(r)}(\mathcal{G})$ for some nonempty ${S}\subset V$.  We see that any transmission schedule $\{b_i^j\}\in \mathcal{R}_r(\mathcal{T})$ must satisfy the familiar cut-set bounds:
\begin{align}
\sum_{j=1}^r \sum_{i\in \partial({S})} &  b^{j}_i \geq \left| \bigcap_{i\in {S}} {P}_i^c \right|. \label{eqn:cutsetBounds}
\end{align}
In words, the total number of packets that flow {into} the set of nodes ${S}$ must be greater than or equal to the number of packets that the nodes in {$S$} are collectively missing.
\end{example}

\subsection{Fully Connected Networks}
When $\mathcal{T}$ is a fully connected network, the graph $\mathcal{G}$ is a complete graph on $n$ vertices.  This is perhaps one of the most practically important cases to consider.  For example, in a wired computer network, clients can multicast their messages to all other terminals which are cooperatively exchanging data.  In wireless networks, broadcast is a natural transmission mode.  Indeed, there are  protocols tailored specifically to wireless networks which support reliable network-wide broadcast capabilities (cf. \cite{bib:HalfordChuggPIMRC2010, bib:HalfordChuggITA2010, bib:HalfordHwangMilcom2010, bib:BlairBrownChuggHalfordMILCOM2008}).  It is fortunate then, that the cooperative data exchange problem can be solved in polynomial time for fully connected networks:

\begin{theorem} \label{thm:URFullyConnectedPolyTime}
For a fully connected network $\mathcal{T}$, a transmission schedule requiring only $M^*(\mathcal{T})$ transmissions can be computed in polynomial time.  Necessary and sufficient conditions for universal recovery in this case are given by the cut-set constraints \eqref{eqn:cutsetBounds}.  
Moreover, a single round of communication is sufficient to achieve universal recovery with $M^*(\mathcal{T})$ transmissions.
\end{theorem}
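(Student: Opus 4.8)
The plan is to obtain all three assertions by specializing the general characterization of Theorem~\ref{thm:generalFeasibleRegion} to the complete graph. First I would observe that when $\mathcal{G}$ is complete we have $\Gamma(S)=V$ for every nonempty $S$, so in the definition of $\mathcal{R}_r(\mathcal{T})$ the index set $S_j^c\cap\Gamma(S_{j-1})$ collapses to $S_j^c$, and membership $(S_0,\dots,S_r)\in\mathcal{S}^{(r)}(\mathcal{G})$ reduces to requiring merely a nested chain $S_0\subseteq S_1\subseteq\cdots\subseteq S_r$ of nonempty proper subsets of $V$. Thus every defining inequality of $\mathcal{R}_r(\mathcal{T})$ takes the form $\sum_{j=1}^r\sum_{i\in S_j^c}b_i^{(r+1-j)}\geq \bigl|\bigcap_{i\in S_r}P_i^c\bigr|$.

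The crucial step for the necessary-and-sufficient claim is to show that among all these inequalities, the binding ones are exactly the cut-set bounds \eqref{eqn:cutsetBounds}, which here read $\sum_{j=1}^r\sum_{i\in S^c}b_i^j\geq\bigl|\bigcap_{i\in S}P_i^c\bigr|$ since $\partial(S)=S^c$ for a complete graph. To see this, I would exploit the nesting $S_r\supseteq S_j$, whence $S_r^c\subseteq S_j^c$ and $\sum_{i\in S_j^c}b_i^{(r+1-j)}\geq\sum_{i\in S_r^c}b_i^{(r+1-j)}$; summing over $j$ and reindexing $j\mapsto r+1-j$ shows that the left-hand side of the general inequality dominates $\sum_{i\in S_r^c}\sum_{j=1}^r b_i^j$, while its right-hand side depends only on $S_r$. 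Hence the constant chain $(S_r,\dots,S_r)$ gives the tightest constraint for each target set, so satisfying every cut-set bound already forces $\{b_i^j\}\in\mathcal{R}_r(\mathcal{T})$; the reverse inclusion is immediate because the cut-set bounds are themselves the inequalities indexed by constant chains. Combined with Theorem~\ref{thm:generalFeasibleRegion}, this establishes that the cut-set constraints are necessary and sufficient. The single-round claim then follows with essentially no extra work: the cut-set inequalities depend on the schedule only through the per-node totals $y_i:=\sum_{j=1}^r b_i^j$, so any feasible $r$-round schedule and the single-round schedule $b_i^1=y_i$ satisfy the same constraints and use the same total number of transmissions; thus $M^*(\mathcal{T})$ is attained with $r=1$.

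It remains to compute an optimal single-round schedule in polynomial time, i.e.\ to solve $\min\sum_i x_i$ subject to $\sum_{i\in S^c}x_i\geq f(S):=\bigl|\bigcap_{i\in S}P_i^c\bigr|$ over $x\in\mathbb{N}^n$ and all nonempty proper $S$. The approach I would take is to expose the underlying submodular structure: writing $f(S)=\sum_{p}\mathbf{1}[S\subseteq M_p]$, where $M_p$ denotes the set of nodes missing packet $p$, each indicator $S\mapsto\mathbf{1}[S\subseteq M_p]$ is supermodular, hence $f$ is supermodular. Since complementing the index set preserves supermodularity, the feasible region (after the change of variables $T=S^c$, giving constraints $x(T)\geq f(T^c)$) is the set of integer points of a contra-polymatroid. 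Minimizing the unit-weight linear objective over such a region is solvable by a greedy/submodular-optimization procedure, and integrality of $f$ guarantees an integral optimum; this is precisely where the paper's submodular algorithm enters, and the schedule it returns yields an explicit universal-recovery code through the construction underlying Theorem~\ref{thm:generalFeasibleRegion}.

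I expect this final step to be the main obstacle: one must set up the contra-polymatroid carefully, noting in particular that the constraint indexed by $S=\emptyset$ (which would force $\sum_i x_i\geq f(\emptyset)=k$) is \emph{absent}, so the objective is genuinely minimized rather than pinned to $k$ — consistent with Example~\ref{ex:B_N_clique}, where $M^*(\mathcal{T})=2<3=k$. One then invokes the correct polynomial-time submodular-optimization primitive together with its integrality guarantee. By contrast, the necessary-and-sufficient and single-round assertions are essentially direct specializations of Theorem~\ref{thm:generalFeasibleRegion}.
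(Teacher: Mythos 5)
Your proposal is correct and takes essentially the same route as the paper's proof: specializing Theorem \ref{thm:generalFeasibleRegion} to the complete graph, using the nesting $S_1\subseteq\cdots\subseteq S_r$ to show the constant chains (i.e., the cut-set bounds) dominate all other constraints, deducing the single-round claim from the fact that only the per-node totals $\sum_j b_i^j$ matter (your $S_r$-domination argument is the same manipulation as the paper's modified schedule $\tilde b_i^r=\sum_{j=1}^r b_i^j$, $\tilde b_i^j=0$ for $j<r$), and then solving the resulting ILP with $B_i\leftarrow P_i^c$, $w=\mathds{1}$ via the submodular algorithm of Appendix \ref{sec:Algo}. Your hedge on the final step is well placed: since the constraint for $S=\emptyset$ (equivalently $x(E)\geq k$) is absent, the feasible region is defined over the crossing family $\mathcal{F}=2^E\setminus\{\emptyset,E\}$ rather than being a true contra-polymatroid, which is precisely why the paper's algorithm fixes $x(E)=M$, runs iterated SFM with a feasibility check, and bisects over $M$ (using convexity of $p^*_w(M)$) instead of applying a plain greedy.
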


For the fully connected network in Example \ref{ex:B_N_clique}, we remarked that only one round of transmission was required.  Theorem \ref{thm:URFullyConnectedPolyTime} states that this trend extends to any fully connected network.

An algorithm for solving the cooperative data exchange problem for fully connected networks is presented in Appendix \ref{sec:Algo}.  We remark that the algorithm is sufficiently general that it can also solve the cooperative data exchange problem where the objective is to minimize the weighted sum of nodes' transmissions.

Although Theorem \ref{thm:URFullyConnectedPolyTime} applies to arbitrary sets of packets $P_1,\dots,P_n$, it is insightful to consider the case where packets are randomly distributed in the network.  In this case, the minimum number of transmissions required for universal recovery converges in probability to a simple function of the (random) sets $P_1,\dots,P_n$.
\begin{theorem} \label{thm:URcliqueRandom}
If $\mathcal{T}$ is a fully connected network and packets are randomly distributed, then
\begin{align*}
M^*(\mathcal{T}) = \left\lceil \frac{1}{n-1} \sum_{i=1}^n |P_i^c| \right\rceil.
\end{align*}
with probability approaching $1$ as the number of packets $k\rightarrow \infty$.
\end{theorem}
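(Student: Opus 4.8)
The plan is to combine Theorem~\ref{thm:URFullyConnectedPolyTime}, which reduces the problem to an integer program over the cut-set constraints, with a concentration argument for the randomly distributed packets. First I would specialize the cut-set bounds \eqref{eqn:cutsetBounds} to the complete graph: since $\Gamma(S)=V$ for every nonempty $S$, the boundary is $\partial(S)=V\setminus S=:S^c$, so Theorem~\ref{thm:URFullyConnectedPolyTime} identifies $M^*(\mathcal{T})$ with the minimum of $\sum_{i=1}^n b_i$ over $\{b_i\}\in\mathbb{N}^n$ subject to $\sum_{i\in S^c} b_i \geq \left|\bigcap_{i\in S}P_i^c\right|$ for every nonempty proper $S\subset V$. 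Everything then reduces to analyzing this integer program.

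For the lower bound, which holds deterministically, I would sum the $n$ singleton constraints obtained by taking $S=\{j\}$ for $j\in[n]$. Since $\bigcap_{i\in\{j\}}P_i^c=P_j^c$ and $\{j\}^c=V\setminus\{j\}$, summing yields
\[
\sum_{j=1}^n \sum_{i\in V\setminus\{j\}} b_i = (n-1)\sum_{i=1}^n b_i \ge \sum_{j=1}^n \left| P_j^c\right|,
\]
whence $M^*(\mathcal{T})\ge \left\lceil \frac{1}{n-1}\sum_i |P_i^c|\right\rceil$. The work lies in the matching upper bound, where I must exhibit a feasible schedule meeting this bound with probability tending to $1$. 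Writing $D=\sum_i|P_i^c|$ and $M=\lceil D/(n-1)\rceil$, I would take the schedule that makes the singleton cuts (nearly) tight, namely $b_i = M - |P_i^c| - \delta_i$, where the nonnegative integers $\delta_i$ sum to the ceiling slack $(n-1)M-D\in\{0,\dots,n-2\}$ and are spread so that each $b_i\ge 0$; this is possible with high probability because every $|P_i^c|$ is within $o(k)$ of a common mean while $M\ge D/(n-1)$. By construction $\sum_i b_i = M$, and the constraints for $|S|\in\{1,n\}$ hold, so only the constraints for $2\le |S|\le n-1$ remain to be checked.

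For these I would use the probability model \eqref{eqn:probModel}: for $|S|=s$ the quantity $\left|\bigcap_{i\in S}P_i^c\right|$ is a sum of $k$ i.i.d.\ indicators, hence Binomial$(k,\beta_s)$ with $\beta_s=\frac{(1-q)^s-(1-q)^n}{1-(1-q)^n}$, so a Chernoff bound together with a union bound over the (constantly many, in $n$) subsets $S$ places all of $|P_i^c|$ and $\left|\bigcap_{i\in S}P_i^c\right|$ within $\epsilon k$ of their means simultaneously with probability $\to 1$. For the candidate schedule, $\sum_{i\in S^c}b_i = (n-s)M-\sum_{i\in S^c}|P_i^c|-\sum_{i\in S^c}\delta_i$ concentrates around $\frac{n-s}{n-1}\beta_1 k$, while the right-hand side concentrates around $\beta_s k$. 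The decisive deterministic fact is $\frac{n-s}{n-1}\beta_1 > \beta_s$ for $2\le s\le n-1$, equivalently $\phi(s)>0$ where $\phi(s)=\frac{n-s}{n-1}\bigl[(1-q)-(1-q)^n\bigr]-\bigl[(1-q)^s-(1-q)^n\bigr]$: one checks $\phi(1)=\phi(n)=0$, and since $s\mapsto-(1-q)^s$ is strictly concave, $\phi$ is strictly concave, so $\phi(s)>0$ on $(1,n)$. Each intermediate constraint therefore carries a $\Theta(k)$ margin that dominates the $O(\epsilon k)$ fluctuations (and the $O(1)$ contribution of the $\delta_i$), so the schedule is feasible with high probability.

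The main obstacle is precisely this upper bound, i.e.\ showing that the singleton cuts alone determine the optimum and no constraint with $2\le|S|\le n-1$ is active. The strict concavity of $\phi$ supplies this at the level of expectations; the remaining burden is to promote the expected inequalities to a high-probability statement uniformly over all $2^n-2$ cuts while controlling the integrality/ceiling adjustment $\{\delta_i\}$. Combining the deterministic lower bound with this high-probability feasible schedule gives $M^*(\mathcal{T})=\left\lceil \frac{1}{n-1}\sum_i |P_i^c|\right\rceil$ with probability approaching $1$ as $k\to\infty$, as claimed.
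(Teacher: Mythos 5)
Your proof is correct, and it differs from the paper's in one substantive respect: how integrality is handled. Both arguments share the same skeleton --- reduce to the cut-set ILP via Theorem~\ref{thm:URFullyConnectedPolyTime}, get the lower bound by summing the $n$ singleton constraints, propose the candidate solution that makes the singleton cuts (nearly) tight, and verify the intermediate constraints $2\le|S|\le n-1$ by concentration plus the strict inequality $\frac{n-s}{n-1}\beta_1>\beta_s$; your concavity argument for $\phi(s)$ (vanishing at $s=1,n$, strictly concave in between) is exactly the paper's Lemma~\ref{lem:convexity}, which proves the same gap via Jensen's inequality applied to $(1-q)^x$. Where you diverge is at the end: the paper works with the \emph{fractional} solution $\tilde{b}_j=\frac{1}{n-1}\sum_i|P_i^c|-|P_j^c|$, shows it is LP-feasible with high probability so the LP optimum equals $\frac{1}{n-1}\sum_i|P_i^c|$, and then invokes Corollary~\ref{cor:ILP_LPgap} --- the ILP--LP gap being less than $1$, a deterministic structural fact extracted from the submodular algorithm of Appendix~\ref{sec:Algo} --- to conclude $M^*(\mathcal{T})=\lceil\cdot\rceil$. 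You instead round directly: set $M=\lceil D/(n-1)\rceil$, take the integral schedule $b_i=M-|P_i^c|-\delta_i$ with the $O(1)$ ceiling slack $(n-1)M-D\le n-2$ distributed among the $\delta_i$, and absorb both this slack and the nonnegativity requirement into the $\Theta(k)$ margins that the concentration argument already provides. This makes your proof self-contained --- no appeal to the duality-based machinery behind Corollary~\ref{cor:ILP_LPgap} --- at the cost of being specific to the random-packet regime, where every non-singleton constraint has linear-in-$k$ slack; the paper's route buys a distribution-free statement (the gap result holds for \emph{arbitrary} packet placements, and is reused to justify the exact algorithm of Theorem~\ref{thm:URFullyConnectedPolyTime}), which your rounding trick cannot replace in general since it relies on the randomness to keep the intermediate cuts inactive. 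The only points needing the care you already flag are that $M-|P_i^c|=\Theta(k)$ with high probability (so the $\delta_i$ can be placed with $b_i\ge0$), and that the union bound runs over a number of cuts depending on $n$ but not $k$; both are handled correctly.
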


\subsection{$d$-Regular Networks}
Given that  precise results can be obtained for fully connected networks, it is natural to ask whether these results can be extended to a larger class of networks which includes fully connected networks as a special case.  In this section, we partially answer this question in the affirmative.  To this end, we define  $d$-regular networks.

\begin{definition}[$d$-Regular Networks]
A network $\mathcal{T}$ is said to be $d$-regular if $\partial(i)=d$ for each $i\in V$ and $\partial(S)\geq d$ for each nonempty $S\subset V$ with $|S|\leq n-d$.  In other words, a network $\mathcal{T}$ is $d$-regular if the associated graph $\mathcal{G}$ is $d$-regular and $d$-vertex-connected.
\end{definition} 

Immediately, we see that the class of $d$-regular networks includes fully connected networks as a special case with $d=n-1$.  Further, the class of $d$-regular networks includes many frequently studied network topologies (e.g., cycles, grids on tori, etc.).

Unfortunately, the deterministic algorithm of Theorem \ref{thm:URFullyConnectedPolyTime} does not appear to extend to $d$-regular networks.  However, a slightly weaker concentration result similar to Theorem \ref{thm:URcliqueRandom} can be obtained when packets are randomly distributed.  Before stating this result, consider the following Linear Program (LP) with variable vector $x\in \mathbb{R}^n$ defined for a network $\mathcal{T}$:
\begin{align}
\mbox{minimize~~} &\sum_{i =1}^n x_i  \label{lp:dregObj} \\
\mbox{subject to:~~} & \sum_{i\in \partial(j)} x_i \geq \left| P_j^c \right| \mbox{~~for each $j\in V$}. \label{lp:dregConst}
\end{align} 
Let $M_{LP}(\mathcal{T})$ denote the optimal value of this LP.  Interpreting $x_i$ as $\sum_j b_i^j$, the constraints in the LP are a subset of the cut-set constraints of (\ref{eqn:cutsetBounds}) which are a subset of the necessary constraints for universal recovery given in Theorem \ref{thm:generalFeasibleRegion}.  Furthermore, the integer constraints on the $x_i$'s are relaxed. Thus $M_{LP}(\mathcal{T})$ certainly bounds $M^*(\mathcal{T})$ from below.  Surprisingly, if $\mathcal{T}$ is a $d$-regular network and the packets are randomly distributed, $M^*(\mathcal{T})$ is very close to this lower bound with high probability:
\begin{theorem}\label{thm:URdReg}
If $\mathcal{T}$ is a $d$-regular network and the packets are randomly distributed, then
\begin{align*}
M^*(\mathcal{T}) < M_{LP}(\mathcal{T})+n
\end{align*}
with probability approaching $1$ as the number of packets $k\rightarrow \infty$.
\end{theorem}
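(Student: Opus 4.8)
The plan is to exhibit, with high probability, a transmission schedule that achieves universal recovery and uses strictly fewer than $M_{LP}(\mathcal{T})+n$ transmissions; since any such schedule upper-bounds $M^*(\mathcal{T})$ via Theorem \ref{thm:generalFeasibleRegion}, this proves the claim. First I would take an optimal solution $x^*$ of the LP \eqref{lp:dregObj}--\eqref{lp:dregConst} and round it up coordinatewise, setting $b_i:=\lceil x_i^*\rceil$. Because $\lceil x_i^*\rceil < x_i^*+1$ for every $i$, the total budget satisfies $\sum_{i} b_i < \sum_i x_i^* + n = M_{LP}(\mathcal{T})+n$, so the ``$+n$'' in the statement is exactly the cost of integer rounding. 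It then remains to distribute these per-node budgets over $r$ rounds (for a suitable $r$, e.g.\ at least the diameter of $\mathcal{G}$) so that the resulting schedule $\{b_i^j\}$ lies in $\mathcal{R}_r(\mathcal{T})$, which by Theorem \ref{thm:generalFeasibleRegion} certifies universal recovery.

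To verify membership in $\mathcal{R}_r(\mathcal{T})$ I would check each constraint indexed by a nested sequence $(S_0,\dots,S_r)\in\mathcal{S}^{(r)}(\mathcal{G})$. The right-hand side is $\big|\bigcap_{i\in S_r}P_i^c\big|$, a sum of $k$ independent Bernoulli indicators, so writing $\rho_m:=\frac{(1-q)^m-(1-q)^n}{1-(1-q)^n}$ for the probability that a fixed packet is missing from a fixed set of $m$ nodes, it concentrates sharply around $k\rho_{|S_r|}$. Since $n$ is fixed while $k\to\infty$, there are only finitely many such constraints (at most $(2^n)^{r+1}$), so a union bound shows that with probability approaching $1$ every right-hand side is within $O(\sqrt{k\log k})$ of its mean and every deficit $|P_j^c|$ is within $O(\sqrt{k\log k})$ of $k\rho_1$. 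The sequences that stay at a single vertex, $(\{j\},\dots,\{j\})$, reproduce exactly the LP constraints $\sum_{i\in\partial(j)}b_i\ge |P_j^c|$, which the rounded solution satisfies by construction. For every other sequence one has $|S_r|\ge 2$, so its right-hand side concentrates around $k\rho_{|S_r|}\le k\rho_2$, smaller than the singleton value $k\rho_1$ by a factor bounded away from $1$; this $\Theta(k)$ gap dwarfs both the $O(\sqrt{k\log k})$ fluctuations and the $O(1)$ rounding slack. The remaining task is to lower-bound the inflow on the left-hand side: using $d$-regularity (so $|\partial(S)|\ge d$ for $|S|\le n-d$) together with the fact that the rounded budgets are $\Theta(k\rho_1/d)$ on each boundary node, the inflow across any such cut is at least $\approx k\rho_1$, comfortably exceeding $k\rho_{|S_r|}$.

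The hard part will be making this inflow bound uniform over all nested sequences, in two respects. First, the constraints of $\mathcal{R}_r(\mathcal{T})$ count transmissions round-by-round, so I must actually choose the split of each $b_i$ across the $r$ rounds and argue that the time-expanded inflow (not merely the aggregate cut capacity $\sum_{i\in\partial(S)}b_i$) meets the deficit; this multi-round structure is precisely what is absent from the fully connected case of Theorem \ref{thm:URFullyConnectedPolyTime}. Second, and most delicate, are the large sets: when $|S_r|=n-1$ the boundary can shrink to a single vertex $w$, and the constraint demands that $w$ alone transmit at least $\big|\bigcap_{i\ne w}P_i^c\big|\approx k\rho_{n-1}$ times, a quantity the LP does not directly constrain. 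To control this I would show that, with high probability, the deficits $|P_j^c|$ are near-equal and hence the rounded budgets are near-balanced with each $b_i=\Theta(k\rho_1/d)$; the claim then reduces to the deterministic inequality $\rho_1\ge d\,\rho_{n-1}$ (and its analogues for intermediate $|S_r|$), which one verifies holds for all admissible $q$ and all $d\le n-1$. Establishing this near-balance of the rounded LP optimum, and converting it into the uniform inflow guarantee above, is the crux of the argument.
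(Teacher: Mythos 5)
Your outline retraces the paper's proof almost step for step: round up an optimal LP solution so that the ``$+n$'' is exactly the ceiling cost, spread each budget evenly over $r$ rounds, verify the constraints of $\mathcal{R}_r(\mathcal{T})$ by WLLN plus a union bound over the finitely many nested sequences, and handle the large sets via deterministic gap inequalities --- your $\rho_1\geq d\,\rho_{n-1}$ and its intermediate analogues are precisely Lemma \ref{lem:convexity} (Jensen applied to $(1-q)^x$) combined with $d\leq n-1$, and they do hold strictly. The genuine gap sits exactly where you wrote ``the deficits $|P_j^c|$ are near-equal and \emph{hence} the rounded budgets are near-balanced.'' That inference fails: near-equal right-hand sides do not force an optimal solution to be near-balanced, because the optimal face of the LP can contain highly unbalanced points. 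On the $4$-cycle with all deficits equal to $c$, the constraints are $x_2+x_4\geq c$ and $x_1+x_3\geq c$, the optimal value is $2c$, and $x=(c,c,0,0)$ is optimal. So what you must prove is that \emph{some exactly optimal} solution is near-balanced --- and exact optimality matters here, since your entire error budget is the $+n$ from rounding: the obvious repair of symmetrizing a feasible point and padding to restore feasibility costs an extra $\Theta(\sqrt{k\log k})$ transmissions (the fluctuation of the deficits), which already overshoots $M_{LP}(\mathcal{T})+n$. The paper supplies this missing step as Lemma \ref{lem:dregRandomLP}: writing the constraints as $Ax\succeq \vec{P}$ with $A$ the adjacency matrix, the balanced vector $\frac{1}{d}\mathbb{E}[|P_1^c|]\mathds{1}$ is close to the least-squares point $A^{+}\vec{P}$ (because $\mathds{1}$ is a $d$-eigenvector), and a separate duality/perturbation result (Lemma \ref{lem:LP_LSapprox} in Appendix \ref{app:LPlemma}, which exploits the spectral gap of $A$ and a primal--dual complementary-slackness analysis) guarantees an optimal LP solution within $c_A\|A\bar{x}_{LS}-\vec{P}\|_2 = O(\epsilon k)$ of it in $\ell_\infty$. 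You correctly flagged this as the crux, but flagging it is not proving it, and the one-line justification you offer for it is invalid.

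A secondary inaccuracy: taking $r$ ``at least the diameter'' is not sufficient. With the even split $b_i^j\approx x_i^*/r$, the round-indexed constraints incur loss terms of order $nk/(rd)$ --- from the subtracted sum over $S_r\cap S_0^c$ and from the bounds $\sum_{j}|\partial(S_{j-1})|\geq rd-n$ and $\geq r(n-|S_r|)$ --- and these must be dominated by the $\Theta(k)$ gaps $\delta_q$ and $c_q$; accordingly the paper chooses $r\geq \max\{2d/(n\delta_q),\,2n(1+c_q)/(dc_q)\}$ in \eqref{eqn:defnR}, a quantity scaling with $1/\delta_q$ and $1/c_q$ rather than with any graph distance. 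Aside from these two points --- one fatal as written, one easily repaired --- your case analysis ($|S_r|=1$ reproducing the LP constraints, $2\leq|S_r|\leq n-d$ via $|\partial(S)|\geq d$, and $n-d<|S_r|\leq n-1$ via $|\partial(S_{j-1})|\geq n-|S_r|$) matches the published argument.
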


We make two important observations.  First, the length of the interval in which $M^*(\mathcal{T})$ is concentrated is independent of $k$.  Hence, even though the number of packets $k$ may be extremely large, $M^*(\mathcal{T})$ can be estimated accurately.  Second, as $k$ grows large, $M^*(\mathcal{T})$ is dominated by the \emph{local} topology of $\mathcal{T}$.  This is readily seen since the constraints defining $M_{LP}(\mathcal{T})$ correspond only to nodes' immediate neighborhoods.  The importance of the local neighborhood  was also seen in \cite{bib:RamamoorthyIT2005} where network coding capacity for certain random networks is shown to concentrate around the expected number of nearest neighbors of the source and the terminals.

\subsection{Large (Divisible) Packets}
We now return to general networks with arbitrarily distributed packets.  However, we now consider the case where packets are ``large" and can be divided into several smaller pieces (e.g., packets actually correspond to large files).  To formalize this, assume that each packet can be partitioned into $t$ chunks of equal size, and transmissions can consist of a single chunk (as opposed to an entire packet).  In this case, we say the packets are $t$-divisible.  To illustrate this point more clearly, we return to Example \ref{ex:B_N_clique}, this time considering $2$-divisible packets.

\begin{example}[$2$-Divisible Packets]\label{ex:B_t_N}
Let $\mathcal{T}$ be the network of Example \ref{ex:B_N_clique} and split each packet into two halves: $p_i \rightarrow (p_i^{(1)},p_i^{(2)})$.  Denote this new network $\mathcal{T}'$ with corresponding sets of packets:
\begin{align*}
{P}_i'=\{p_1^{(1)},p_1^{(2)}p_2^{(1)},p_2^{(2)},p_3^{(1)},p_3^{(2)}\} \backslash \{p_i^{(1)},p_i^{(2)}\}.
\end{align*}
Three chunk transmissions allow universal recovery as follows:  Node 1 transmits $p_2^{(2)} \oplus p_3^{(2)}$.  Node 2 transmits $p_1^{(1)} \oplus p_3^{(1)}$.  Node 3 transmits $p_1^{(2)} \oplus p_2^{(1)}$.  It is readily verified from (\ref{eqn:cutsetBounds}) that $3$ chunk-transmissions are required to permit universal recovery.  Thus, $M^*(\mathcal{T}')=3$.  Hence, if we were allowed to split the packets of Example \ref{ex:B_N_clique} into two halves, it would suffice to transmit 3 chunks.  Normalizing the number of transmissions by the number of chunks per packet, we say that universal recovery can be achieved with $1.5$ packet transmissions.
\end{example}

Motivated by this example, define $M^*_t(\mathcal{T})$ to be the minimum number of (normalized) packet-transmissions required to achieve universal recovery in the network $\mathcal{T}$ when packets are $t$-divisible.  For the network $\mathcal{T}$ in Example \ref{ex:B_N_clique}, we saw above that $M^*_2(\mathcal{T})=1.5$.

It turns out, if packets are $t$-divisible and $t$ is large, the cut-set bounds \eqref{eqn:cutsetBounds} are ``nearly sufficient" for achieving universal recovery.  To see this, let $M_{\mbox{cut-set}}(\mathcal{T})$ be the optimal value of the LP:
\begin{align}
\mbox{minimize~~} &\sum_{i =1}^n x_i \label{lp:dvisObj}\\
\mbox{subject to:~~} & \sum_{i\in \partial(S)} x_i \geq \left| \bigcap_{i\in S} P_i^c \right| \mbox{~~for each nonempty $S \subset V$}.\label{lp:divisConst}
\end{align} 
Clearly $M_{\mbox{cut-set}}(\mathcal{T})\leq M^*_t(\mathcal{T})$ for any network $\mathcal{T}$ with $t$-divisible packets because the LP producing $M_{\mbox{cut-set}}(\mathcal{T})$ relaxes the integer constraints and is constrained only by \eqref{eqn:cutsetBounds} rather than the full set of constraints given in Theorem \ref{thm:generalFeasibleRegion}.  However, there exist transmission schedules which can approach this lower bound.  Stated more precisely:
\begin{theorem}\label{thm:URtsplit}
For any network $\mathcal{T}$, the minimum number of (normalized) packet-transmissions required to achieve universal recovery with $t$-divisible packets satisfies
\begin{align*}
\lim_{t\rightarrow \infty} M^*_t(\mathcal{T}) =M_{\mbox{cut-set}}(\mathcal{T}).
\end{align*}
\end{theorem}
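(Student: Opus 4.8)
The plan is to establish the equality $\lim_{t\to\infty} M_t^*(\mathcal{T}) = M_{\text{cut-set}}(\mathcal{T})$ by showing the two inequalities separately. The inequality $M_t^*(\mathcal{T}) \geq M_{\text{cut-set}}(\mathcal{T})$ holds for every $t$ and is already noted in the excerpt, since the LP defining $M_{\text{cut-set}}(\mathcal{T})$ relaxes both the integrality constraints and the full constraint set of Theorem~\ref{thm:generalFeasibleRegion} down to the cut-set constraints \eqref{eqn:cutsetBounds}. Because this holds uniformly in $t$, it survives the limit, so the real work is the matching upper bound $\lim_{t\to\infty} M_t^*(\mathcal{T}) \leq M_{\text{cut-set}}(\mathcal{T})$.

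For the upper bound, first I would solve the LP \eqref{lp:dvisObj}--\eqref{lp:divisConst} to obtain an optimal (in general fractional) vector $x^* \in \mathbb{R}^n$ with $\sum_i x_i^* = M_{\text{cut-set}}(\mathcal{T})$. The key observation is that the LP has rational data (the right-hand sides $|\bigcap_{i\in S} P_i^c|$ are integers, and the constraint matrix is $0$-$1$), so there is an optimal basic feasible solution $x^*$ with rational coordinates. Choosing $t$ to be a common multiple of the denominators of the $x_i^*$, the scaled quantities $t\, x_i^*$ become nonnegative integers. The idea is then to interpret $b_i := t\, x_i^*$ as the number of chunk-transmissions node $i$ should make: with $t$-divisible packets, each packet splits into $t$ chunks, so the effective problem is a cooperative data exchange instance on $tk$ chunk-level ``packets,'' and the single-round schedule in which node $i$ transmits $b_i$ chunks needs to be shown to lie in the feasibility region $\mathcal{R}_r(\mathcal{T})$ of Theorem~\ref{thm:generalFeasibleRegion}.

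The main obstacle is verifying that this chunk-level schedule actually satisfies \emph{all} the constraints in Theorem~\ref{thm:generalFeasibleRegion} (indexed by nested sequences $(S_0,\dots,S_r)\in\mathcal{S}^{(r)}(\mathcal{G})$), not merely the cut-set constraints \eqref{eqn:cutsetBounds} that $x^*$ was built to satisfy. The cut-set constraints are precisely the subset of constraints arising from constant sequences $(S,S,\dots,S)$, and a fractional solution to those need not automatically satisfy the constraints coming from strictly growing sequences. I expect to resolve this by allowing enough rounds $r$ (growing with the diameter of $\mathcal{G}$) so that the coding has time to propagate chunks across cuts, and by appealing to the achievability half of Theorem~\ref{thm:generalFeasibleRegion}: once the per-node transmission budgets meet the cut-set demands with sufficient slack per chunk, one distributes these transmissions across rounds respecting causality so that the more restrictive nested-sequence constraints are met. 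A clean way to organize this is to show that for the blown-up chunk network, routing each individual chunk along a flow/path decomposition certified by the cut-set LP (via a max-flow/min-cut or Menger-type argument exploiting connectivity of $\mathcal{G}$) realizes exactly $\sum_i b_i = t\, M_{\text{cut-set}}(\mathcal{T})$ chunk-transmissions. Normalizing by $t$ then yields $M_t^*(\mathcal{T}) \leq M_{\text{cut-set}}(\mathcal{T})$ whenever $t$ is a multiple of the LP denominators; handling general $t$ along the sequence of such multiples, together with the uniform lower bound, forces the limit to equal $M_{\text{cut-set}}(\mathcal{T})$.
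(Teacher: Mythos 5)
Your lower bound and your diagnosis of the obstacle are both correct, but the step you propose to close the gap---that for $t$ a common multiple of the denominators of a rational LP optimum one can realize \emph{exactly} $t\,M_{\mbox{cut-set}}(\mathcal{T})$ chunk transmissions, so that $M^*_t(\mathcal{T})\leq M_{\mbox{cut-set}}(\mathcal{T})$ for such $t$---is false, and no choice of the number of rounds $r$ can repair it. Consider the paper's own $4$-node line network with $P_1=\{p_1\}$, $P_2=P_3=\emptyset$, $P_4=\{p_2\}$. The LP \eqref{lp:dvisObj}--\eqref{lp:divisConst} has the unique optimum $x^*=(1,1,1,1)$ with value $4$ (the cuts $S=\{1\},\{4\},\{1,2,3\},\{2,3,4\}$ force $x_i\geq 1$ for all $i$), and a counting argument with the cuts $S=\{2\},\{3\},\{2,3\}$ shows that any chunk schedule totalling $4t$ chunks must allot exactly $t$ chunks to each node. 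These exactly tight budgets create a causal deadlock: every chunk node $2$ sends must be innovative to node $1$ (which knows $p_1$), and every chunk node $3$ sends must be innovative to node $4$ (which knows $p_2$); yet, by induction on time, until node $3$ has transmitted, node $2$'s entire view is a function of $p_1$ alone, and until node $2$ has transmitted, node $3$'s view is a function of $p_2$ alone. Whichever of nodes $2,3$ transmits first wastes a chunk (its first chunk has zero conditional entropy given the intended receiver's side information, so the remaining $t-1$ chunks can carry at most $(t-1)/t<1$ packet of the needed information), and if neither ever transmits, nodes $1$ and $4$ learn nothing. Hence $M^*_t(\mathcal{T})>4=M_{\mbox{cut-set}}(\mathcal{T})$ for \emph{every} finite $t$: the limit in the theorem is an infimum that is in general not attained, which your exactness claim would contradict. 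A secondary soft spot is the appeal to ``routing each individual chunk along a flow/path decomposition'': uncoded forwarding cannot in general meet the cut-set bound (the XORs in Examples \ref{ex:B_N_line} and \ref{ex:B_t_N} are already essential), and once coding is allowed the certificate you need is precisely full membership in $\mathcal{R}_r(\mathcal{T})$, which is the point at issue.

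The paper's proof supplies exactly the missing ingredient: slack rather than exactness. It sets $b_i=x_i^*+\epsilon$, chooses $r\geq \epsilon^{-1}n\max_i b_i$, and spreads transmissions nearly uniformly, $\frac{1}{r}b_i\leq b_i^j\leq \frac{1}{r}b_i+\frac{1}{t}$, which is possible for every $t$ (no rationality of $x^*$ is needed; integrality turns out to be the inessential difficulty, causality the essential one). Each nested-sequence constraint of Theorem \ref{thm:generalFeasibleRegion} is then verified directly, using the identity $\bigcup_{j=1}^{r}\bigl(S_j^c\cap\Gamma(S_{j-1})\bigr)=\bigl(\bigcup_{j=1}^{r}\partial(S_{j-1})\bigr)\setminus\bigl(S_0^c\cap S_r\bigr)$ together with the monotonicity $\bigl|\bigcap_{i\in S_{j-1}}P_i^c\bigr|\geq\bigl|\bigcap_{i\in S_r}P_i^c\bigr|$: the accumulated slack $\frac{\epsilon}{r}\sum_{j}|\partial(S_{j-1})|\geq\epsilon$ (connectivity gives $|\partial(S)|\geq 1$) absorbs both the rounding loss and the boundary deficit $\frac{1}{r}\sum_{i\in S_0^c\cap S_r}b_i\leq\frac{n}{r}\max_i b_i$. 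The resulting cost, $\sum_{i,j}b_i^j\leq\sum_i x_i^*+n\left(\frac{r}{t}+\epsilon\right)$, is strictly above the cut-set value for every finite $t$ and converges to it only as $t\to\infty$ (then $\epsilon\downarrow 0$)---consistent with the counterexample above. Finally, even if your bound had held at the special values of $t$, you would still owe an argument for arbitrary $t$, since $M^*_t$ is monotone only along divisibility chains ($t\mid t'$), whereas the paper's rounding handles all $t$ uniformly.
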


Precisely how large $t$ is required to be in order to approach $M_{\mbox{cut-set}}(\mathcal{T})$ within a specified tolerance is not clear for general networks.  However, an immediate consequence of Theorem \ref{thm:URcliqueRandom} is that $t=n-1$ is sufficient to achieve this lower bound with high probability when packets are randomly distributed in a fully connected network.

Finally, we remark that it is a simple exercise to construct examples where the cut-set bounds alone are not sufficient to characterize transmission schedules permitting universal recovery when packets are not divisible (e.g., a 4-node line network with packets $p_1$ and $p_2$ at the left-most and right-most nodes, respectively).  Thus, $t$-divisibility of packets provides the additional degrees of freedom necessary to approach the cut-set bounds more closely.  

\subsection{Remarks}
One interesting consequence of our results is that splitting packets does not significantly reduce the required number of packet-transmissions for many scenarios.  Indeed, at most one transmission can be saved if the network is fully connected (under any distribution of packets).  If the network is $d$-regular, we can expect to save fewer than $n$ transmissions if packets are randomly distributed (in fact, at most one transmission per node).  It seems possible that this result could be strengthened to include arbitrary distributions of packets in $d$-regular networks (as opposed to randomly distributed packets), but a proof  has not been found.

The limited value of dividing packets has practical ramifications since there is usually some additional communication overhead associated with dividing packets (e.g. additional headers, etc. for each transmitted chunk are required).  Thus, if the packets are very large, say each packet is a video file, our results imply that entire coded packets can be transmitted without significant loss, avoiding any additional overhead incurred by dividing packets. 

\section{An Application: Secrecy Generation} \label{sec:URSecrecy}
In this section, we  consider the setup of the cooperative data exchange problem for a fully connected network $\mathcal{T}$, but we consider a different goal.  In particular, we wish to generate a secret-key among the nodes that cannot be derived by an eavesdropper privy to all of the transmissions among nodes.  Also, like the nodes themselves, the eavesdropper is assumed to know the indices of the packets initially available to each node.  The goal is to generate the maximum amount of ``secrecy" that cannot be determined by the eavesdropper.  

The theory behind secrecy generation among multiple terminals was originally established in \cite{bib:CsiszarNarayanIT2004} for a very general class of problems. Our results should be interpreted as a practical application of the theory originally developed in \cite{bib:CsiszarNarayanIT2004}.  Indeed, our results and proofs are special cases of those in \cite{bib:CsiszarNarayanIT2004} which have been  streamlined to deal with the scenario under consideration.  The aim of the present section is to show how secrecy can be generated in a \emph{practical} scenario.  In particular, we show that it is possible to efficiently generate the  maximum amount of secrecy (as established in \cite{bib:CsiszarNarayanIT2004} ) among nodes in a fully connected network $\mathcal{T}=\{\mathcal{G},P_1,\dots,P_n\}$.  Moreover, we show that this is possible in the non-asymptotic regime (i.e., there are no $\epsilon$'s and we don't require the number of packets or nodes to grow arbitrarily large).  Finally, we note that it is possible to generate perfect secrecy instead of $\epsilon$-secrecy without any sacrifice.

\subsection{Practical Secrecy Results}
In this subsection, we state two results on secrecy generation.  Proofs are again postponed until Section \ref{sec:URProofs}. We begin with some definitions\footnote{We attempt to follow the notation of \cite{bib:CsiszarNarayanIT2004} where appropriate.}.  Let $\mathbf{F}$ denote the set of all transmissions (all of which are available to the eavesdropper by definition).  A function $K$ of the packets $\{p_1,\dots,p_k\}$ in the network is called a secret key (SK) if $K$ is recoverable by all nodes after observing $\mathbf{F}$, and it satisfies the (perfect) secrecy condition
\begin{align}
I(K;\mathbf{F})=0, \label{eqn:secrecy}
\end{align}
and the uniformity condition
\begin{align}
\Pr\left( K=key\right) = \frac{1}{|\mathcal{K}|} \mbox{~for all~}key\in\mathcal{K},
\end{align}
where $\mathcal{K}$ is the alphabet of possible keys.  

We define $C_{SK}(P_1,\dots,P_n)$ to be the secret-key capacity for a particular distribution of packets.  We will drop the notational dependence on $P_1, \dots,P_n$ where it doesn't cause confusion. By this we mean that a secret-key $K$ can be generated if and only if $\mathcal{K}=\mathbb{F}^{C_{SK}}$.  In other words, the nodes can generate at most $C_{SK}$ packets worth of secret-key.
Our first result of this section is the following:

\begin{theorem}\label{thm:SK}
The secret-key capacity is given by: $C_{SK}(P_1,\dots,P_n)=k-M^*(\mathcal{T})$.
\end{theorem}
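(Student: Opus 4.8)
The plan is to establish the two matching bounds $C_{SK}(P_1,\dots,P_n) \ge k - M^*(\mathcal{T})$ and $C_{SK}(P_1,\dots,P_n) \le k - M^*(\mathcal{T})$. This is the specialization to our setting of the source-model secret-key capacity formula $C_{SK} = H(\text{all sources}) - R_{CO}$ of \cite{bib:CsiszarNarayanIT2004}, in which $R_{CO}$ is the minimum rate of public communication required for \emph{omniscience}. In our problem omniscience is exactly universal recovery, the total source entropy is $H(p_1,\dots,p_k)=k$ packets (since the packets are independent and uniform on $\mathbb{F}$), and the minimum communication for omniscience is $M^*(\mathcal{T})$ by Definition \ref{def:URminNumberM}. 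The real content is to carry this out in the one-shot, zero-error regime (with perfect rather than $\epsilon$-secrecy), which is where the linear structure of the problem does the work.

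For achievability I would give an explicit one-shot construction. View the packets as a vector $p=(p_1,\dots,p_k)\in\mathbb{F}^k$, so that each transmission is a linear functional of $p$ supported on the transmitting node's packets. By Theorem \ref{thm:URFullyConnectedPolyTime} there is an optimal single-round scheme achieving universal recovery with exactly $M^*(\mathcal{T})$ transmissions; let $f_1,\dots,f_{M^*}$ be the corresponding functionals and $\mathbf{F}=(f_1(p),\dots,f_{M^*}(p))$. First I would argue that these functionals are linearly independent: a node recovers all of $p$ if and only if its own coordinate functionals together with the received functionals span the dual of $\mathbb{F}^k$, so a functional lying in the span of the others could be deleted without affecting any node's decoding, contradicting the minimality of $M^*(\mathcal{T})$. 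Extending $\{f_i\}$ to a basis of the dual space by $k-M^*$ further functionals $g_1,\dots,g_{k-M^*}$ and setting $K=(g_1(p),\dots,g_{k-M^*}(p))$, the map $p\mapsto(\mathbf{F},K)$ is an invertible linear map, so $(\mathbf{F},K)$ is uniform on $\mathbb{F}^k$. Hence $K$ is uniform on $\mathbb{F}^{\,k-M^*}$ and independent of $\mathbf{F}$, which gives \eqref{eqn:secrecy} and the uniformity condition, while $K$ is recoverable at every node because universal recovery leaves each node holding all of $p$. This yields $C_{SK}\ge k-M^*(\mathcal{T})$ with perfect secrecy and no asymptotics.

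For the converse I would follow and streamline the argument of \cite{bib:CsiszarNarayanIT2004}. Write $X_S=\bigcup_{i\in S}P_i$ for the packets held collectively by a set $S\subseteq V$, and note that in packet units $H(X_S)=k-\left|\bigcap_{i\in S}P_i^c\right|$. Since $K$ is a function of $p$ and, by recoverability, a function of $(X_S,\mathbf{F})$ for every nonempty $S$, perfect secrecy yields, for each such $S$,
\[ H(K)=H(K\mid\mathbf{F})=I(K;X_S\mid\mathbf{F})\le H(X_S\mid\mathbf{F}). \]
Taking $S=V$ gives $H(K)\le H(p\mid\mathbf{F})=k-I(p;\mathbf{F})$ (randomized transmissions only decrease $I(p;\mathbf{F})$ and so cannot help), so the claim reduces to showing that recoverability forces the residual common randomness after the discussion to be at most $k-M^*(\mathcal{T})$, i.e.\ that an optimal key-generating scheme must effectively leak $M^*(\mathcal{T})$ packets of randomness.

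The main obstacle is precisely this converse. The single inequality above is \emph{not} by itself tight: a protocol that communicates little leaves $H(p\mid\mathbf{F})$ large while producing only a small key, so one cannot simply lower-bound $I(p;\mathbf{F})$ by $M^*(\mathcal{T})$ for an arbitrary scheme. Overcoming this requires combining the family of cut inequalities $H(K)\le H(X_S\mid\mathbf{F})$ across the sets $S$ in the manner dictated by the cut-set characterization of $M^*(\mathcal{T})$ in Theorem \ref{thm:URFullyConnectedPolyTime}. Concretely, I would pair the secrecy and recoverability constraints with an optimal dual (fractional cut-packing) certificate for the integer program that defines $M^*(\mathcal{T})$ through the cut-set bounds \eqref{eqn:cutsetBounds}, using the identity $\left|\bigcap_{i\in S}P_i^c\right|=k-H(X_S)$, so that the weighted combination of the cut inequalities telescopes to $H(K)\le k-M^*(\mathcal{T})$. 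This mirrors the omniscience-based converse of \cite{bib:CsiszarNarayanIT2004}, specialized to the fact that in a fully connected network the minimum communication for omniscience equals the cut-set optimum $M^*(\mathcal{T})$; it is the one step where the combinatorial structure of $M^*(\mathcal{T})$ must be invoked rather than a single entropy manipulation.
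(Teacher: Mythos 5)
Your achievability argument is sound and is essentially the paper's: the paper generates $\mathbf{F}=\mathcal{L}P$ from an optimal single-round scheme (Theorem \ref{thm:URFullyConnectedPolyTime}) and takes $K$ to be a linear labeling of the fiber $\mathcal{P}_{\mathbf{F}}=\{P':\mathcal{L}P'=\mathbf{F}\}$, which is exactly your choice of $k-M^*(\mathcal{T})$ functionals completing $f_1,\dots,f_{M^*}$ to a basis. Your linear-independence-by-minimality step even makes explicit a point the paper asserts without proof (that $\mathbf{F}$ has dimension exactly $M^*(\mathcal{T})$).

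The converse, however, has a genuine gap: the inequality family $H(K)\le H(X_S\mid\mathbf{F})$, for nonempty $S\subseteq V$, is provably too weak to yield $H(K)\le k-M^*(\mathcal{T})$, no matter how the inequalities are weighted against a dual certificate of the cut-set LP. Take $n=2$, $P_1=\{p_1\}$, $P_2=\{p_2\}$, $k=2$, so $M^*(\mathcal{T})=2$ and $C_{SK}=0$, and consider a protocol with empty (or constant) $\mathbf{F}$. Your three inequalities then read $H(K)\le 1$, $H(K)\le 1$, $H(K)\le 2$, which are all consistent with $H(K)=1$, yet the theorem requires $H(K)=0$ for this protocol (a common function of two independent uniform packets is constant). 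Since the inequalities are true statements about this protocol but do not imply the bound, no nonnegative combination of them can telescope to $H(K)\le k-M^*(\mathcal{T})$ (nor even to $H(K)\le k-M_{LP}(\mathcal{T})$; and note that a dual certificate would at best give the LP value, so you would additionally need integrality of $|\mathcal{K}|$ together with Corollary \ref{cor:ILP_LPgap} to round up to $M^*(\mathcal{T})$).

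The missing idea, which is the content of the paper's Lemma \ref{lem:SKPK} (following Csisz\'ar--Narayan), is a \emph{primal} rather than dual argument: decompose by the chain rule
\begin{align*}
H(P)=\sum_{i=1}^n H(F_i\mid F_{[1,i-1]})+H(K\mid\mathbf{F})+\sum_{i=1}^n H(P_i\mid\mathbf{F},K,P_{[1,i-1]}),
\end{align*}
set $x_i=H(F_i\mid F_{[1,i-1]})+H(P_i\mid\mathbf{F},K,P_{[1,i-1]})$, and verify that $x$ is feasible for the omniscience ILP by expanding $H(P_S\mid P_{S^c})=H(\mathbf{F},K,P_S\mid P_{S^c})$ and using that $K$ is a function of $(\mathbf{F},P_{S^c})$, that $F_i$ is a function of $P_i$, and that conditioning reduces entropy; this gives $\sum_{i\in S}x_i\ge \left|\bigcap_{i\in S^c}P_i^c\right|$, whence $H(K)=H(K\mid\mathbf{F})=k-\sum_i x_i\le k-M^*(\mathcal{T})$. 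The residual terms $H(P_i\mid\mathbf{F},K,P_{[1,i-1]})$ are precisely what account for protocols that do not achieve omniscience --- in the two-node example above they absorb the two packets of uncertainty that your cut inequalities cannot see. Your reduction statement (``recoverability forces the residual common randomness to be at most $k-M^*$'') is a restatement of the theorem, and the proposed cut-packing combination does not close it.
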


Next, consider the related problem where a subset $D\subset V$ of nodes is compromised.  In this problem, the eavesdropper has access to $\mathbf{F}$ and $P_i$ for $i\in D$.  In this case, the secret-key should also be kept hidden from the nodes in $D$ (or else the eavesdropper could also recover it).  Thus, for a subset of nodes $D$, let $P_D=\bigcup_{i\in D} P_i$, and call $K$ a private-key (PK) if it is a secret-key which is only recoverable by the nodes in $V\backslash D$, and also satisfies the stronger secrecy condition:
\begin{align}
I(K;\mathbf{F},P_D)=0. \label{eqn:secrecyPK}
\end{align}
Similar to above, define $C_{PK}(P_1,\dots,P_n,D)$ to be the private-key capacity for a particular distribution of packets and subset of nodes $D$.  Again, we mean that a private-key $K$ can be generated if and only if $\mathcal{K}=\mathbb{F}^{C_{PK}}$.  In other words, the nodes in $V\backslash D$ can generate at most $C_{PK}$ packets worth of private-key.
Note that, since $P_D$ is known to the eavesdropper, each node $i\in D$ can transmit its respective set of packets $P_i$ without any loss of secrecy capacity.

Define a new network $\mathcal{T}_D=\{\mathcal{G}_D,\{P^{(D)}_i \}_{i\in V\backslash D}\}$ as follows.  Let $\mathcal{G}_D$ be the complete graph on $V\backslash D$, and let 
$P^{(D)}_i = P_i \backslash P_D$ for each $i\in V \backslash D$.  Thus, $\mathcal{T}_D$ is a fully connected network with $n-|D|$ nodes and $k-|P_D|$ packets.
Our second result of this section is the following:

\begin{theorem}\label{thm:PK}
The private-key capacity is given by: 
\begin{align*}
C_{PK}(P_1,\dots,P_n,D)=(k-|P_D|)-M^*(\mathcal{T}_D).
\end{align*}
\end{theorem}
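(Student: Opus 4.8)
The plan is to show that the private-key problem on $\mathcal{T}$ with compromised set $D$ is \emph{equivalent} to the ordinary secret-key problem on the reduced network $\mathcal{T}_D$, and then to invoke Theorem \ref{thm:SK}. Since $\mathcal{T}_D$ is a fully connected network on $n-|D|$ nodes carrying $k-|P_D|$ packets, Theorem \ref{thm:SK} applied to $\mathcal{T}_D$ yields $C_{SK}(\mathcal{T}_D)=(k-|P_D|)-M^*(\mathcal{T}_D)$, which is exactly the claimed value. The entire work therefore reduces to establishing $C_{PK}(P_1,\dots,P_n,D)=C_{SK}(\mathcal{T}_D)$. The guiding intuition is that, because $P_D$ is known to the eavesdropper, the packets in $P_D$ carry no usable secrecy and may be treated as common information available to everyone. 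Consequently the nodes in $D$ become \emph{inert}: their packets are public, and any message they transmit is a deterministic function of $(\mathbf{F},P_D)$, hence already available to the eavesdropper. Deleting these nodes and stripping $P_D$ from the remaining packet sets produces precisely $\mathcal{T}_D$.

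\emph{Achievability.} I would first let every node in $D$ broadcast all of $P_D$; since the eavesdropper already knows $P_D$, this costs no secrecy. I then run an optimal secret-key-generating scheme for $\mathcal{T}_D$ among the nodes $V\backslash D$; this is feasible because the complete graph on $V\backslash D$ is a subgraph of the complete graph on $V$ and node $i$ indeed holds $P_i^{(D)}=P_i\backslash P_D$. This produces a key $K$ of rate $(k-|P_D|)-M^*(\mathcal{T}_D)$, recoverable by all of $V\backslash D$ and, by Theorem \ref{thm:SK} applied to $\mathcal{T}_D$, satisfying $I(K;\mathbf{F}')=0$, where $\mathbf{F}'$ denotes the transmissions of the $\mathcal{T}_D$ scheme. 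To verify the private-key conditions, observe that $K$ is a function of the packets outside $P_D$, which are independent of $P_D$. Writing the eavesdropper's full observation as $(\mathbf{F},P_D)$ with $\mathbf{F}$ equivalent to $(P_D,\mathbf{F}')$, the decomposition
\begin{equation*}
I(K;\mathbf{F},P_D)=I(K;P_D)+I(K;\mathbf{F}'\mid P_D)=0+0
\end{equation*}
follows from independence of $K$ and $P_D$ together with the secrecy of the $\mathcal{T}_D$ scheme, which is unaffected by conditioning on the independent $P_D$. Uniformity of $K$ transfers directly, so $K$ is a valid private key and $C_{PK}\geq C_{SK}(\mathcal{T}_D)$.

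\emph{Converse and main obstacle.} For the upper bound I would argue that any private key for $\mathcal{T}$ is in particular a secret key for the modified scenario in which $P_D$ is declared common knowledge to all nodes. Making $P_D$ public cannot raise the achievable rate, and since the private-key secrecy condition $I(K;\mathbf{F},P_D)=0$ already forces $K$ to be independent of $P_D$, it cannot lower it either; moreover the compromised nodes transmit only functions of $(\mathbf{F},P_D)$ and so may be removed without loss. The residual problem is exactly secret-key generation on $\mathcal{T}_D$, giving $C_{PK}\leq C_{SK}(\mathcal{T}_D)$; combining the two bounds with Theorem \ref{thm:SK} completes the proof. The hard part will be making this converse fully rigorous: one must show precisely that conditioning on $(\mathbf{F},P_D)$ in the original network coincides with the secret-key secrecy constraint on $\mathcal{T}_D$, and that no legitimate scheme can exploit the deleted nodes or the public packets $P_D$ to beat $C_{SK}(\mathcal{T}_D)$. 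This bookkeeping is exactly where the general single-letter converse of \cite{bib:CsiszarNarayanIT2004}, specialized to the present packet model, provides the cleanest justification.
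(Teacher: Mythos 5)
Your achievability argument is correct and is essentially the paper's: nodes in $D$ broadcast $P_D$ (costless, since the eavesdropper already holds it), after which the nodes in $V\backslash D$ run the cooperative-data-exchange-based SK scheme of Theorem \ref{thm:SK} on $\mathcal{T}_D$. Your verification $I(K;\mathbf{F},P_D)=I(K;P_D)+I(K;\mathbf{F}'\mid P_D)=0$, valid because $K$ and $\mathbf{F}'$ are functions of $\{p_1,\dots,p_k\}\backslash P_D$, which is independent of $P_D$, is in fact spelled out more explicitly than in the paper, which merely states that the proof ``proceeds identically to the SK case.''

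The converse, however, has a genuine gap as written. Everything hinges on your claim that once $P_D$ is declared public, no scheme can exceed $C_{SK}(\mathcal{T}_D)$ --- that is, that common randomness known to the eavesdropper is useless --- and you justify this only by deferring to the general converse of \cite{bib:CsiszarNarayanIT2004}. That deferral does not close the argument: the results of \cite{bib:CsiszarNarayanIT2004} are asymptotic (long i.i.d.\ source sequences, $\epsilon$-secrecy), whereas the statement to be proved is one-shot with perfect secrecy, and the entire point of Section \ref{sec:URProofs} is to re-derive the converse in this finite regime rather than cite it. (Your phrasing ``making $P_D$ public cannot raise the achievable rate'' also has the logic inverted: giving legitimate nodes extra information yields $C_{PK}\leq C_{\mathrm{mod}}$ for free; the nontrivial direction is $C_{\mathrm{mod}}\leq C_{SK}(\mathcal{T}_D)$, which is exactly the unproved step.) The paper avoids this by a direct entropy count, the second part of Lemma \ref{lem:SKPK}: assuming WLOG that each $i\in D$ transmits $P_i$ (harmless, since the secrecy condition \eqref{eqn:secrecyPK} already gives the eavesdropper $P_D$), it expands $H(P|P_D)$ by the chain rule over $(\mathbf{F},K,P)$, sets $x_i=H(F_i|F_{[1,i-1]})+H(P_i|\mathbf{F},K,P_{[1,i-1]})$, and checks that $x$ is feasible for ILP \eqref{eqn:ILPforSecrecyProofD}, whose optimal value is $M^*(\mathcal{T}_D)$; this yields $H(K)=H(K|\mathbf{F})\leq H(P|P_D)-M^*(\mathcal{T}_D)$ directly. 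If you wish to keep your reduction route instead, it can be repaired without \cite{bib:CsiszarNarayanIT2004} by derandomizing over $P_D$: from $I(K;\mathbf{F},P_D)=0$ one gets $I(K;P_D)=0$ and $I(K;\mathbf{F}\mid P_D=w)=0$ for every realization $w$, and $K$ remains uniform given $P_D=w$; since, conditioned on $P_D=w$, the transmissions of nodes in $D$ are deterministic functions of $w$ and the past and can be stripped out, each conditioned scheme is a valid perfect-SK scheme on $\mathcal{T}_D$, whence $\log_{|\mathbb{F}|}|\mathcal{K}|\leq C_{SK}(\mathcal{T}_D)$. As submitted, though, that decisive step is asserted rather than proved.
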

The basic idea for private-key generation is that the users in $V\backslash D$ should  generate a secret-key from $\{p_1,\dots,p_k\} \backslash P_D$.

By the definitions of the SK and PK capacities, Theorem \ref{thm:URFullyConnectedPolyTime} implies that it is possible to compute these capacities efficiently.  Moreover, as we will see in the achievability proofs, these capacities can be achieved by performing coded cooperative data exchange amongst the nodes.  Thus, the algorithm developed in Appendix \ref{sec:Algo} combined with the algorithm in \cite{bib:JaggiIT2005} can be employed to efficiently solve the secrecy generation problem we consider. 

We conclude this subsection with an example to illustrate the results.

\begin{example}
Consider again the network of Example \ref{ex:B_N_clique} and assume $\mathbb{F}=\{0,1\}$ (i.e., each packet is a single bit).  The secret-key capacity for this network is $1$ bit.  After performing universal recovery, the eavesdropper knows $p_2$ and the parity $p_1\oplus p_3$.  A perfect secret-key is $K=p_1$ (we could alternatively use $K=p_3$).  If any of the nodes are compromised by the eavesdropper, the private-key capacity is 0.
\end{example}

We remark that the secret-key in the above example can in fact be attained by all nodes using only one transmission (i.e., universal recovery is not a prerequisite for secret-key generation).  However, it remains true that only one bit of secrecy can be generated.

\section{Proofs of Main Results} \label{sec:URProofs}
\subsection{Necessary and Sufficient Conditions for Universal Recovery}
\begin{proof}[Proof of Theorem \ref{thm:generalFeasibleRegion}]
This proof is accomplished by reducing the problem at hand to an instance of a single-source network coding problem and invoking the Max-Flow Min-Cut Theorem for network information flow \cite{bib:Ahlswede00networkinformation}.

First, fix the number of communication rounds $r$ to be large enough to permit universal recovery.  For a network $\mathcal{T}$, construct the network-coding graph $\mathcal{G}^{NC}=(V_{NC},E_{NC})$ as follows.  
The vertex set, $V_{NC}$ is defined as:
\begin{align*}
V_{NC}=\{s,u_1,\dots,u_k\} \cup \bigcup_{j=0}^{r} \{ v_1^j,\dots,v_n^j\} \cup \bigcup_{j=1}^{r}\{ w_1^j,\dots,w_n^j\}.
\end{align*}
The edge set, $E_{NC}$, consists of directed edges and is constructed as follows:
\begin{itemize}
\item For each $i\in [k]$, there is an edge of unit capacity\footnote{An edge of unit capacity can carry one field element $z\in\mathbb{F}$ per unit time.} from $s$ to $u_i$.
\item If $p_i\in {P}_j$, then there is an edge of infinite 
capacity from $u_i$ to $v_j^0$.
\item For each $j \in [r]$ and each $i\in [n]$, there is an edge of infinite capacity from $v_i^{j-1}$ to $v_i^{j}$.
\item For each $j \in [r]$ and each $i\in [n]$, there is an edge of capacity $b_i^{j}$ from $v_i^{j-1}$ to $w_i^{j}$.
\item For each $j \in [r]$ and each $i\in [n]$, there is an edge of infinite capacity from $w_i^{j}$ to $v_{i'}^{j}$ iff $i' \in \Gamma(i)$.
\end{itemize}

\begin{figure}
\centering
\def\svgwidth{2.5in}
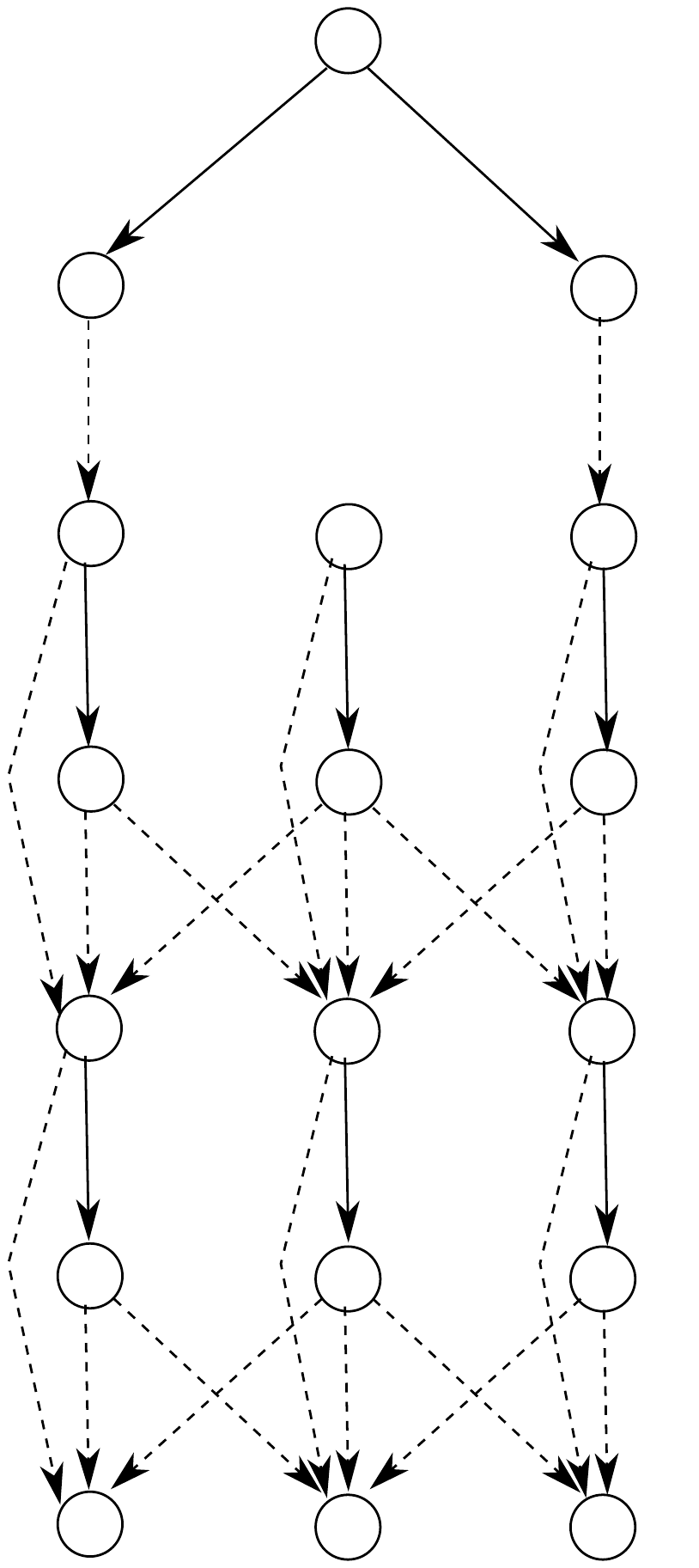
\caption{The graph $\mathcal{G}^{NC}$ corresponding to the line network of Example \ref{ex:B_N_line}.  Edges represented by broken lines have infinite capacity. Edges with finite capacities are labeled with the corresponding capacity value.}
\label{fig:G_NC}
\end{figure}

The interpretation of this graph is as follows: the vertex $u_i$ is introduced to represent packet $p_i$, the vertex $v_i^j$ represents node $i$ after the $j^{th}$ round of communication, and the vertex $w_i^j$ represents the broadcast of node $i$ during the $j^{th}$ round of communication.  If the $b_i^{j}$'s are chosen such that the graph $\mathcal{G}^{NC}$ admits a network coding solution which supports a multicast of $k$ units from $s$ to $\{v_1^r,\dots,v_n^r\}$, then this network coding solution also solves the universal recovery problem for the network $\mathcal{T}$ when node $i$ is allowed to make at most $b_i^{j}$ transmissions during the $j^{th}$ round of communication.  The graph $\mathcal{G}^{NC}$ corresponding to the line network of Example \ref{ex:B_N_line} is given in Figure \ref{fig:G_NC}.

We now formally prove the equivalence of the network coding problem on $\mathcal{G}^{NC}$ and the universal recovery problem defined by $\mathcal{T}$.

Suppose a set of encoding functions $\{f_i^j\}$ and a set decoding functions $\{\phi_i\}$ describe a transmission strategy which solves the universal recovery problem for a network $\mathcal{T}$ in $r$ rounds of communication.  Let $b_i^j$ be the number of transmissions made by node $i$ during the $j^{th}$ round of communication, and let $\mathcal{I}_i^j$ be all the information known to node $i$ prior to the $j^{th}$ round of communication (e.g. $\mathcal{I}_i^1={P}_i$).  The function $f_i^j$ is the encoding function for user $i$ during the $j^{th}$ round of communication (i.e. $f_i^j(\mathcal{I}_i^j) \in \mathbb{F}^{b_i^j}$), and the decoding functions satisfy:
\begin{align*}
\phi_i\left(\mathcal{I}_i^r,\cup_{i'\in\Gamma(i)} \{f_{i'}^r(\mathcal{I}_{i'}^r)\}\right)=\{p_1,\dots,p_k\}.
\end{align*}

Note that, given the encoding functions and the ${P}_i$'s, the $\mathcal{I}_i^j$'s can be defined recursively as:
\begin{align*}
\mathcal{I}_i^{j+1}=\mathcal{I}_i^{j} \cup \bigcup_{i'\in\Gamma(i)} \{f_{i'}^j(\mathcal{I}_{i'}^j)\}.
\end{align*}

The functions $\{f_i^j\}$ and $\{\phi_i\}$ can be used to generate a network coding solution which supports $k$ units of flow from $s$ to $\{v_1^r,\dots,v_n^r\}$ on  $\mathcal{G}^{NC}$ as follows:
  
For each vertex $v \in V_{NC}$, let $\mbox{IN}(v)$ be whatever $v$ receives on its incoming edges.  Let $g_v$ be the encoding function at vertex $v$, and $g_v(e,\mbox{IN}(v))$ be the encoded message which vertex $v$ sends along $e$ ($e$ is an outgoing edge from $v$).

If $e$ is an edge of infinite capacity emanating from $v$, let $g_v(e,\mbox{IN}(v))=\mbox{IN}(v)$.  

Let $s$ send $p_i$ along edge $(s,u_i)$.  At this point, we have $\mbox{IN}(v_i^0)={P}_i=\mathcal{I}_i^1$.  For each $i\in [n]$, let $g_{v_i^0}((v_i^0,w_i^1),\mbox{IN}(v_i^0))=f_i^1(\mathcal{I}_i^1)$.  By a simple inductive argument, defining the encoding functions $g_{v_i^j}((v_i^j,w_i^{j+1}),\mbox{IN}(v_i^j))$ to be equal to $f_i^{j+1}$ yields the result that $\mbox{IN}(v_i^r)=\left(\mathcal{I}_i^r,\cup_{i'\in\Gamma(i)} \{f_{i'}^r(\mathcal{I}_{i'}^r)\}\right)$.  Hence, the decoding function $\phi_i$ can be used at $v_i^r$ to allow error-free reconstruction of the $k$-unit flow.

The equivalence argument is completed by showing that a network coding solution which supports a $k$-unit multicast flow from $s$ to $\{v_1^r,\dots,v_n^r\}$ on  $\mathcal{G}^{NC}$ also solves the universal recovery problem on $\mathcal{T}$.  This is argued in a similar manner as above, and is therefore omitted.

Since we have shown that the universal recovery problem on $\mathcal{T}$ is equivalent to a network coding problem on $\mathcal{G}^{NC}$, the celebrated max-flow min-cut result of Ahlswede et. al \cite{bib:Ahlswede00networkinformation} is applicable.  In particular, a fixed vector $\{b_i^j\}$ admits a solution to the universal recovery problem where node $i$ makes at most $b_i^j$ transmissions during the $j^{th}$ round of communication if and only if any cut separating $s$ from some $v_i^r$ in $\mathcal{G}^{NC}$ has capacity at least $k$.

What remains to be shown is that the inequalities defining $\mathcal{R}_r(\mathcal{T})$ are satisfied if and only if any cut separating $s$ from some $v_i^r$ in $\mathcal{G}^{NC}$ has capacity at least $k$.

To this end, suppose we have a cut $({S},{S}^c)$ satisfying $s\in{S}^c$ and $v_i^r \in {S}$ for some $i\in [n]$. We will modify the cut $({S},{S}^c)$ to produce a new cut $({S}',{S}'^c)$ with capacity less than or equal to the capacity of the original cut $({S},{S}^c)$.

Define the set ${S}_0 \subseteq [n]$ as follows: $i\in {S}_0$ iff $v_i^r \in {S}$ (by definition of ${S}$, we have that ${S}_0\neq \emptyset$).

Initially, let ${S}'={S}$.  Modify the cut $({S}',{S}'^c)$ as follows:
\begin{enumerate}
\renewcommand{\labelenumi}{M\arabic{enumi})}
\item If $i\in \Gamma({S}_0)$, then place $w_i^r$ into ${S}'$. \label{mod:M1}
\item If $i \notin \Gamma({S}_0)$, then place $w_i^r$ into ${S}'^c$. \label{mod:M2}
\newcounter{enumM_saved}
\setcounter{enumM_saved}{\value{enumi}}
\end{enumerate}
Modifications M\ref{mod:M1} and M\ref{mod:M2} are justified (respectively) by J\ref{just:J1} and J\ref{just:J2}:
\begin{enumerate}
\renewcommand{\labelenumi}{J\arabic{enumi})}
\item If $i\in \Gamma({S}_0)$, then there exists an edge of infinite capacity from $w_i^r$ to some $v_{i'}^r\in{S}$.  Thus, moving $w_i^r$ to $\mathcal{S}'$ (if necessary) does not increase the capacity of the cut.\label{just:J1}
\item If $i \notin \Gamma({S}_0)$, then there are no edges from $w_i^r$ to ${S}$, hence we can move $w_i^r$ into ${S}'^c$ (if necessary) without increasing the capacity of the cut.\label{just:J2}
\newcounter{enumJ_saved}
\setcounter{enumJ_saved}{\value{enumi}}
\end{enumerate}

Modifications M\ref{mod:M1} and M\ref{mod:M2} guarantee that $w_i^r\in {S}'$ iff $i\in\Gamma({S}_0)$.  Thus, assume that $({S}',{S}'^c)$ satisfies this condition and further modify the cut as follows:
\begin{enumerate}
\renewcommand{\labelenumi}{M\arabic{enumi})}
\setcounter{enumi}{\value{enumM_saved}}
\item If $i\in {S}_0$, then place $v_i^{r-1}$ into ${S}'$.\label{mod:M3}
\item If $i \notin \Gamma({S}_0)$, then place $v_i^{r-1}$ into ${S}'^c$.\label{mod:M4}
\setcounter{enumM_saved}{\value{enumi}}
\end{enumerate}
Modifications M\ref{mod:M3} and M\ref{mod:M4} are justified (respectively) by J\ref{just:J3} and J\ref{just:J4}:
\begin{enumerate}
\renewcommand{\labelenumi}{J\arabic{enumi})}
\setcounter{enumi}{\value{enumJ_saved}}
\item If $i\in {S}_0$, then there exists an edge of infinite capacity from $v_i^{r-1}$ to $v_{i}^r\in{S}$.  Thus, moving $v_i^{r-1}$ to $\mathcal{S}'$ (if necessary) does not increase the capacity of the cut.\label{just:J3}
\item If $i \notin \Gamma({S}_0)$, then there are no edges from $v_i^{r-1}$ to ${S}'$ (since $w_i^r\notin {S}'$ by assumption), hence we can move $v_i^{r-1}$ into ${S}'^c$ (if necessary) without increasing the capacity of the cut.\label{just:J4}
\setcounter{enumJ_saved}{\value{enumi}}
\end{enumerate}

At this point, define the set ${S}_1 \subseteq [n]$ as follows: $i\in {S}_1$ iff $v_i^{r-1} \in {S}'$.  Note that the modifications of ${S}'$ guarantee that ${S}_1$ satisfies ${S}_0 \subseteq {S}_1 \subseteq \Gamma({S}_0)$.

This procedure can be repeated for each layer of the graph resulting in a sequence of sets $\emptyset \subsetneq {S}_0 \subseteq \dots \subseteq {S}_r \subseteq [n]$ satisfying ${S}_{j} \subseteq \Gamma({S}_{j-1})$ for each $j\in [r]$.

We now perform a final modification of the cut $({S}',{S}'^c)$:
\begin{enumerate}
\renewcommand{\labelenumi}{M\arabic{enumi})}
\setcounter{enumi}{\value{enumM_saved}}
\item If $p_j \in \cup_{i\in{S}_r} {P}_i$, then place $u_j$ into ${S}'$. \label{mod:M5}
\item If $p_j \notin \cup_{i\in{S}_r} {P}_i$, then place $u_j$ into ${S}'^c$. \label{mod:M6}
\setcounter{enumM_saved}{\value{enumi}}
\end{enumerate}
Modifications M\ref{mod:M5} and M\ref{mod:M6} are justified (respectively) by J\ref{just:J5} and J\ref{just:J6}:
\begin{enumerate}
\renewcommand{\labelenumi}{J\arabic{enumi})}
\setcounter{enumi}{\value{enumJ_saved}}
\item If $p_j \in \cup_{i\in{S}_r} {P}_i$, then there is an edge of infinite capacity from $u_j$ to ${S}'$ and moving $u_j$ into ${S}'$ (if necessary) does not increase the capacity of the cut. \label{just:J5}
\item If $p_j \notin \cup_{i\in{S}_r} {P}_i$, then there are no edges from $u_j$ to ${S}'$, hence moving $u_j$ (if necessary) into ${S}'^c$ cannot increase the capacity of the cut. \label{just:J6}
\setcounter{enumM_saved}{\value{enumi}}
\end{enumerate}
\renewcommand{\labelenumi}{\arabic{enumi})}

A quick calculation shows that the modified cut $({S}',{S}'^c)$ has capacity greater than or equal to $k$ iff:
\begin{align}
\sum_{j=1}^r & \sum_{i\in {S}_j^c \cap \Gamma({S}_{j-1})} b^{r+1-j}_i \geq \left| \bigcap_{i\in {S}_r} {P}_i^c \right|. \label{eqn:NecSufConst}
\end{align}

Since every modification of the cut either preserved or reduced the capacity of the cut, the original cut $({S},{S}^c)$ also has capacity greater than or equal to $k$ if the above inequality is satisfied.  In Figure \ref{fig:G_NC_cut}, we illustrate a cut $(S,S^c)$ and its modified minimal cut $(S',S'^c)$ for the graph $\mathcal{G}^{NC}$ corresponding to the line network of Example \ref{ex:B_N_line}.

\begin{figure}
\centering
\def\svgwidth{3.5in}
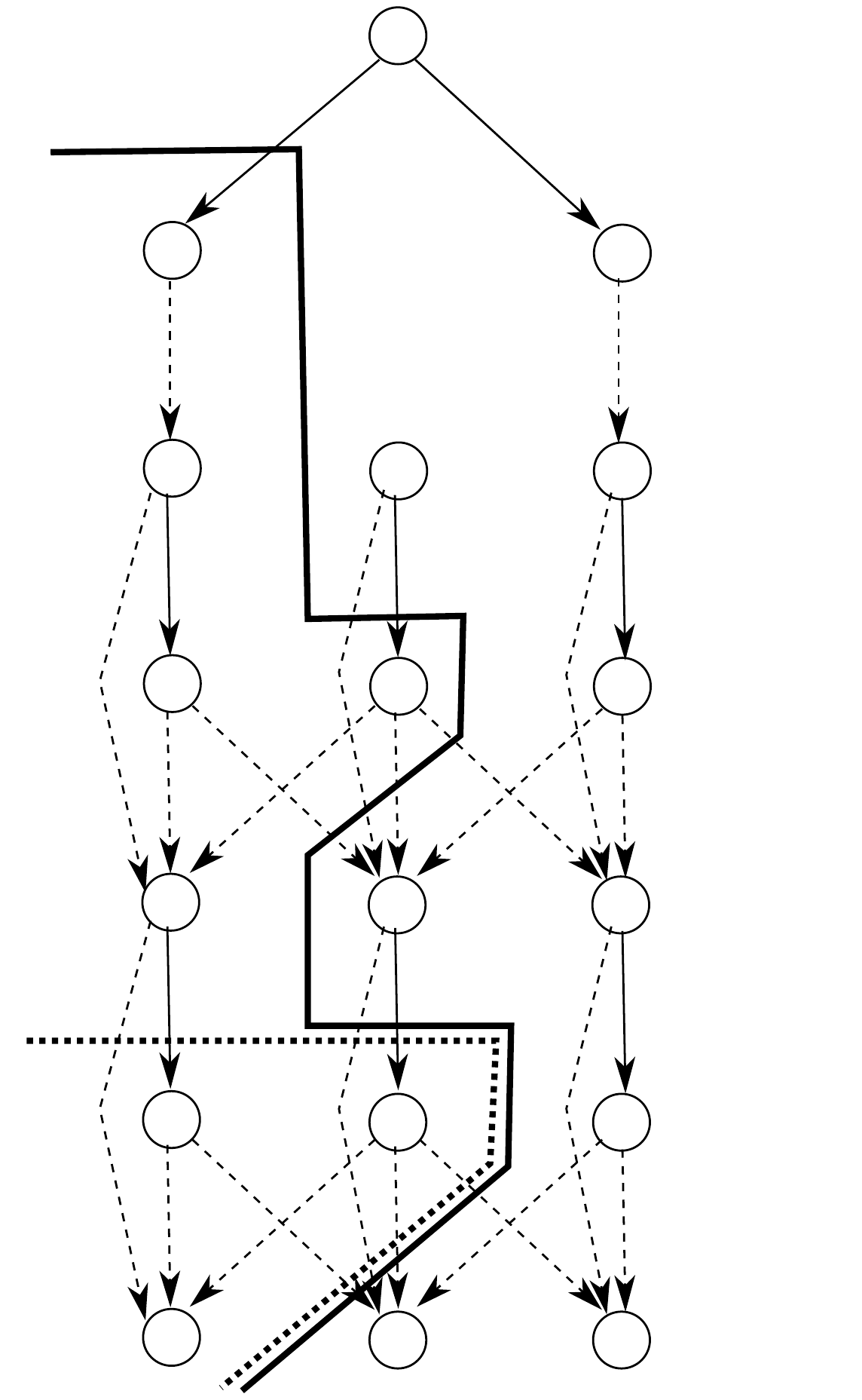
\caption{The graph $\mathcal{G}^{NC}$ corresponding to the line network of Example \ref{ex:B_N_line} with original cut $(S,S^c)$ and the corresponding modified minimal cut $(S',S'^c)$.  In this case, $S_0=S_1=S_2=\{1\}$.  Upon substitution into \eqref{eqn:NecSufConst}, this choice of $S_0,S_1,S_2$ yields the inequality $b_2^1+b_2^2 \geq 1$.}
\label{fig:G_NC_cut}
\end{figure}

By the equivalence of the universal recovery problem on a network $\mathcal{T}$ to the network coding problem on $\mathcal{G}^{NC}$ and the max-flow min-cut theorem for network information flow, if a transmission scheme solves the universal recovery problem on $\mathcal{T}$, then the associated $b_i^j$'s must satisfy the constraints of the form given by (\ref{eqn:NecSufConst}).  Conversely, for any set of $b_i^j$'s which satisfy the constraints of the form given by (\ref{eqn:NecSufConst}), there exists a transmission scheme using exactly those numbers of transmissions which solves the universal recovery problem for $\mathcal{T}$.  Thus the constraints of (\ref{eqn:NecSufConst}),  and hence the inequalities defining $\mathcal{R}_r(\mathcal{T})$, are satisfied if and only if any cut separating $s$ from some $v_i^r$ in $\mathcal{G}^{NC}$ has capacity at least $k$.

\begin{remark}
Since $\left| \bigcap_{i\in [n]} {P}_i^c \right|=0$, constraints where ${S}_r=[n]$ are trivially satisfied.  Therefore, we can restrict our attention to sequences of sets where ${S}_r\subsetneq [n]$.
\end{remark}
\end{proof}

\subsection{Fully Connected Networks}
\begin{proof}[Proof of Theorem \ref{thm:URFullyConnectedPolyTime}]
In the case where $\mathcal{T}$ is a fully connected network, we have that ${S}_j^c\cap\Gamma({S}_{j-1})={S}_j^c$ for any  nonempty $S \subset V$.  Therefore, the constraints defining $\mathcal{R}_r(\mathcal{T})$ become:
\begin{align}
\sum_{j=1}^r  & \sum_{i\in {S}_j^c}  b^{r+1-j}_i \geq \left| \bigcap_{i\in {S}_r} {P}_i^c \right|. \label{eqn:CliqueNecSuff}
\end{align}
Now, suppose a transmission schedule $\{b_i^j\}\in \mathcal{R}_r(\mathcal{T})$ and consider the modified transmission schedule  $\{\tilde{b}_i^j\}$ defined by: $\tilde{b}_i^r=\sum_{j=1}^r b_i^j$ and $\tilde{b}_i^j=0$ for $j<r$.  By construction,  ${S}_{j+1}^c\subseteq {S}_{j}^c$ in the constraints defining $\mathcal{R}_r(\mathcal{T})$. Therefore, using the definition of $\{\tilde{b}_i^j\}$, we have:
\begin{align*}
\sum_{i\in {S}_1^c}  \tilde{b}^{r}_i
\geq \sum_{j=1}^r \sum_{i\in {S}_j^c}  b^{r+1-j}_i 
\geq \left| \bigcap_{i\in {S}_r} {P}_i^c \right|.
\end{align*}

Thus the modified transmission schedule is also in $\mathcal{R}_r(\mathcal{T})$.  Since $\left| \bigcap_{i\in {S}_1} {P}_i^c \right| \geq \left| \bigcap_{i\in {S}_r} {P}_i^c \right|$, when $\mathcal{T}$ is a fully connected network, it is sufficient to consider constraints of the form:
\begin{align}
\sum_{i\in {S}^c} b_i^1 &\geq  \left| \bigcap_{i\in {S}} {P}_i^c \right| \mbox{~~for all nonempty $S\subset V$}.\label{eqn:CliqueCutset}
\end{align}
This proves the latter two statements of the theorem: that the cut-set constraints are necessary and sufficient for universal recovery when $\mathcal{T}$ is a fully connected network, and that a single round of communication is sufficient to achieve universal recovery with $M^*(\mathcal{T})$ transmissions.

With these results established, an optimal transmission schedule can be obtained by solving the following integer linear program:
\begin{align}
\mbox{minimize~~} &\sum_{i =1}^n b_i \label{ilp:cliqueNet}\\
\mbox{subject to:~~} & \sum_{i\in S^c} b_i \geq \left| \bigcap_{i\in S} P_i^c \right| \mbox{~~for each nonempty $S \subset V$}.\notag
\end{align} 

In order to accomplish this, we identify $B_i \leftarrow P_i^c$  and set $w_i=1$ for $i\in [n]$ and apply the submodular algorithm presented in Appendix \ref{sec:Algo}.
\end{proof}

Now we consider fully connected networks in which packets are randomly distributed according to \eqref{eqn:probModel}, which is parametrized by $q$.  The proof of Theorem \ref{thm:URcliqueRandom} requires the following lemma:

\begin{lemma}\label{lem:convexity}
If $0<q<1$ is fixed, then there exists some $\delta>0$ such that the following inequality holds for all $\ell\in\{2,\dots,n-1\}$:
\begin{align}
\frac{n-\ell}{n-1}\geq \frac{(1-q)^\ell-(1-q)^n}{1-q-(1-q)^n}+\delta.\notag
\end{align}
\end{lemma}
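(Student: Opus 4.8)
The plan is to prove the inequality by establishing that it holds at the two endpoints $\ell=2$ and $\ell=n$, and then arguing that the difference between the two sides, viewed as a function of a continuous variable $\ell$, is concave. Concavity on an interval together with nonnegativity (in fact strict positivity) at the endpoints would force strict positivity throughout the open interval, and hence a uniform gap $\delta>0$ over the finite set $\{2,\dots,n-1\}$.

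Concretely, I would define
\begin{equation*}
g(\ell) := \frac{n-\ell}{n-1} - \frac{(1-q)^\ell-(1-q)^n}{1-q-(1-q)^n},
\end{equation*}
treating $\ell$ as a real variable on $[1,n]$. Writing $a:=1-q$, so that $0<a<1$, the left term is affine (hence linear) in $\ell$, while the subtracted term is $\frac{a^\ell - a^n}{a - a^n}$, whose $\ell$-dependence is entirely through $a^\ell = e^{\ell \ln a}$. Since $a^\ell$ is a convex function of $\ell$ and it enters $g$ with a negative coefficient (note $a-a^n>0$ because $0<a<1$), the function $g$ is concave on $[1,n]$. The first step, then, is just to differentiate twice and confirm $g''(\ell) = -\frac{(\ln a)^2 a^\ell}{a-a^n} < 0$.

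Next I would check the endpoints. At $\ell=1$ we get $g(1)=\frac{n-1}{n-1}-\frac{a-a^n}{a-a^n}=1-1=0$, and at $\ell=n$ we get $g(n)=\frac{n-n}{n-1}-\frac{a^n-a^n}{a-a^n}=0$. So $g$ vanishes at both ends of $[1,n]$. A function that is concave on an interval and equal to zero at both endpoints is strictly positive at every interior point (unless it is identically zero, which it is not since $g''<0$ strictly). Therefore $g(\ell)>0$ for every $\ell\in(1,n)$, and in particular for each of the finitely many integers $\ell\in\{2,\dots,n-1\}$. Taking $\delta:=\min_{2\le \ell\le n-1} g(\ell)>0$, which exists and is positive because it is a minimum over a nonempty finite set of strictly positive numbers, completes the argument.

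I do not anticipate a serious obstacle here; the only point requiring a little care is the bookkeeping of signs, namely verifying that $a-a^n>0$ so that the concavity conclusion (rather than convexity) is the correct one, and confirming that the endpoints used for the concavity argument are $\ell=1$ and $\ell=n$ even though the claimed range of validity is $\{2,\dots,n-1\}$. If one preferred to avoid calculus, the same conclusion follows from the discrete statement that the subtracted quantity, as a function of $\ell$, is a convex combination argument: $\frac{a^\ell-a^n}{a-a^n}$ is a weighted interpolation that lies below the chord, which is exactly $\frac{n-\ell}{n-1}$. Either route yields the strict inequality and hence the uniform $\delta$.
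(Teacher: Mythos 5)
Your proof is correct and takes essentially the same approach as the paper: the paper applies Jensen's inequality to the strictly convex function $f(x)=(1-q)^x$ with the convex combination $\ell=\theta\cdot 1+(1-\theta)\cdot n$, which is precisely your chord observation (and equivalent to your computation that $g$ is strictly concave with $g(1)=g(n)=0$), and it likewise takes $\delta$ to be the minimum gap over the finite set $\{2,\dots,n-1\}$. The calculus phrasing versus the Jensen phrasing is only a cosmetic difference, and you even note the Jensen-style alternative yourself.
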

\begin{proof}
Applying Jensen's inequality to the strictly convex function $f(x)=(1-q)^x$ using the convex combination $\ell=\theta\cdot 1+(1-\theta)\cdot n$ yields:
\begin{align}
\frac{(1-q)^\ell-(1-q)^n}{1-q-(1-q)^n} <\frac{n-\ell}{n-1}. \notag
\end{align}
Taking $\delta$ to be the minimum gap in the above inequality for the values $\ell\in\{2,\dots,n-1\}$ completes the proof.
\end{proof}

\begin{proof}[Proof of Theorem \ref{thm:URcliqueRandom}]
We begin by showing that the LP
\begin{align}
\mbox{minimize~~} &\sum_{i =1}^n b_i \label{eqn:LPforCliqueProof}\\
\mbox{subject to:~~} & \sum_{i\in S^c} b_i \geq \left| \bigcap_{i\in S} P_i^c \right| \mbox{~~for each nonempty $S \subset V$}. \label{eqn:LPineqConstRandom}
\end{align} 
has an optimal value of $\frac{1}{n-1}\sum_{i=1}^n |P_i^c|$ with high probability.  To this end, note that the inequalities 
\begin{align}
\sum_{\substack{i=1\\i\neq j}}^n b_i \geq |P_j^c|  \mbox{~~for $1\leq j \leq n$.} \label{eqn:subsetofLPConstraints}
\end{align}
are a subset of the inequality constraints \eqref{eqn:LPineqConstRandom}. Summing both sides of \eqref{eqn:subsetofLPConstraints} over $1\leq j\leq n$ reveals that any feasible vector $b\in \mathbb{R}^n$ for LP \eqref{eqn:LPforCliqueProof}-\eqref{eqn:LPineqConstRandom} must satisfy:
\begin{align}
\sum_{i=1}^n b_i \geq \frac{1}{n-1}\sum_{i=1}^n |P_i^c|. \label{eqn:CutSetLowerBound}
\end{align}
This establishes a lower bound on the optimal value of the LP.  We now identify a solution that is feasible with probability approaching 1 as $k\rightarrow \infty$ while achieving the lower bound of \eqref{eqn:CutSetLowerBound} with equality. To begin note that 
\begin{align}
\tilde{b}_j = \frac{1}{n-1} \sum_{i=1}^n  |P_i^c| -  |P_j^c| \label{eqn:b_iSoln}
\end{align}
is a solution to the system of linear equations given by \eqref{eqn:subsetofLPConstraints} and achieves \eqref{eqn:CutSetLowerBound} with equality.  Now, we prove that $(\tilde{b}_1,\dots,\tilde{b}_n)$ is a feasible solution to LP \eqref{eqn:LPforCliqueProof} with high probability.  To be specific, we must verify that
\begin{align}
\sum_{i \in {S}^c } \tilde{b}_i \geq \left|\bigcap_{i\in {S}} {P}_i^c\right| \label{eqn:InequalitiesforOtherSubsets}
\end{align}
holds with high probability for all subsets ${S}\subset V$ satisfying $2\leq |{S}| \leq n-1$ (the case $|{S}|=1$ is satisfied by the definition of $\{\tilde{b}_i\}_{i=1}^n$).  Substitution of (\ref{eqn:b_iSoln}) into (\ref{eqn:InequalitiesforOtherSubsets}) along with some algebra yields that the following equivalent conditions must hold:
\begin{align}
\left(\frac{n-|{S}|}{n-1} \right) \sum_{i=1}^n \frac{1}{k}  \left|{P}_i^c\right| -
\sum_{i\in {S}^c} \frac{1}{k} \left|{P}^c_i\right| \geq \frac{1}{k} \left| \bigcap_{i\in {S}}{P}_i^c\right|. \label{eqn:equivCondn}
\end{align}

To this end, note that for any ${S}$, $\left| \bigcap_{i\in {S}}{P}_i^c\right|$ is a random variable which can be expressed as $\left| \bigcap_{i\in {S}}{P}_i^c\right|=\sum_{j=1}^k X_j^{{S}}$, where  $X_j^{{S}}$ is an indicator random variable taking the value $1$ if $p_j\in \bigcap_{i\in {S}}{P}_i^c $ and $0$ otherwise.  From \eqref{eqn:probModel} we have:
\begin{align*}
\Pr\left(X_j^{{S}}=1\right)=\frac{(1-q)^{|{S}|}-(1-q)^n}{1-(1-q)^n}.
\end{align*}

By the weak law of large numbers, for any $\eta>0$:
\begin{align}
\Pr\left(\left| 
\frac{1}{k} \bigg| \bigcap_{i\in {S}}{P}_i^c\bigg|-
\frac{(1-q)^{|{S}|}-(1-q)^n}{1-(1-q)^n}
\right| > \eta \right) &<\epsilon_k, \label{eqn:WLLN}
\end{align}
where $\epsilon_k\rightarrow 0$ as $k\rightarrow\infty$.
Thus, by the union bound, Lemma \ref{lem:convexity}, and taking $\eta$ sufficiently small, the following string of inequalities 
holds with arbitrarily high probability as $k\rightarrow\infty$:
\begin{align*}
&\left(\frac{n-|{S}|}{n-1} \right) \sum_{i=1}^n \frac{1}{k}  \left|{P}_i^c\right| -
\sum_{i\in {S}^c} \frac{1}{k} \left|{P}^c_i\right| \\
&\geq  
\left(\frac{n-|{S}|}{n-1} \right)
\left(\frac{(1-q)-(1-q)^n}{1-(1-q)^n} -(2n-1)\eta\right) \\
&\geq\frac{(1-q)^{{|S|}}-(1-q)^n}{1-(1-q)^n}+\eta\\
&\geq 
\frac{1}{k} \left| \bigcap_{i\in {S}}{P}_i^c\right|.
\end{align*}
These steps are justified as follows: for $\eta$ sufficiently small the first and last inequalities hold with high probability by \eqref{eqn:WLLN}, and the second inequality follows from Lemma \ref{lem:convexity} with $\ell=|S|$.
This proves that \eqref{eqn:equivCondn} holds, and therefore $(\tilde{b}_1,\dots,\tilde{b}_n)$ is a feasible solution to LP \eqref{eqn:LPforCliqueProof} with high probability.  Now, taking Corollary \ref{cor:ILP_LPgap} in Appendix \ref{sec:Algo} together with Theorem \ref{thm:URFullyConnectedPolyTime} completes the proof.
\end{proof}

\subsection{$d$-Regular Networks}
\begin{lemma} \label{lem:dregRandomLP}
Assume packets are randomly distributed in a $d$-regular network $\mathcal{T}$.  For any $\epsilon>0$, there exists an optimal solution $x^*$ to LP (\ref{lp:dregObj}-\ref{lp:dregConst}) which satisfies
\begin{align*}
 \left\|x^*-\frac{1}{d}\mathbb{E}[|P_1^c|] \mathds{1} \right\|_{\infty} < \epsilon k 
 \end{align*}
with probability approaching $1$ as $k\rightarrow \infty$, where $\mathbb{E}$ indicates expectation.
\end{lemma}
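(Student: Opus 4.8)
The plan is to combine a concentration argument for the right-hand sides $|P_j^c|$ with a polyhedral stability (error-bound) argument for the LP. Throughout, write $b_j = |P_j^c|$, let $\mu = \mathbb{E}[|P_1^c|]$, and let $\mathbf{A}$ be the adjacency matrix of $\mathcal{G}$, so that the constraints \eqref{lp:dregConst} read $(\mathbf{A}x)_j \geq b_j$. Fix a small $\eta>0$ to be chosen at the end. First I would observe, exactly as in the proof of Theorem \ref{thm:URcliqueRandom}, that each $b_j$ is a sum of $k$ independent indicators of mean $\mu$ (the conditioning in \eqref{eqn:probModel} is per-packet, so packets remain independent across the $k$ coordinates). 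By the weak law of large numbers and a union bound over the $n$ nodes, the event $E_\eta = \{\max_j |b_j - \mu| < \eta k\}$ has probability approaching $1$ as $k\to\infty$, and I work on $E_\eta$ henceforth.

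Next I would pin down $M_{LP}(\mathcal{T})$ to within $O(\eta k)$. Summing the $n$ constraints and using $d$-regularity (each $x_i$ appears in exactly $d$ of them, since $i\in\partial(j)\iff j\in\partial(i)$) gives $d\sum_i x_i \geq \sum_j b_j$, hence $M_{LP}\geq \tfrac{1}{d}\sum_j b_j > \tfrac{n}{d}(\mu-\eta k)$. Conversely, the uniform vector $\tfrac{\mu+\eta k}{d}\mathds{1}$ is feasible, because $\bigl(\mathbf{A}\tfrac{\mu+\eta k}{d}\mathds{1}\bigr)_j = d\cdot\tfrac{\mu+\eta k}{d} = \mu+\eta k \geq b_j$, so $M_{LP}\leq \tfrac{n}{d}(\mu+\eta k)$. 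Consequently the quantity $\Delta := M_{LP} - \tfrac{1}{d}\sum_j b_j$ satisfies $0\leq \Delta < \tfrac{2n\eta k}{d}$.

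The crux is to exhibit an \emph{optimal} solution near the uniform vector. I would reparametrize: write $\bar b = \tfrac{1}{n}\sum_j b_j$ and $b^{\perp} = b - \bar b\,\mathds{1}$, so that $|\bar b - \mu| < \eta k$ and $\|b^{\perp}\|_\infty < 2\eta k$, and look for a solution of the form $x = \tfrac{\bar b}{d}\mathds{1} + z$. Since $\mathbf{A}\mathds{1} = d\,\mathds{1}$ by regularity, feasibility $\mathbf{A}x\geq b$ reduces to $\mathbf{A}z\geq b^{\perp}$, while the objective equals $\tfrac{1}{d}\sum_j b_j + \mathds{1}^{T}z$; thus $x$ is optimal if and only if $z$ lies in the polyhedron $Z = \{z : \mathbf{A}z\geq b^{\perp},\ \mathds{1}^{T}z = \Delta\}$. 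This set is nonempty (any LP-optimal $x^{*}$ gives $z = x^{*}-\tfrac{\bar b}{d}\mathds{1}\in Z$), but I need a \emph{small} element. Here I would invoke a Hoffman-type error bound: for the constraint matrix built from the fixed data $\mathbf{A}$ and $\mathds{1}^{T}$ there is a constant $L$ depending only on $\mathcal{G}$ (not on $k$, $q$, or the realization) such that the distance from the origin to $Z$ is at most $L$ times the amount by which $0$ violates the defining (in)equalities, namely at most $L(\|b^{\perp}\|_\infty + \Delta) = O(\eta k)$. The resulting $x = \tfrac{\bar b}{d}\mathds{1}+z$ is optimal, and is genuinely nonnegative because $\tfrac{\bar b}{d} = \Theta(k)$ dominates $\|z\|_\infty$ once $\eta$ is small. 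By the triangle inequality $\bigl\|x - \tfrac{1}{d}\mu\,\mathds{1}\bigr\|_\infty \leq \|z\|_\infty + \tfrac{|\bar b-\mu|}{d} = O(\eta k)$; choosing $\eta$ small enough, depending only on $\epsilon$ and on $L,n,d$, makes this $<\epsilon k$ while keeping $\Pr[E_\eta]\to 1$.

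I expect the Hoffman step to be the main obstacle: the content is that the LP's optimal face does not drift away from the uniform vector faster than linearly in the right-hand-side perturbation, with a $k$-independent modulus. The structural fact that makes the uniform vector the correct ``center'' at the idealized right-hand side $b=\mu\mathds{1}$ is precisely $d$-regularity through $\mathbf{A}\mathds{1}=d\,\mathds{1}$, which drives both the summation lower bound and the reparametrization; verifying that the concentration estimate correctly accounts for the conditioning in \eqref{eqn:probModel} is routine by comparison with Theorem \ref{thm:URcliqueRandom}.
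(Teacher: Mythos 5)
Your proof is correct, and it takes a genuinely different route from the paper's. The paper recenters at the minimum-norm least-squares point $\bar{x}_{LS}=A^{+}\vec{P}$, observes that $\mathbb{E}\vec{P}=\mathbb{E}[|P_1^c|]\mathds{1}$ is a $d$-eigenvector of $A$ so that $\bar{x}_{LS}$ lies within $\|A^{+}\|_2\|\vec{P}-\mathbb{E}\vec{P}\|_2$ of $\frac{1}{d}\mathbb{E}[|P_1^c|]\mathds{1}$, and then invokes its own bespoke stability result (Lemma \ref{lem:LP_LSapprox}, proved from scratch by an LP-duality argument in Appendix \ref{app:LPlemma}) asserting the existence of an \emph{optimal} solution within $c_A\|A\bar{x}_{LS}-\vec{P}\|_2$ of $\bar{x}_{LS}$; the weak law of large numbers finishes. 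You instead recenter at the uniform vector directly, sandwich the optimal value as $M_{LP}=\frac{1}{d}\sum_j|P_j^c|+O(\eta k)$ (lower bound by summing the constraints under $d$-regularity, upper bound by feasibility of $\frac{\mu+\eta k}{d}\mathds{1}$), fold the optimality condition into the polyhedron $Z$ via the equality $\mathds{1}^{T}z=\Delta$, and apply the classical Hoffman error bound to extract a small element of the optimal face. Both arguments are at bottom Hoffman-type stability statements driven by $A\mathds{1}=d\,\mathds{1}$; the difference is that the entire analytic content you delegate to Hoffman's theorem — a $k$-independent modulus $L$ depending only on the fixed constraint matrix, which is a legitimate appeal since Hoffman's constant depends only on the matrix and not on the right-hand side — is exactly what the paper chooses to prove by hand as Lemma \ref{lem:LP_LSapprox}. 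Your route buys a sharper byproduct (the explicit two-sided estimate of $M_{LP}(\mathcal{T})$ itself) and avoids the pseudoinverse detour; the paper's route buys self-containedness and explicit constants ($c_A=\lambda^{-1}n\max\{n-1,d^{-1}\}$, with $\lambda$ the smallest nonzero eigenvalue modulus of $A$). Two minor remarks: embedding $\mathds{1}^{T}z=\Delta$ in the Hoffman system is the step that guarantees your near-uniform point is optimal rather than merely feasible, and is precisely the substitute for the duality work of Appendix \ref{app:LPlemma}; and your nonnegativity check is harmless but unnecessary, since LP \eqref{lp:dregObj}--\eqref{lp:dregConst} imposes no sign constraint on $x$.
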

\begin{proof}
Let $\vec{P}=(|P_1^c|,\dots,|P_n^c|)^T$ and let $A$ be the adjacency matrix of $\mathcal{G}$ (i.e., $a_{i,j}=1$ if $(i,j)\in E$ and $0$ otherwise).  Observe that $A$ is symmetric and $A\mathds{1}=d\mathds{1}$, where $\mathds{1}$ denotes a column vector of $1$'s.  With this notation, LP \eqref{lp:dregObj} can be rewritten as: 
\begin{align}
\mbox{minimize~~} &\mathds{1}^T x \label{lp:dregObj2}\\
\mbox{subject to:~~} & Ax \succeq \vec{P}, \notag
\end{align} 
where ``$a\succeq b$" for vectors $a,b\in \mathbb{R}^n$ means that $a_i\geq b_i$ for $i=1,\dots,n$. 

Let $A^+$ denote the Moore-Penrose pseudoinverse of $A$.  Observe that the linear least squares solution to $Ax\approx \vec{P}$ is given by:
\begin{align*}
\bar{x}_{LS} &= A^{+}\vec{P} \\
&= A^{+}\mathbb{E}\vec{P}\ + A^{+}\left(\vec{P}-\mathbb{E}\vec{P}\right)\\
&=\frac{1}{d}\mathbb{E}\vec{P} + A^{+}\left(\vec{P}-\mathbb{E}\vec{P}\right).
\end{align*}
For the last step above, note that $\mathbb{E}\vec{P}$ is an eigenvector of $A$ with eigenvalue $d$ so $\mathbb{E}\vec{P}$ will also be an eigenvector of $A^+$ with eigenvalue $\frac{1}{d}$.  Hence,
\begin{align*}
\|x_{LS}-\frac{1}{d}\mathbb{E}\vec{P} \|_2 &= \|A^{+}\left(\vec{P}-\mathbb{E}\vec{P}\right)\|_2 \\
&\leq \|A^{+}\|_2 \|\vec{P}-\mathbb{E}\vec{P}\|_2. 
\end{align*}
Combining this with the triangle inequality implies that, for any vector $y$,
\begin{align*}
\|y-\frac{1}{d}\mathbb{E}\vec{P}\|_{\infty} &\leq \|y - \bar{x}_{LS} \|_{\infty}+ \|\bar{x}_{LS} - \frac{1}{d}\mathbb{E}\vec{P} \|_{\infty}\\
&\leq \|y - \bar{x}_{LS} \|_{\infty}+ \|\bar{x}_{LS} - \frac{1}{d}\mathbb{E}\vec{P} \|_2\\
&\leq \|y - \bar{x}_{LS} \|_{\infty}+ \|A^{+}\|_2 \|\vec{P}-\mathbb{E}\vec{P}\|_2.
\end{align*}

Therefore, Lemma \ref{lem:LP_LSapprox} (see Appendix \ref{app:LPlemma}) guarantees the  existence of an optimal solution $x^*$ to LP \eqref{lp:dregObj2} (and consequently LP \eqref{lp:dregObj}) which satisfies:
\begin{align*}
\|x^*-\frac{1}{d}\mathbb{E}\vec{P}\|_{\infty} &\leq \|x^* - \bar{x}_{LS} \|_{\infty}+ \|A^{+}\|_2 \|\vec{P}-\mathbb{E}\vec{P}\|_2\\
&\leq c_A \|A\bar{x}_{LS} - \vec{P} \|_2+ \|A^{+}\|_2 \|\vec{P}-\mathbb{E}\vec{P}\|_2\\
&\leq c_A \|\frac{1}{d}A\mathbb{E}\vec{P}  - \vec{P} \|_2 + \|A^{+}\|_2 \|\vec{P}-\mathbb{E}\vec{P}\|_2\\
&=c_A \| \mathbb{E}\vec{P}  - \vec{P} \|_2+ \|A^{+}\|_2 \|\vec{P}-\mathbb{E}\vec{P}\|_2,
\end{align*}
where $c_A$ is a constant depending only on $A$. By the weak law of large numbers, $\| \vec{P}-\mathbb{E}\vec{P} \|_{2} \leq \epsilon k$ with probability tending to $1$ as $k\rightarrow \infty$ for any $\epsilon>0$. Noting that $\mathbb{E}\vec{P}=\mathbb{E}[|P_1^c|] \mathds{1}$ completes the proof.
\end{proof}

\begin{proof}[Proof of Theorem \ref{thm:URdReg}]
We begin with some observations and definitions:  
\begin{itemize}
\item First, recall that our model for randomly distributed packets \eqref{eqn:probModel} implies that
\begin{align}
\mathbb{E}\left[\left|\bigcap_{i\in S}P_i^c\right|\right] = k\frac{(1-q)^{|S|}-(1-q)^n}{1-(1-q)^n} \mbox{~~for all nonempty $S\subset V$.}
\end{align}
\item With this in mind, there exists a constant $c_q>0$ such that 
\begin{align}
\mathbb{E}\left[\left| P_1^c\right|\right] \geq (1+c_q)\mathbb{E}\left[\left|\bigcap_{i\in S}P_i^c\right|\right] \mbox{~~for all $S\subset V, |S| \geq 2$.} \label{eqn:cqIneq}
\end{align}
\item Next, Lemma \ref{lem:convexity} implies the existence of a constant $\delta_q>0$ such that for any $S\subset V$ with $2\leq |S|\leq n-1$:
\begin{align}
\frac{n-|S|}{n-1} \geq \frac{(1-q)^{|S|}-(1-q)^n}{(1-q)-(1-q)^n} +\delta_q = \frac{\mathbb{E}\left[\left|\bigcap_{i\in S}P_i^c\right|\right]}{\mathbb{E}\left[\left| P_1^c\right|\right] } + \delta_q. \label{eqn:cvxLemExp}
\end{align}
\item The weak law of large numbers implies that
\begin{align}
\left(1+\frac{\min \{\delta_q, c_q \}}{4} \right) \mathbb{E}\left[\left|\bigcap_{i\in S}P_i^c\right|\right] \geq \left|\bigcap_{i\in S}P_i^c\right|
\label{eqn:expectationgtactual}
\end{align}
with probability approaching $1$ as $k\rightarrow \infty$.
\item Finally, for the proof below, we will take the number of communication rounds sufficiently large to satisfy
\begin{align}
r\geq \max\left\{\frac{2d}{n \delta_q}, \frac{2n(1+c_q)}{d c_q} \right\}. \label{eqn:defnR}
\end{align}
\end{itemize}
Fix $\epsilon>0$. Lemma \ref{lem:dregRandomLP} guarantees that there exists an optimal solution $x^*$ to LP \eqref{lp:dregObj} satisfying 
\begin{align}
 \left\|x^*-\frac{1}{d}\mathbb{E}[|P_1^c|] \mathds{1} \right\|_{\infty} < \epsilon k \label{eqn:xinftybounds}
\end{align}
with probability tending to $1$ in $k$.  Now, it is always possible to construct a transmission schedule $\{b_i^j\}$ which satisfies 
$\sum_j b_i^j = \lceil x_i^* \rceil$ and $\lfloor \frac{1}{r}x_i^* \rfloor \leq b_i^j \leq \lceil\frac{1}{r}x_i^*\rceil$ for each $i,j$.   Observe that $\sum_{i,j} b_i^j < n+ \sum_{i} x_i^*$.  Thus, proving that $\{b_i^j\}\in \mathcal{R}_r(\mathcal{T})$ with high probability will prove the theorem.

Since the network is $d$-regular, $|\partial({S})|\geq d$ whenever $|{S}|\leq n-d$ and $|\partial({S}_1)| \geq n-|{S}_2|$ whenever $|{S}_2|\geq n-d$ and ${S}_1\subseteq {S}_2$.  We consider the cases where $2\leq|{S}_r|\leq n-d$ and $n-d < |{S}_r|\leq n-1$ separately.  The case where $ |{S}_r|=1$ coincides precisely with the constraints \eqref{lp:dregConst}, and hence is satisfied by definition of $\{b_i^j\}$.

Considering the case where $2\leq|{S}_r|\leq n-d$, we have the following string of inequalities:
\begin{align}
\sum_{j=1}^r \sum_{i\in{S}_j^c\cap\Gamma({S}_{j-1})} b_i^{(r+1-j)}  
&\geq \sum_{j=1}^r \sum_{i\in{S}_j^c\cap\Gamma({S}_{j-1})} \left\lfloor \frac{1}{r}x_i^* \right\rfloor  \label{eqn:defnBis} \\
&\geq \frac{1}{r}\sum_{j=1}^r \sum_{i\in{S}_j^c\cap\Gamma({S}_{j-1})} x_i^* -nr \label{eqn:ineqSeries2}\\
&= \frac{1}{r}\sum_{j=1}^r \sum_{i\in\partial({S}_{j-1})} x_i^* - \frac{1}{r}\sum_{i\in {S}_r \cap {S}_0^c} x_i^*-nr \label{eqn:ineqSeries3}\\
&\geq \frac{1}{r}\sum_{j=1}^r \sum_{i\in\partial({S}_{j-1})} \frac{1}{d}\mathbb{E}[|P_1^c|] -
\frac{1}{r}\sum_{i\in {S}_r \cap {S}_0^c} \frac{1}{d}\mathbb{E}[|P_1^c|]  - nk \epsilon  -nr \label{eqn:ineqSeries4} \\
&\geq \frac{1}{rd}\mathbb{E}[|P_1^c|]  \left(\sum_{j=1}^r\left|\partial({S}_{j-1})\right| -n \right) - nr(k \epsilon+1) \label{eqn:ineqSeries5}  \\
&\geq \frac{1+c_q}{rd}\mathbb{E}\left[\left|\bigcap_{i\in S_r}P_i^c\right|\right]  \left(\sum_{j=1}^r\left|\partial({S}_{j-1})\right| -n \right) - nr(k \epsilon+1) \label{eqn:ineqSeries6} \\
&\geq \frac{1+c_q}{rd}\mathbb{E}\left[\left|\bigcap_{i\in S_r}P_i^c\right|\right]  \left(rd -n \right) - nr(k \epsilon+1) \label{eqn:ineqSeries7}\\
&\geq \left(1+\frac{c_q}{2}\right) \mathbb{E}\left[\left|\bigcap_{i\in S_r}P_i^c\right|\right]   - nr(k \epsilon+1) \label{eqn:ineqSeries8} \\
&\geq \left(1+\frac{c_q}{4}\right) \mathbb{E}\left[\left|\bigcap_{i\in S_r}P_i^c\right|\right] \label{eqn:ineqSeries9} \\
&\geq \left|\bigcap_{i\in S_r}P_i^c\right|. \label{eqn:ineqSeries10}
\end{align}
The above string of inequalities holds with probability tending to $1$ as $k\rightarrow \infty$.  They can be justified as follows:
\begin{itemize}
\item \eqref{eqn:defnBis} follows by  definition of $\{b_i^j\}$.
\item \eqref{eqn:ineqSeries2} follows since $\left\lfloor \frac{1}{r}x_i^* \right\rfloor \geq \frac{1}{r}x_i^*-1$ and $|S_j^c \cap \Gamma(S_{j-1})|\leq n$.
\item \eqref{eqn:ineqSeries3} follows from writing $\cup_{j=1}^{r}{S}_j^c\cap\Gamma({S}_{j-1})$ as $\left( \cup_{j=1}^{r}\partial({S}_{j-1}) \right) \backslash \left( S_r \cap {S}_{0}^{c} \right )$ and 
 expanding the sum.
\item \eqref{eqn:ineqSeries4} follows from \eqref{eqn:xinftybounds}.
\item \eqref{eqn:ineqSeries5}  is true since $|S_0^c \cap S_r|\leq n$.
\item \eqref{eqn:ineqSeries6} follows from \eqref{eqn:cqIneq}.
\item \eqref{eqn:ineqSeries7} follows from $\left|\partial({S}_{j-1})\right|\geq d$ by $d$ regularity and the assumption that $2\leq|{S}_r|\leq n-d$.
\item \eqref{eqn:ineqSeries8} follows from  our choice of $r$ given in \eqref{eqn:defnR}.
\item \eqref{eqn:ineqSeries9} follows since $\frac{c_q}{4}\mathbb{E}\left[\left|\bigcap_{i\in S_r}P_i^c\right|\right]  \geq nr(k \epsilon+1)$ with high probability
for $\epsilon$ sufficiently small.
\item \eqref{eqn:ineqSeries10} follows from  \eqref{eqn:expectationgtactual}.
\end{itemize}

Next, consider the case where $n-d\leq |{S}_r|\leq n-1$. Starting from \eqref{eqn:ineqSeries5}, we obtain:
\begin{align}
\sum_{j=1}^r \sum_{i\in{S}_j^c\cap\Gamma({S}_{j-1})} b_i^{(r+1-j)}  
&\geq \frac{1}{rd}\mathbb{E}[|P_1^c|]  \left(\sum_{j=1}^r\left|\partial({S}_{j-1})\right| -n \right) - nr(k \epsilon+1) \label{eqn:ineqSeries21}\\
&\geq \frac{1}{rd}\mathbb{E}[|P_1^c|]  \bigg(r(n-|S_r|) -n \bigg) - nr(k \epsilon+1) \label{eqn:ineqSeries22}\\
&=\mathbb{E}[|P_1^c|]  \left(\frac{n-|S_r|}{d} -\frac{n}{rd} \right) - nr(k \epsilon+1) \label{eqn:ineqSeries23}\\
&\geq\mathbb{E}[|P_1^c|]  \left(\frac{n-|S_r|}{n-1} -\frac{n}{rd} \right) - nr(k \epsilon+1) \label{eqn:ineqSeries24}\\
&\geq\mathbb{E}[|P_1^c|]  \left(\frac{\mathbb{E}\left[\left|\bigcap_{i\in S_r}P_i^c\right|\right] }{\mathbb{E}[|P_1^c|] } +\delta_q-\frac{n}{rd} \right) - nr(k \epsilon+1) \label{eqn:ineqSeries25}\\
&\geq\mathbb{E}[|P_1^c|]  \left(\frac{\mathbb{E}\left[\left|\bigcap_{i\in S_r}P_i^c\right|\right] }{\mathbb{E}[|P_1^c|] } +\frac{\delta_q}{2} \right) - nr(k \epsilon+1) \label{eqn:ineqSeries26}\\
&\geq \left(1+\frac{\delta_q}{4}\right) \mathbb{E}\left[\left|\bigcap_{i\in S_r}P_i^c\right|\right] \label{eqn:ineqSeries27}\\
&\geq \left|\bigcap_{i\in S_r}P_i^c\right|. \label{eqn:ineqSeries28}
\end{align}

The above string of inequalities holds with probability tending to $1$ as $k\rightarrow \infty$.  They can be justified as follows:
\begin{itemize}
\item \eqref{eqn:ineqSeries21} is simply \eqref{eqn:ineqSeries5} repeated for convenience.
\item \eqref{eqn:ineqSeries22} follows since $n-d\leq |{S}_r|\leq n-1$ and hence $d$-regularity implies that $|\partial(S_{j-1})|\geq(n-|S_r|)$.
\item \eqref{eqn:ineqSeries24} follows since $d\leq n-1$.
\item \eqref{eqn:ineqSeries25} follows from \eqref{eqn:cvxLemExp}.
\item \eqref{eqn:ineqSeries26} follows from from our definition of $r$ given in \eqref{eqn:defnR}.
\item \eqref{eqn:ineqSeries27} follows since $\frac{\delta_q}{4} \mathbb{E}[P_1^c]  \geq nr(k \epsilon+1)$ with high probability
for $\epsilon$ sufficiently small.
\item \eqref{eqn:ineqSeries28} follows from \eqref{eqn:expectationgtactual}.
\end{itemize}

Thus, we conclude that, for $\epsilon$ sufficiently small, the transmission schedule $\{b_i^j\}$ satisfies each of the inequalities defining $\mathcal{R}_r(\mathcal{T})$ with probability tending to $1$.  Since the number of such inequalities is finite, an application of the union bound completes the proof that $\{b_i^j\}\in\mathcal{R}_r(\mathcal{T})$ with probability tending to $1$ as $k\rightarrow \infty$.
\end{proof}

\subsection{Divisible Packets}
\begin{proof}[Proof of Theorem \ref{thm:URtsplit}]
Fix any $\epsilon>0$ and let $x^*$ be an optimal solution to LP \eqref{lp:dvisObj}.  Put $b_i=x_i^*+\epsilon$.  Note that $b_i$ is nonnegative.  This follows by considering the set $S\backslash \{i\}$ in the inequality constraint \eqref{lp:divisConst}, which implies $x_i^*\geq 0$.

Now, take an integer $r \geq \epsilon^{-1}n \max_{1\leq i\leq n} b_i$. If packets are $t$-divisible, we can find a transmission schedule $\{b_i^j\}$ such that $\frac{1}{r} b_i \leq b_i^j \leq \frac{1}{r} b_i + \frac{1}{t}$ for all $i\in[n], j\in[r]$.  

Thus, for any $({S}_0,\cdots,{S}_r) \in \mathcal{S}^{(r)}(\mathcal{G})$ we have the following string of inequalities:
\begin{align*}
\sum_{j=1}^r \sum_{i\in S_j^c \cap \Gamma(S_{j-1})} b_i^{(r+1-j)} &\geq \frac{1}{r} \sum_{j=1}^r \sum_{i\in S_j^c \cap \Gamma(S_{j-1})} b_i \\
&=\frac{1}{r} \sum_{j=1}^r \sum_{i\in \partial(S_{j-1}) } b_i - \frac{1}{r} \sum_{i\in S_0^c\cap S_r } b_i\\
&=\frac{1}{r} \sum_{j=1}^r \sum_{i\in \partial(S_{j-1}) } x_i^* + \frac{\epsilon}{r} \sum_{j=1}^r |\partial(S_{j-1}) |  - \frac{1}{r} \sum_{i\in S_0^c\cap S_r } b_i \\
&\geq \frac{1}{r} \sum_{j=1}^r \left| \bigcap_{i\in S_{j-1}} P_j^c \right| + \epsilon  - \frac{n}{r} \max_{1\leq i \leq n} b_i  \\
&\geq \left| \bigcap_{i\in S_{r}} P_j^c \right|.
\end{align*}
Hence, Theorem \ref{thm:generalFeasibleRegion} implies that the transmission schedule $\{b_i^j\}$ is sufficient to achieve universal recovery.  Noting that 
\begin{align*}
\sum_{i,j} b_i^j \leq \sum_{i=1}^n b_i + \frac{n r}{t} \leq \sum_{i=1}^n x^*_i +n\left(\frac{r}{t}+\epsilon\right)
\end{align*}
completes the proof of the theorem.
\end{proof}

\subsection{Secrecy Generation}
In this subsection, we prove Theorems \ref{thm:SK} and \ref{thm:PK}.  We again remark that our proofs can be seen as special cases of those in \cite{bib:CsiszarNarayanIT2004} which have been adapted for the problem at hand.  For notational convenience, define $P=\{p_1,\dots,p_k\}$.
We will require the following lemma.  

\begin{lemma}\label{lem:SKPK}
Given a packet distribution $P_1,\dots,P_n$, let $K$ be a secret-key achievable with communication $\mathbf{F}$.  Then the following holds:
\begin{align}
H(K|\mathbf{F})=H(P)-\sum_{i=1}^n x_i.
\end{align}
for some vector $x=(x_1,\dots,x_n)$ which is feasible for the following ILP:
\begin{align}
\mbox{minimize~~} &\sum_{i =1}^n x_i \label{eqn:ILPforSecrecyProof}\\
\mbox{subject to:~~} & \sum_{i\in S} x_i \geq \left| \bigcap_{i\in S^c} P_i^c \right| \mbox{~~for each nonempty $S \subset V$}. \label{eqn:ILPineqConstSecrecy}
\end{align}

Moreover, if $K$ is a PK (with respect to a set $D$) and each node $i\in D$ transmits its respective set of packets $P_i$, then
\begin{align}
H(K|\mathbf{F})=H(P|P_D)-\sum_{i\in V\backslash D} x_i.
\end{align}
for some vector $x=(x_1,\dots,x_n)$ which is feasible for the ILP:
\begin{align}
\mbox{minimize~~} &\sum_{i \in V\backslash D} x_i \label{eqn:ILPforSecrecyProofD}\\
\mbox{subject to:~~} & \sum_{i\in S} x_i \geq \left| \bigcap_{i\in S^c} P_i  ^c \right| \mbox{~~for each nonempty $S \subset V\backslash D$}. \label{eqn:ILPineqConstSecrecyD}
\end{align} 
\end{lemma}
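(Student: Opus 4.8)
The plan is to reduce the lemma to the universal-recovery (``communication for omniscience'') converse already established in Theorem~\ref{thm:generalFeasibleRegion}. The first observation is that the constraint set \eqref{eqn:ILPineqConstSecrecy} is nothing but the cut-set region \eqref{eqn:CliqueCutset} re-indexed: replacing $S$ by $S^c$ turns $\sum_{i\in S}x_i\geq \big|\bigcap_{i\in S^c}P_i^c\big|$ into $\sum_{i\in S^c}x_i\geq\big|\bigcap_{i\in S}P_i^c\big|$, so a vector $x$ is feasible for \eqref{eqn:ILPforSecrecyProof}--\eqref{eqn:ILPineqConstSecrecy} exactly when it is a valid vector of per-node transmission counts for some scheme achieving universal recovery on the fully connected network $\mathcal{T}$; in particular the optimal value of the ILP is $M^*(\mathcal{T})$. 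Thus it suffices to exhibit a feasible $x$ whose coordinates sum to $H(P)-H(K\mid\mathbf{F})$.

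Second, I would record the entropy identity that pins down the required sum. Since every transmission is a deterministic function of the packets (so $H(\mathbf{F}\mid P)=0$), since $K$ is by definition a function of $P$ (so $H(K\mid P)=0$), and since perfect secrecy \eqref{eqn:secrecy} gives $H(K\mid\mathbf{F})=H(K)$, the chain rule yields $H(P)=H(\mathbf{F})+H(P\mid\mathbf{F})=H(\mathbf{F})+H(K\mid\mathbf{F})+H(P\mid K,\mathbf{F})$. Hence $H(K\mid\mathbf{F})=H(P)-\big(H(\mathbf{F})+H(P\mid K,\mathbf{F})\big)$, so the target is a feasible $x$ with $\sum_i x_i=H(\mathbf{F})+H(P\mid K,\mathbf{F})$.

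The construction I would use is to extend the secret-key protocol into a universal-recovery protocol. After $\mathbf{F}$ every node can compute $K$, and I would append a further round of communication $\mathbf{G}$ that discloses all of the residual information about $P$ except the key itself; concretely, $\mathbf{G}$ is chosen so that (i) it is causal (each node transmits only functions of its own packets and the public history), (ii) $H(P\mid \mathbf{F},\mathbf{G},K)=0$, and (iii) $I(\mathbf{G};K\mid\mathbf{F})=0$. Because everyone already knows $K$, condition (ii) forces every node to become omniscient, so $(\mathbf{F},\mathbf{G})$ achieves universal recovery; conditions (ii)--(iii) together give $H(\mathbf{G}\mid\mathbf{F})=H(P\mid K,\mathbf{F})$. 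Taking $x_i$ to be the number of (non-redundant) transmissions made by node $i$ across $(\mathbf{F},\mathbf{G})$, the per-node counts satisfy \eqref{eqn:CliqueCutset} by the necessity direction of Theorem~\ref{thm:generalFeasibleRegion}, so $x$ is feasible, while $\sum_i x_i=H(\mathbf{F})+H(\mathbf{G}\mid\mathbf{F})=H(\mathbf{F})+H(P\mid K,\mathbf{F})=H(P)-H(K\mid\mathbf{F})$, which is exactly the claimed identity.

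The main obstacle is producing the completion $\mathbf{G}$ satisfying (i)--(iii) simultaneously and, crucially, with no wasted transmissions, so that the sum is exactly $H(P)-H(K\mid\mathbf{F})$ rather than merely an upper bound. Over the finite field $\mathbb{F}$ with linear coding this is a linear-algebra exercise: $\mathbf{F}$ reveals a subspace of linear functionals of $P$, secrecy means the functional(s) defining $K$ are independent of it, and $\mathbf{G}$ reveals enough further functionals to span, together with those of $\mathbf{F}$ and $K$, the whole space $\mathbb{F}^k$; realizability of each revealed combination by a node holding the relevant packets is underwritten by the achievability of universal recovery. Finally, I would dispatch the private-key case by conditioning on $P_D$: since $P_D$ is available to the eavesdropper, the nodes in $D$ first broadcast their packets, after which replacing every packet set by $P_i\setminus P_D$ and every entropy $H(\cdot)$ by $H(\cdot\mid P_D)$ reduces \eqref{eqn:secrecyPK} and the private-key requirement to the secret-key problem on the fully connected network $\mathcal{T}_D$ over $V\setminus D$; the argument above then gives $H(K\mid\mathbf{F})=H(P\mid P_D)-\sum_{i\in V\setminus D}x_i$ with $x$ feasible for \eqref{eqn:ILPineqConstSecrecyD}.
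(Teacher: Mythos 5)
Your reduction has a genuine gap at exactly the point you flag as ``the main obstacle,'' and it is not a routine one. Your plan needs per-node values $x_i$ that simultaneously (a) inherit feasibility of the cut-set inequalities from the necessity direction of Theorem~\ref{thm:generalFeasibleRegion}, and (b) sum to exactly $H(\mathbf{F})+H(P\,|\,K,\mathbf{F})$. These pull in opposite directions. The necessity direction of Theorem~\ref{thm:generalFeasibleRegion} applies to the \emph{actual} integer transmission counts (each transmission is one symbol of $\mathbb{F}$), and with those counts one only gets $\sum_i x_i \geq H(\mathbf{F})+H(\mathbf{G}\,|\,\mathbf{F})$, with strict inequality whenever some transmission of $\mathbf{F}$ carries less than a full packet of fresh entropy given the history --- and $\mathbf{F}$ is supplied by an \emph{arbitrary} secret-key protocol, so you cannot assume its transmissions are conditionally uniform. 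If instead you shrink $x_i$ to ``non-redundant'' counts (or conditional entropies) so that the sum comes out right, those smaller values no longer inherit feasibility: undercounting a node's transmissions can violate a cut-set inequality, and Theorem~\ref{thm:generalFeasibleRegion} says nothing about the reduced vector. Moreover, your proposed resolution via linear algebra quietly assumes $\mathbf{F}$ and $K$ are linear functions of $P$; but Lemma~\ref{lem:SKPK} is a converse statement, quantified over \emph{every} achievable key and \emph{every} communication (it feeds the converse of Theorem~\ref{thm:SK}), so restricting to linear coding is not available, and even in the linear case the causal, per-node realizability of a completion $\mathbf{G}$ whose span avoids the key functionals is asserted rather than proved.

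The paper's proof sidesteps the construction of $\mathbf{G}$ entirely. It defines $x_i := H(F_i\,|\,F_{[1,i-1]}) + H(P_i\,|\,\mathbf{F},K,P_{[1,i-1]})$ --- real numbers, with no protocol attached --- so that the chain-rule expansion of $H(P)=H(\mathbf{F},K,P_1,\dots,P_n)$ yields the identity $H(K\,|\,\mathbf{F})=H(P)-\sum_i x_i$ exactly, and it then verifies the inequalities \eqref{eqn:ILPineqConstSecrecy} directly: for nonempty $S\subset V$, $\bigl|\bigcap_{i\in S^c}P_i^c\bigr| = H(P_S\,|\,P_{S^c}) = H(\mathbf{F},K,P_S\,|\,P_{S^c}) \leq \sum_{i\in S}x_i$, using only that conditioning reduces entropy, that $K$ is recoverable from $(\mathbf{F},P_{S^c})$ for $S\neq V$, and that $F_i$ is a function of $P_i$ for non-interactive schemes. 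The second term of $x_i$ plays precisely the role you intend for $\mathbf{G}$, but purely as entropy bookkeeping, so no exact, causal, secrecy-preserving completion ever needs to exist. Your entropy identity $H(K\,|\,\mathbf{F})=H(P)-H(\mathbf{F})-H(P\,|\,K,\mathbf{F})$ is correct and matches the paper's accounting, and your private-key reduction (nodes in $D$ broadcast $P_D$, then condition everything on $P_D$ and pass to $\mathcal{T}_D$) is exactly the paper's; to repair the proof you should replace the protocol-completion step with a direct feasibility verification of this kind.
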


\begin{remark}
We remark that \eqref{eqn:ILPineqConstSecrecy} and \eqref{eqn:ILPineqConstSecrecyD} are necessary and sufficient conditions for achieving universal recovery in the networks $\mathcal{T}$ and $\mathcal{T}_D$ considered in Theorems \ref{thm:SK} and \ref{thm:PK}, respectively.  Thus, the optimal values of ILPs \eqref{eqn:ILPforSecrecyProof} and \eqref{eqn:ILPforSecrecyProofD} are equal to $M^*(\mathcal{T})$ and $M^*(\mathcal{T}_D)$, respectively.
\end{remark}

\begin{proof}
We assume throughout that all entropies are with respect to the base-$|\mathbb{F}|$ logarithm (i.e., information is measured in packets). For this and the following proofs, let $\mathbf{F}=(F_1,\dots,F_n)$ and $F_{[1,i]}=(F_1,\dots,F_i)$, where $F_i$ denotes the transmissions made by node $i$.  For simplicity, our proof does not take into account interactive communication, but can be modified to do so.  Allowing interactive communication does not change the results.  See \cite{bib:CsiszarNarayanIT2004} for details.  

Since $K$ and $\mathbf{F}$ are functions of $P$:
\begin{align}
H(P)&=H(\mathbf{F},K,P_1,\dots ,P_n)\\
&=\sum_{i=1}^n H(F_i|F_{[1,i-1]})+H(K|\mathbf{F})+\sum_{i=1}^n H(P_i|\mathbf{F},K,P_{[1,i-1]}).
\end{align}
Set $x_i=H(F_i|F_{[1,i-1]})+H(P_i|\mathbf{F},K,P_{[1,i-1]})$.  Then, the substituting $x_i$ into the above equation yields:
\begin{align}
H(K|\mathbf{F})=H(P)-\sum_{i=1}^n x_i.
\end{align}
To show that $x=(x_1,\dots, x_n)$ is a feasible vector for ILP \eqref{eqn:ILPforSecrecyProof}, we write:
\begin{align}
\left| \bigcap_{i\in  S^c} P_i^c \right| &= H(P_S|P_{S^c})\\
&=H(\mathbf{F},K,P_S|P_{S^c})\\
&=\sum_{i=1}^n H(F_i|F_{[1,i-1]},P_{S^c})+H(K|\mathbf{F},P_{S^c})+\sum_{i\in S}H(P_i|\mathbf{F},K,P_{[1,i-1]},P_{S^c\cap[i+1,n]})\\
&\leq \sum_{i\in S} H(F_i|F_{[1,i-1]}) + \sum_{i\in S} H(P_i|\mathbf{F},K,P_{[1,i-1]})\\
&=\sum_{i\in S} x_i.
\end{align}
In the above inequality, we used the fact that conditioning reduces entropy, the fact that $K$ is a function of $(\mathbf{F},P_{S^c})$ for any $S\neq V$, and the fact that $F_i$ is a function of $P_i$ (by the assumption that communication is not interactive).

To prove the second part of the lemma, we can assume $D=\{1,\dots ,\ell\}$.  The assumption that each node $i$ in $D$ transmits all of the packets in $P_i$ implies $F_i=P_i$.  Thus, for $i\in D$ we have $x_i=H(P_i|P_{[1,i-1]})$.  Repeating the above argument, we obtain
\begin{align}
H(K|\mathbf{F})&=H(P)-H(P_{D})-\sum_{i\in V\backslash D} x_i\\
&=H(P|P_{D})-\sum_{i\in V \backslash D} x_i,
\end{align}
completing the proof of the lemma.
\end{proof}

\begin{proof}[Proof of Theorem \ref{thm:SK}]
\emph{Converse Part}.  Suppose $K$ is a secret-key achievable with communication $\mathbf{F}$.  Then, by definition of a SK and Lemma \ref{lem:SKPK} we have
\begin{align}
C_{SK}=H(K)=H(K|\mathbf{F})= H(P)-\sum_{i=1}^n x_i \leq H(P)-M^*(\mathcal{T})=k-M^*(\mathcal{T}).
\end{align}

\emph{Achievability Part}.  By definition, universal recovery can be achieved with $M^*(\mathcal{T})$ transmissions.  Moreover, the communication $\mathbf{F}$ can be generated as a linear function of $P$ (see the proof of Theorem \ref{thm:generalFeasibleRegion} and \cite{bib:JaggiIT2005}).  Denote this linear transformation by $\mathbf{F}=\mathcal{L} P$. Note that $\mathcal{L}$ only depends on the indices of the packets available to each node, not the values of the packets themselves (see \cite{bib:JaggiIT2005}). Let $\mathcal{P}_{\mathbf{F}}=\{P':\mathcal{L}P'=\mathbf{F}\}$ be the set of all packet distributions which generate $\mathbf{F}$.  

By our assumption that the packets are i.i.d.~uniform from $\mathbb{F}$, each $P'\in\mathcal{P}_{\mathbf{F}}$ is equally likely given $\mathbf{F}$ was observed.  Since $\mathbf{F}$ has dimension $M^*(\mathcal{T})$, $|\mathcal{P}_{\mathbf{F}}|=\mathbb{F}^{k-M^*(\mathcal{T})}$.  Thus, we can set $\mathcal{K}=\mathbb{F}^{k-M^*(\mathcal{T})}$ and label each $P'\in\mathcal{P}_{\mathbf{F}}$ with a unique element in $\mathcal{K}$.  The label for the actual $P$ (which is reconstructed by all nodes after observing $\mathbf{F}$) is the secret-key.  Thus, $C_{SK}\geq k-M^*(\mathcal{T})$.  

We remark that this labeling can be done efficiently by an appropriate linear transformation mapping $P$ to $K$.
\end{proof}

\begin{proof}[Proof of Theorem \ref{thm:PK}]
\emph{Converse Part}. Suppose $K$ is a private-key.  Then, by definition of a PK and Lemma \ref{lem:SKPK},
\begin{align*}
C_{PK}=H(K)=H(K|\mathbf{F})&= H(P|P_D)-\sum_{i\in V\backslash D} x_i \\
&\leq H(P|P_D)-M^*(\mathcal{T}_D)=(k-|P_D|)-M^*(\mathcal{T}_D).
\end{align*}

\vspace{12pt}
\emph{Achievability Part}.  Let each node $i\in D$ transmit $P_i$ so that we can update $P_j\leftarrow P_j \cup P_D$ for each $j\in V \backslash D$.  Now, consider the universal recovery problem for only the nodes in $V \backslash D$.  $M^*(\mathcal{T}_D)$ is the minimum number of transmissions required among the nodes in $V\backslash D$ so that each node in $V\backslash D$ recovers $P$.  At this point, the achievability proof proceeds identically to the SK case.
\end{proof}

\section{Concluding Remarks} \label{sec:URConclusion}
In this paper, we derive necessary and sufficient conditions for achieving universal recovery in an arbitrarily connected network.  For the case when the network is fully connected, we provide an efficient algorithm based on submodular optimization which efficiently solves the cooperative problem.  This algorithm and its derivation yield tight concentration results for the case when packets are randomly distributed.  Moreover, concentration results are provided when the network is $d$-regular and packets are distributed randomly.  If packets are divisible, we prove that the traditional cut-set bounds are achievable.  As a consequence of this and the  concentration results, we show that splitting packets does not typically provide a significant benefit when the network is $d$-regular.  Finally, we discuss an application to secrecy generation in the presence of an eavesdropper.  We demonstrate that our submodular algorithm can be used to generate the maximum amount of secrecy in an efficient manner.

It is conceivable that the coded cooperative data exchange problem can be solved (or approximated) in polynomial time if the network is $d$-regular, but packets aren't necessarily randomly distributed.  This is one possible direction for future work.

\section*{Acknowledgement}
The authors would like to thank Kent Benson, Alex Sprintson, Pavan Datta, and Chi-Wen Su for the helpful conversations and suggestions which led to this paper.

\appendices

\section{An Efficiently Solvable Integer Linear Program}\label{sec:Algo}
In this appendix, we introduce a special ILP and provide an efficient algorithm for solving it. This algorithm can be used to efficiently solve the  cooperative data exchange problem when the underlying graph is fully-connected. We begin by introducing some notation\footnote{We attempt to keep the notation generic in order to emphasize that the results in this appendix are not restricted to the context of the {cooperative data exchange} problem.}.

Let $E=\{1,\dots,n\}$ be a finite set with $n$ elements.  We denote the family of all subsets of $E$ by $2^E$.  We frequently use the compact notation $E\backslash U$ and $U+i$ to denote the sets $E\cap U^c$ and $U\cup\{i\}$ respectively.  For a vector $x=(x_1,\dots,x_n)\in \mathbb{R}^n$, define the corresponding functional $x:2^E\rightarrow \mathbb{R}$ as: \begin{align}
x(U):=\sum_{i\in U} x_i, \mbox{~for~} U\subseteq E.
\end{align}

Throughout this section, we let $\mathcal{F} = 2^E -\{\emptyset,E\}$ denote the family of nonempty proper subsets of $E$.  Let $\mathcal{B}=\{B_1,\dots,B_n\}$.  No special structure is assumed for the $B_i$'s except that they are finite.

With the above notation established, we consider the following Integer Linear Program (ILP) in this section:
\begin{align}
\mbox{minimize}\left\{\sum_{i\in E} w_i x_i : x(U) \geq \left| \bigcap_{i\in E \backslash U} B_i \right| , \forall ~U\in\mathcal{F} , x_i \in \mathbb{Z} \right\}.\label{eqn:generalILP}
\end{align}
It is clear that any algorithm that efficiently solves this ILP also solves ILP \eqref{ilp:cliqueNet} by putting $B_i\leftarrow P_i^c$ and $w=\mathds{1}$.

\subsection{Submodular Optimization}
Our algorithm for solving ILP \eqref{eqn:generalILP} relies heavily on submodular function optimization.  To this end, we give a very brief introduction to submodular functions here.

A function $g:2^E\rightarrow \mathbb{R}$ is said to be submodular if, for all $X,Y\in 2^E$,
\begin{align}
g(X)+g(Y)\geq g(X \cap Y)+g(X \cup Y).
\end{align}
Over the past three decades, submodular function optimization has received a significant amount of attention.
Notably, several polynomial time algorithms have been developed for solving  the Submodular Function Minimization (SFM) problem
\begin{align}
\min\left\{g(U):U\subseteq E \right\}.
\end{align}
We refer the reader to \cite{bib:McCormick2005,bib:Fujishige2010,bib:Schrijver2003} for a comprehensive overview of SFM and known algorithms.  As we will demonstrate, we can solve ILP \eqref{eqn:generalILP} via an algorithm that iteratively calls a SFM routine.  The most notable feature of SFM algorithms is their ability to solve problems with exponentially many constraints in polynomial time.  One of the key drawbacks of SFM is that the problem formulation is very specific.  Namely, SFM routines typically require the function $g$ to be submodular on \emph{all} subsets of the set E.

\subsection{The Algorithm}
We begin by developing an algorithm to solve an equality constrained version of ILP \eqref{eqn:generalILP}. We will remark on the general case at the conclusion of this section.  To this end, let $M$ be a positive integer and consider the following ILP:
\begin{align}
\mbox{minimize~~~~} w^Tx& \label{eqn:ILP}\\
\mbox{subject to:~} x(U) &\geq \left| \bigcap_{i\in E \backslash U} B_i \right| \mbox{~for all~} U\in\mathcal{F},\mbox{~and} \label{eqn:ILPeq1}\\
x(E) &= M.\label{eqn:ILPeq2}
\end{align}

\begin{remark}We assume $w_i\geq 0$, else in the case without the equality constraint we could allow the corresponding $x_i\rightarrow +\infty$ and the problem is unbounded from below.
\end{remark}

\begin{pseudocode}[framebox]{SolveILP}{\mathcal{B},E,M,w}\label{alg:TopLevel}
\COMMENT{Define $f:2^E\rightarrow \mathbb{R}$ as in equation \eqref{eqn:defineF}.}\\
x \leftarrow \CALL{ComputePotentialX}{f,M,w}\\
\IF \CALL{CheckFeasible}{f,x}
\THEN \RETURN{x}
\ELSE \RETURN{\mbox{Problem Infeasible}}
\end{pseudocode}

\begin{theorem}
Algorithm \ref{alg:TopLevel} solves the equality constrained ILP \eqref{eqn:ILP} in polynomial time.  If feasible, Algorithm \ref{alg:TopLevel} returns an optimal $x$.  If infeasible, Algorithm \ref{alg:TopLevel} returns ``Problem Infeasible".
\end{theorem}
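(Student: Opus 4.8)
The plan is to recognize the equality-constrained ILP \eqref{eqn:ILP} as the minimization of a nonnegative linear functional over the integer points of the \emph{base polytope} of a suitable integer submodular function, and then to solve it by the greedy algorithm of Edmonds together with a single submodular function minimization.

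First I would reformulate the constraints. Writing $T = E\setminus U$ and substituting the equality $x(E)=M$ of \eqref{eqn:ILPeq2} into \eqref{eqn:ILPeq1}, every inequality becomes
\[
x(T) \le M - \Bigl| \bigcap_{i\in T} B_i \Bigr| =: f(T), \qquad T\in\mathcal{F},
\]
together with $x(E)=M$. Thus the feasible region is exactly $\{x : x(E)=f(E),\ x(U)\le f(U)\ \forall U\subsetneq E\}\cap\mathbb{Z}^n$, where we set $f(E):=M$ and $f(\emptyset):=0$; this is (the integer points of) the base polytope $B(f)$. The key structural fact to establish is that $f$ is submodular. I would prove this by an element-indicator decomposition: for each element $e$ of the ground universe let $A_e=\{i\in E: e\in B_i\}$, so that $\bigl|\bigcap_{i\in U}B_i\bigr| = \sum_e \mathds{1}[U\subseteq A_e]$; each map $U\mapsto \mathds{1}[U\subseteq A_e]$ is supermodular, hence $U\mapsto \bigl|\bigcap_{i\in U}B_i\bigr|$ is supermodular and $f$ is submodular. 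Since the $B_i$ are finite, $f$ is integer-valued.

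With this reformulation, the subroutine \textsc{ComputePotentialX} is (I expect) Edmonds' greedy algorithm: order $E=\{i_1,\dots,i_n\}$ by nondecreasing weight $w$, set $U_j=\{i_1,\dots,i_j\}$, and assign $x_{i_j}=f(U_j)-f(U_{j-1})$. Standard polymatroid theory (see \cite{bib:Fujishige2010,bib:Schrijver2003}) then gives that this vertex attains $\min\{w^\top x : x\in B(f)\}$ over the real base polytope. Because $f$ is integer-valued the increments $f(U_j)-f(U_{j-1})$ are integers, so the greedy vertex is integral; it is therefore optimal for the LP relaxation and, being integral, optimal for the ILP. This step costs $n$ evaluations of $f$, each a polynomial intersection-cardinality computation, and so runs in polynomial time.

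The remaining and most delicate step is the feasibility test \textsc{CheckFeasible}. Because the level $x(E)=M$ is imposed exogenously, the base polytope at that level can be empty — precisely when $M$ is too small relative to the right-hand sides — and then the greedy vertex, while always tight on the chain $U_0\subset\cdots\subset U_n$ by construction, will violate $x(U)\le f(U)$ for some $U$. I would implement \textsc{CheckFeasible} as a single submodular function minimization of $U\mapsto f(U)-x(U)$ over $2^E$, which is polynomial by the SFM algorithms of \cite{bib:McCormick2005,bib:Fujishige2010,bib:Schrijver2003}; the greedy $x$ is feasible iff this minimum is nonnegative. The main obstacle is proving the equivalence \emph{greedy vertex feasible $\iff$ ILP feasible}: I must show that a negative minimum exhibits a violated constraint certifying that \emph{no} real (hence no integer) point satisfies all of \eqref{eqn:ILPeq1}--\eqref{eqn:ILPeq2}, while a nonnegative minimum certifies that the integral greedy point lies in $B(f)$ and is an optimal ILP solution. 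Care is also required in fixing the boundary values $f(\emptyset)=0$ and $f(E)=M$ so that $B(f)$ matches the feasible set of \eqref{eqn:ILP} exactly; reconciling the ``natural'' value $M-\bigl|\bigcap_{i\in E}B_i\bigr|$ with the imposed $f(E)=M$ is exactly what this feasibility test resolves.
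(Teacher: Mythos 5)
There is a genuine gap, and it sits exactly where you flagged ``care is required'': the function $f(T)=M-\bigl|\bigcap_{i\in T}B_i\bigr|$ with the imposed boundary values $f(\emptyset):=0$ and $f(E):=M$ is \emph{not} submodular on $2^E$, and everything downstream (Edmonds' greedy, the single SFM feasibility test, and the equivalence ``greedy vertex feasible $\iff$ ILP feasible'') collapses with it. Your indicator decomposition $\bigl|\bigcap_{i\in U}B_i\bigr|=\sum_e \mathds{1}[U\subseteq A_e]$ proves supermodularity only with the \emph{natural} boundary values: at $U=\emptyset$ every indicator equals $1$, so the natural value is the size of the packet universe, not $0$; and the natural top value is $M-\bigl|\bigcap_{i\in E}B_i\bigr|$, not $M$. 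Truncating to $f(\emptyset)=0$ destroys submodularity for disjoint pairs whenever $M<\bigl|\bigcap_{X}B_i\cup\bigcap_{Y}B_i\bigr|$, which is exactly the regime near the feasibility threshold. There is also an internal contradiction in your plan: if $f$ were genuinely submodular with $f(\emptyset)=0$, Edmonds' theorem would place the greedy vertex in $B(f)$ for \emph{every} $M$, so \textsc{CheckFeasible} could never fail --- yet ILP \eqref{eqn:ILP} is plainly infeasible for small $M$ (the constraints $x(\{i\})\geq\bigl|\bigcap_{j\neq i}B_j\bigr|\geq 0$ and $x(E\setminus\{i\})\geq|B_i|$ force $M\geq\max_i|B_i|$). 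A concrete counterexample: $E=\{1,2,3\}$, $B_1=\{a\}$, $B_2=\{b\}$ with $a\neq b$, $B_3=\emptyset$, $M=1$, $w_1<w_2<w_3$. The constraints \eqref{eqn:ILPeq1}--\eqref{eqn:ILPeq2} have the unique feasible point $(0,0,1)$, but your greedy in nondecreasing-weight order gives $x_1=f(\{1\})=0$, $x_2=f(\{1,2\})-f(\{1\})=1$, $x_3=f(E)-f(\{1,2\})=0$, which violates $x(\{2\})\leq f(\{2\})=M-|B_2|=0$. So the greedy vertex is infeasible while the ILP is feasible, refuting the equivalence you deferred as ``the main obstacle''; moreover the function $f-x$ you propose to hand to an SFM oracle is not submodular (e.g., $f(\{1\})+f(\{2\})=0<f(\emptyset)+f(\{1,2\})=1$ here), so the oracle's output is not trustworthy either.

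The paper's proof is built precisely to route around this: its Lemma establishes only \emph{crossing} submodularity of $f$ on $\mathcal{F}$, and it explicitly notes that generic crossing-submodular optimization would need a lengthy reduction it wishes to avoid. Accordingly, \textsc{ComputePotentialX} (Algorithm \ref{alg:ComputeX}) is \emph{not} the greedy chain of marginals: for $i=n$ down to $2$ it makes a full SFM call over the restricted ground set $\{i,\dots,n\}$, setting $x_i\leftarrow\min\{f(U): i\in U\subseteq\{i,\dots,n\}\}$ --- a saturation step that is valid because every set in that family contains $i$ and every union stays proper, so crossing pairs are the only ones that arise --- and only the constraints containing the heaviest element $1$ remain for \textsc{CheckFeasible} (where the convention $f(E)=0$ for the paper's shifted $f$ is what makes the final SFM legitimate). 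Correctness then rests on three arguments you would need to reconstruct: feasibility of a returned $x$ from the per-coordinate minima; optimality via a laminar family of tight sets (built from the minimizers $U_k$ using the lemma that tight crossing constraints have tight intersections/unions) together with an explicit complementary-slackness dual $\pi_{L_i}=w_j-w_i$, $\pi_E=-w_1$; and infeasibility via an unbounded dual ray supported on the partition induced by the laminar family. None of these steps is supplied by, or recoverable from, the base-polytope/greedy framework as you have set it up.
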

\begin{proof}
The proof is accomplished in three steps:
\begin{enumerate}
\item First, we show that if our algorithm returns an $x$, it is feasible.
\item Second, we prove that if a returned $x$ is feasible, it is also optimal.
\item Finally, we show that if our algorithm does not return an $x$, then the problem is infeasible.
\end{enumerate}
Each step is given its own subsection.
\end{proof}

Algorithm \ref{alg:TopLevel} relies on three basic subroutines given below:

\begin{pseudocode}[framebox]{ComputePotentialX}{f,M,w} \label{alg:ComputeX}
\COMMENT{If feasible, returns $x$ satisfying \eqref{eqn:ILPeq1} and \eqref{eqn:ILPeq2} that minimizes $w^T x$.}\\
\COMMENT{Order elements of $E$ so that $w_{1}\geq w_{2} \geq \dots \geq w_{n}$.}\\

\FOR i\GETS n \TO 2 \DO
\BEGIN
\COMMENT{Define $f_{i}(U):=f(U+i)$ for $U\subseteq \{i,\dots,n\}$.}\\
x_i \GETS \CALL{SFM}{f_{i},\{i,\dots,n\}}\\
\END \\
x_1 \GETS M-\sum_{i=2}^n x_i\\
\RETURN{x}
\end{pseudocode}

\begin{pseudocode}[framebox]{CheckFeasible}{f,x}\label{alg:checkFeasible}
\COMMENT{Check if $x(U)\leq f(U)$ for all $U \in \mathcal{F}$ with $1 \in U$.}\\
\COMMENT{Define $f_{1}(U):=f(U+1)$ for $U\subseteq E$.}\\
\IF \CALL{SFM}{f_{1},E} < 0 ~ \THEN \RETURN \FALSE
\ELSE \RETURN \TRUE
\end{pseudocode}

\begin{pseudocode}[framebox]{SFM}{f,V}
\COMMENT{Minimize submodular function $f$ over groundset $V$. See \cite{bib:McCormick2005} for details.}\\
v \GETS  \min \left\{f(U) : U \subseteq V\right\}\\
\RETURN{v}
\end{pseudocode}

\subsection{Feasibility of a Returned $x$}
In this section, we prove that if Algorithm \ref{alg:TopLevel} returns a vector $x$, it must be feasible.  We begin with some definitions.

\begin{definition}
A pair of sets $X,Y \subset E$ is called \textbf{crossing} if $X\cap Y \neq \emptyset$ and $X\cup Y \neq E$.
\end{definition}

\begin{definition}
A function $g:2^E \rightarrow \mathbb{R}$ is \textbf{crossing submodular} if \[g(X)+g(Y)\geq g(X\cap Y)+g(X\cup Y)\] for $X,Y$ crossing.
\end{definition}

We remark that minimization of crossing submodular functions is well established, however it involves a lengthy reduction to a standard submodular optimization problem.  However, the crossing family $\mathcal{F}$ admits a straightforward algorithm, which is what we provide in Algorithm \ref{alg:TopLevel}.  We refer the reader to \cite{bib:Schrijver2003} for complete details on the general case.

For $M$ a positive integer, define
\begin{align}
f(U):=M-\left|\bigcap_{i\in U} B_i \right|-x(U),\mbox{~for~} U\in \mathcal{F}.\label{eqn:defineF}
\end{align}

\begin{lemma}
The function $f$ is crossing submodular on $\mathcal{F}$.
\end{lemma}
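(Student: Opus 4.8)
The plan is to split $f$ into its three additive pieces and verify the crossing-submodular inequality on each separately. Write $f(U) = M - h(U) - x(U)$, where $h(U) := \left|\bigcap_{i \in U} B_i\right|$. For any crossing pair $X,Y$ we have $X \cap Y \neq \emptyset$ and $X \cup Y \neq E$, so all four sets $X$, $Y$, $X\cap Y$, $X\cup Y$ lie in $\mathcal{F}$ and every term below is well defined. The constant $M$ contributes $M+M$ to both sides and cancels. The functional $x$ is \emph{modular}: since $x(U) = \sum_{i\in U} x_i$, each index $i$ contributes identically to $x(X)+x(Y)$ and to $x(X\cap Y)+x(X\cup Y)$, so these two sums are equal. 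Hence the crossing-submodular inequality for $f$ reduces to showing that $h$ is \emph{supermodular}, namely
\begin{align}
h(X) + h(Y) \leq h(X\cap Y) + h(X\cup Y), \notag
\end{align}
where the sign is reversed because $h$ enters $f$ with a minus.

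To establish supermodularity of $h$, I would set $A := \bigcap_{i\in X} B_i$ and $C := \bigcap_{i\in Y} B_i$ and make two observations about the index families. First, since $X\cup Y$ indexes the union of the two families, $\bigcap_{i\in X\cup Y} B_i = A\cap C$, so $h(X\cup Y) = |A\cap C|$. Second, because $X\cap Y \subseteq X$ and $X\cap Y \subseteq Y$, intersecting over the smaller index set can only enlarge the result, giving $\bigcap_{i\in X\cap Y} B_i \supseteq A$ and $\bigcap_{i\in X\cap Y} B_i \supseteq C$, hence $\bigcap_{i\in X\cap Y} B_i \supseteq A\cup C$ and $h(X\cap Y) \geq |A\cup C|$. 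Combining these with the elementary inclusion--exclusion identity $|A|+|C| = |A\cap C|+|A\cup C|$ yields
\begin{align}
h(X\cap Y) + h(X\cup Y) \geq |A\cup C| + |A\cap C| = |A|+|C| = h(X)+h(Y), \notag
\end{align}
which is precisely the desired supermodularity, and the lemma follows.

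I expect the argument to be essentially routine; the only delicate point is the bookkeeping on the index families, in particular verifying the containment $\bigcap_{i\in X\cap Y} B_i \supseteq \left(\bigcap_{i\in X} B_i\right)\cup\left(\bigcap_{i\in Y} B_i\right)$ and not the (false) reverse inclusion, together with checking that crossing guarantees $X\cap Y\neq\emptyset$ so that $h(X\cap Y)$ is a genuine intersection of the $B_i$. It is worth noting that supermodularity of $h$ in fact holds for \emph{all} pairs $X,Y$; the crossing hypothesis is used only to keep all four arguments inside $\mathcal{F}$, so no further work is needed to handle the restriction to the crossing family.
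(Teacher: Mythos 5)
Your proof is correct and takes essentially the same route as the paper's: the paper's chain of (in)equalities uses exactly your two ingredients---modularity of $x(\cdot)$ and supermodularity of $U \mapsto \left|\bigcap_{i\in U} B_i\right|$---though the paper asserts the latter inequality without justification, whereas you supply it via the inclusion--exclusion argument $|A|+|C| = |A\cap C|+|A\cup C|$. Your closing remark is also accurate: the crossing hypothesis serves only to keep all four sets inside $\mathcal{F}$ (in particular $X\cap Y \neq \emptyset$, so the intersection defining $h(X\cap Y)$ is over a nonempty index family).
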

\begin{proof}
For $X,Y\in \mathcal{F}$ crossing:
\begin{align*}
f(X)+f(Y)&=M-\left|\bigcap_{i\in X} B_i \right|-x(X)+M-\left|\bigcap_{i\in Y} B_i \right|-x(Y)\\
&=M-\left|\bigcap_{i\in X} B_i \right|-x(X\cap Y)+M-\left|\bigcap_{i\in Y} B_i \right|-x(X\cup Y)\\
&\geq M-\left|\bigcap_{i\in X\cap Y} B_i \right|-x(X\cap Y)+M-\left|\bigcap_{i\in X \cup Y} B_i \right|-x(X\cup Y)\\
&=f(X \cap Y)+f(X \cup Y).
\end{align*}
\end{proof}

Observe that, with $f$ defined as above, the constraints of ILP \eqref{eqn:ILP} can be equivalently written as:
\begin{align}
f(U)=M-\left| \bigcap_{i\in U} B_i \right| -x(U)&\geq 0 \mbox{~for all~} U\in\mathcal{F},\mbox{~and} \label{eqn:ineqType}\\
x(E) &= M.
\end{align}

Without loss of generality, assume the elements of $E$ are ordered lexicographically so that $w_1 \geq w_2 \geq \dots \geq w_n$.  At iteration $i$ in Algorithm \ref{alg:ComputeX}, $x_j=0$ for all $j\leq i$.  Thus, setting
\begin{align}
x_i &\leftarrow \min_{U\subseteq \{i,\dots,n\}}\left\{ f_i(U) \right\}\\
&=\min_{U\subseteq \{i,\dots,n\}:i\in U}\left\{ f(U) \right\}\\
&=\min_{U\subseteq \{i,\dots,n\}:i\in U}\left\{ M-\left| \bigcap_{i\in U} B_i \right| -x(U) \right\} \label{eqn:Uk}
\end{align}
and noting that the returned $x$ satisfies $x(E)=M$, rearranging \eqref{eqn:Uk} guarantees that
\begin{align}
x(E\backslash U) \geq \left| \bigcap_{i\in U} B_i \right|, \mbox{~for all~} U\subseteq \{i,\dots,n\},i\in U \label{eqn:feas1}
\end{align}
as desired.  Iterating through $i\in\{2,\dots,n\}$ guarantees \eqref{eqn:feas1} holds for $2\leq i\leq n$.

\begin{remark}
In the feasibility check routine (Algorithm \ref{alg:checkFeasible}), we must be able to evaluate $f_1(E)$.  The reader can verify that putting $f(E)=0$ preserves submodularity.
\end{remark}

Now, in order for the feasibility check to return \textbf{true}, we must have
\begin{align}
\min_{U\subseteq E}\left\{ f_1(U) \right\}
&=\min_{U\subseteq E:1\in U}\left\{ f(U) \right\}\\
&=\min_{U\subseteq E:1\in U}\left\{ M-\left| \bigcap_{i\in U} B_i \right| -x(U) \right\}\\
&\geq 0,
\end{align}
implying that
\begin{align}
x(E\backslash U) \geq \left| \bigcap_{i\in U} B_i \right|, \mbox{~for all~} U\subseteq E, 1\in U. \label{eqn:feas2}
\end{align}
Combining \eqref{eqn:feas1} and \eqref{eqn:feas2} and noting that $x(E)=M$ proves that $x$ is indeed feasible.  Moreover, $x$ is integral as desired.

\subsection{Optimality of a Returned $x$}\label{sec:optim}
In this section, we prove that if Algorithm \ref{alg:TopLevel} returns a feasible $x$, then it is also optimal.  First, we require two more definitions and a lemma.

\begin{definition}
A constraint of the form \eqref{eqn:ineqType} corresponding to $U$ is said to be \textbf{tight} for $U$ if
\begin{align}
f(U) = M - \left| \bigcap_{i\in U} B_i \right| - x(U) = 0.
\end{align}
\end{definition}
\begin{lemma}\label{lem:tight}
If $x$ is feasible, $X,Y$ are crossing, and their corresponding constraints are tight, then the constraints corresponding to $X\cap Y$ and $X \cup Y$ are also tight.
\end{lemma}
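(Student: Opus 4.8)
The plan is to run a standard \emph{uncrossing} argument, leveraging the crossing submodularity of $f$ established in the immediately preceding lemma together with the feasibility hypothesis $f(U)\geq 0$ for all $U\in\mathcal{F}$. The whole proof is a two-sided sandwich: tightness of $X$ and $Y$ pins the left side to zero, submodularity pushes the right side below it, and feasibility pushes the right side above zero, leaving no slack.

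First I would confirm that $X\cap Y$ and $X\cup Y$ both lie in $\mathcal{F}$, so that the submodular inequality and the feasibility bounds may legitimately be applied to them. Since $X$ and $Y$ are crossing, by definition $X\cap Y\neq\emptyset$ and $X\cup Y\neq E$. Combined with $X\cap Y\subseteq X\subsetneq E$ and $\emptyset\neq X\subseteq X\cup Y$, this shows both sets are nonempty proper subsets of $E$, i.e.\ members of $\mathcal{F}$.

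Next, using that $X,Y$ are crossing and that their constraints are tight (so $f(X)=f(Y)=0$), I would invoke crossing submodularity to obtain
\[
0 = f(X)+f(Y) \geq f(X\cap Y)+f(X\cup Y).
\]
By feasibility, $f(X\cap Y)\geq 0$ and $f(X\cup Y)\geq 0$, so the right-hand sum is at least $0$. Hence $f(X\cap Y)+f(X\cup Y)=0$ with both summands nonnegative, which forces $f(X\cap Y)=f(X\cup Y)=0$. This is precisely the assertion that the constraints corresponding to $X\cap Y$ and $X\cup Y$ are tight.

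I expect no real obstacle here: the argument is purely a squeeze between the submodular upper bound and the feasibility lower bound. The only point meriting care is the domain check in the second step, namely verifying $X\cap Y,\,X\cup Y\in\mathcal{F}$ so that both crossing submodularity and feasibility apply without gaps — and this is exactly what the crossing hypothesis on $X,Y$ supplies.
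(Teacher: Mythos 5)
Your proof is correct and is essentially the paper's own argument: the same squeeze $0 = f(X)+f(Y) \geq f(X\cap Y)+f(X\cup Y) \geq 0$, with the first inequality from crossing submodularity and the second from feasibility. Your explicit check that $X\cap Y,\,X\cup Y\in\mathcal{F}$ (which the crossing hypothesis guarantees) is a detail the paper leaves implicit, but it does not change the route.
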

\begin{proof}
Since the constraints corresponding to $X$ and $Y$ are tight, we have
\begin{align}
0=f(X)+f(Y) \geq f(X\cap Y)+f(X\cup Y) \geq 0.
\end{align}
The first inequality is due to submodularity and the last inequality holds since $x$ is feasible.  This implies the result.
\end{proof}

\begin{definition}
A family of sets $\mathcal{L}$ is \textbf{laminar} if $X,Y\in \mathcal{L}$ implies either $X\cap Y=\emptyset$, $X\subset Y$, or $Y\subset X$.
\end{definition}

At iteration $k$ ($1 < k \leq n$) of Algorithm \ref{alg:ComputeX}, let $U_k$ be the set where \eqref{eqn:Uk} achieves its minimum.  Note that $k\in U_k \subseteq \{k,\dots,n\}$.  By construction, the constraint corresponding to $U_k$ is tight.  Also, the constraint $x(E)=M$ is tight.  From the $U_k$'s and $E$ we can construct a laminar family as follows: if $U_j\cap U_k\neq \emptyset$ for $j<k$, then replace $U_j$ with  $\tilde{U}_j \leftarrow U_k \cup  U_j$.  By Lemma \ref{lem:tight}, the constraints corresponding to the sets in the newly constructed laminar family are tight. Call this family $\mathcal{L}$.  For each $i\in E$, there is a unique smallest set in $\mathcal{L}$ containing $i$.  Denote this set $L_i$.  Since $k\in U_k \subseteq \{k,\dots,n\}$, $L_i\neq L_j$ for $i\neq j$. Note that $L_1=E$ and $L_i \subset L_j$ only if $j<i$.

For each $L_i\in\mathcal{L}$ there is a unique smallest set $L_{j}$ such that $L_{i}\subset L_{j}$.  We call $L_{j}$ the least upper bound on $L_i$.

Now, consider the dual linear program to \eqref{eqn:ILP}:
\begin{align}
\mbox{maximize~~} &-\sum_{U\in\mathcal{F}} \pi_U \left(M-\left| \bigcap_{i\in U} B_i \right|\right) - \pi_E M \\
\mbox{subject to:~} &\sum_{U\in\mathcal{F}: i \in U} \pi_U +\pi_E + w_i = 0, \mbox{~for~}1\leq i \leq n\\
&\pi_U \geq 0 \mbox{~for~} U\in \mathcal{F}, \mbox{~and~} \pi_E \mbox{~free}.
\end{align}

For each $L_{i} \in \mathcal{L}$, let the corresponding dual variable $\pi_{L_{i}}=w_{j}-w_{i}$, where $L_j$ is the least upper bound on $L_i$.  By construction, $\pi_{L_{i}} \geq 0$ since it was assumed that $w_1\geq \dots \geq w_n$.  Finally, let $\pi_E = -w_1$ and $\pi_U=0$ for $U\notin \mathcal{L}$.

Now, observe that:
\begin{align}
\sum_{U\in\mathcal{F}: i \in U} \pi_U +\pi_E + w_i = 0
\end{align}
as desired for each $i$.  Thus, $\pi$ is dual feasible.  Finally, note that $\pi_U>0$ only if $U\in\mathcal{L}$.  However, the primal constraints corresponding to the sets in $\mathcal{L}$ are tight.  Thus, $(x,\pi)$ form a primal-dual feasible pair satisfying complementary slackness conditions, and are therefore optimal.

\subsection{No Returned $x$ = Infeasibility} \label{sec:feas}

Finally, we prove that if the feasibility check returns \textbf{false}, then ILP \eqref{eqn:ILP} is infeasible.  Note by construction that the vector $x$ passed to the feasibility check satisfies
\begin{align}
M - \left| \bigcap_{i\in U} B_i \right| - x(U) \geq 0 \mbox{~for all nonempty~} U\subseteq\{2,\dots,n\},
\end{align}
and $x(E)=M$. Again, let $U_k$ be the set where \eqref{eqn:Uk} achieves its minimum and let $\mathcal{L}$ be the laminar family generated by these $U_k$'s and $E$ exactly as before.  Again, the constraints corresponding to the sets in $\mathcal{L}$ are tight (this can be verified in a manner identical to the proof of Lemma \ref{lem:tight}).  Now, since $x$ failed the feasibilty check, there exists some exceptional set $T$ with $1\in T$ for which
\begin{align}
M - \left| \bigcap_{i\in T} B_i \right| - x(T) < 0.
\end{align}
Generate a set $L_T$ as follows: Initialize $L_T \leftarrow T$.  For each $L_i\in \mathcal{L}, L_i\neq E$, if $L_T \cap L_i \neq \emptyset$, update $L_T \leftarrow L_T\cup L_i$. Now, we can add $L_T$ to family $\mathcal{L}$ while preserving the laminar property.  We pause to make two observations:
\begin{enumerate}
\item By an argument similar to the proof of Lemma \ref{lem:tight}, we have that
\[M - \left| \bigcap_{i\in L_T} B_i \right| - x(L_T) < 0.\]
\item The sets in $\mathcal{L}$ whose least upper bound is $E$ form a partition of $E$.  We note that $L_T$ is a nonempty class of this partition.  Call this partition $\mathcal{P}_\mathcal{L}$.
\end{enumerate}
Again consider the dual constraints, however, let $w_i=0$ (this does not affect feasibility).  For each $L\in \mathcal{P}_\mathcal{L}$ define the associated dual variable $\pi_L = \alpha$, and let $\pi_E=-\alpha$.  All other dual variables are set to zero. It is easy to check that this $\pi$ is dual feasible.  Now, the dual objective function becomes:
\begin{align}
-\sum_{U\in\mathcal{F}} \pi_U \left((M-\left| \bigcap_{i\in U} B_i \right|\right) - \pi_E M
&=  -\alpha\sum_{L \in \mathcal{P}_{\mathcal{L}} } \left(M-\left| \bigcap_{i\in L} B_i \right| -x(L) + x(L) \right) +\alpha M\\
&= -\alpha \left(M-\left| \bigcap_{i\in  L_T} B_i \right| -x(L_T) \right) -\alpha x(E)+\alpha M\\
&= -\alpha \left(M-\left| \bigcap_{i\in  L_T} B_i \right| -x(L_T) \right)\\
&\rightarrow +\infty \mbox{~as~} \alpha\rightarrow \infty.
\end{align}
Thus, the dual is unbounded and therefore the primal problem must be infeasible.

As an immediate corollary we obtain the following:

\vspace{12pt}
\begin{corollary}\label{cor:ILP_LPgap}
The optimal values of the ILP: \[\min\left\{x(E): x(U)\geq \left| \cap_{i\in E\backslash U} B_i\right|, U\in\mathcal{F},x_i\in \mathbb{Z}\right\}\] and the corresponding LP relaxation: \[\min\left\{x(E): x(U)\geq \left| \cap_{i\in E\backslash U} B_i\right|, U\in\mathcal{F},x_i\in \mathbb{R}\right\}\] differ by less than 1.
\end{corollary}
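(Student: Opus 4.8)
The plan is to compare the two optimal values by routing everything through the equality-constrained problem \eqref{eqn:ILP}--\eqref{eqn:ILPeq2}, whose integer solvability has already been established by the correctness of Algorithm \ref{alg:TopLevel} (taking $w=\mathds{1}$). Write $M_{\mathrm{ILP}}$ and $M_{\mathrm{LP}}$ for the optimal values of the ILP and its LP relaxation in the statement. Every integer-feasible point is real-feasible, so $M_{\mathrm{LP}}\le M_{\mathrm{ILP}}$; moreover both values are finite and nonnegative, since for $n\ge 2$ any proper nonempty $U$ gives $x(E)=x(U)+x(E\backslash U)\ge 0$ for every feasible $x$ (both $U$ and $E\backslash U$ lie in $\mathcal{F}$), while taking all $x_i$ equal to $\max_{U}\left|\bigcap_{i\in U}B_i\right|$ exhibits a feasible point. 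Hence it suffices to prove the single inequality $M_{\mathrm{ILP}}\le \lceil M_{\mathrm{LP}}\rceil$: since $\lceil M_{\mathrm{LP}}\rceil-M_{\mathrm{LP}}<1$ always holds, this yields $0\le M_{\mathrm{ILP}}-M_{\mathrm{LP}}<1$, which is exactly the claim.

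Next I would reformulate both optima as thresholds on the parameter $M$ appearing in \eqref{eqn:ILPeq2} and record a monotonicity property. A vector is feasible for the inequality-constrained problem with $x(E)=M$ precisely when it is feasible for \eqref{eqn:ILP}--\eqref{eqn:ILPeq2} at that $M$. Feasibility is upward closed in $M$: given a feasible $x$ at $M$, adding $M'-M\ge 0$ to a single coordinate leaves every covering constraint \eqref{eqn:ILPeq1} satisfied (they are ``$\ge$'' constraints) and moves $x(E)$ to $M'$. Consequently the set of real $M$ for which \eqref{eqn:ILP}--\eqref{eqn:ILPeq2} is feasible over $\mathbb{R}$ is the ray $[M_{\mathrm{LP}},\infty)$, and the set of integer $M$ for which it is feasible over $\mathbb{Z}$ is $\{M_{\mathrm{ILP}},M_{\mathrm{ILP}}+1,\dots\}$; in particular $M_{\mathrm{ILP}}$ is the least integer at which the equality-constrained ILP is feasible, and $M_{\mathrm{LP}}$ is the least real value at which its relaxation is feasible.

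The crux, and the step I expect to do the real work, is the integrality bridge: feasibility of the LP relaxation of \eqref{eqn:ILP}--\eqref{eqn:ILPeq2} at an \emph{integer} value of $M$ forces feasibility of the ILP itself at that same $M$. This is already contained in the correctness proof of Algorithm \ref{alg:TopLevel}: the routine ComputePotentialX produces an integral vector with $x(E)=M$, because each coordinate is an SFM value of the integer-valued function $f$ of \eqref{eqn:defineF} and the final coordinate is $M$ minus an integer sum; and Section \ref{sec:feas} shows, by exhibiting an unbounded dual ray, that whenever CheckFeasible fails the LP relaxation is infeasible. Contrapositively, if the LP relaxation is feasible then CheckFeasible succeeds and the returned integral vector satisfies all of \eqref{eqn:ILPeq1}. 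Applying this with $M=\lceil M_{\mathrm{LP}}\rceil$, which lies in $[M_{\mathrm{LP}},\infty)$ and hence makes the equality-constrained LP feasible, produces an integral vector feasible for \eqref{eqn:ILP}--\eqref{eqn:ILPeq2} at $M=\lceil M_{\mathrm{LP}}\rceil$; dropping the (now superfluous) equality constraint \eqref{eqn:ILPeq2} leaves an integral point feasible for the inequality ILP of the corollary with objective value $\lceil M_{\mathrm{LP}}\rceil$. Therefore $M_{\mathrm{ILP}}\le \lceil M_{\mathrm{LP}}\rceil$, and combined with $M_{\mathrm{LP}}\le M_{\mathrm{ILP}}$ the two optima differ by less than $1$. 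The only points requiring care are confirming that the thresholds $M_{\mathrm{LP}}$ and $M_{\mathrm{ILP}}$ are genuinely attained (so the threshold reformulation is exact) and that the integral vector returned at $M=\lceil M_{\mathrm{LP}}\rceil$ remains feasible after the equality constraint is discarded; both are immediate from the monotonicity above and the fact that \eqref{eqn:ILPeq1} is unaffected by that discard.
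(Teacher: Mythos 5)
Your proof is correct and follows essentially the same route as the paper's: both rest on the correctness of Algorithm \ref{alg:TopLevel} (an integral feasible point exists at any integer $M$ for which the LP polytope meets the hyperplane $x(E)=M$, since CheckFeasible fails only when the real relaxation is infeasible) combined with upward-closedness of feasibility in $M$. Your formulation via $M_{\mathrm{ILP}}\le\lceil M_{\mathrm{LP}}\rceil$ is just a rephrasing of the paper's observation that the LP value must exceed $M^*-1$, with the attainment, integrality-of-output, and padding details that the paper leaves implicit spelled out explicitly.
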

\vspace{12pt}

\begin{proof}
Algorithm \ref{alg:TopLevel} is guaranteed to return an optimal $x$ if the intersection of the polytope and the hyperplane $x(E)=M$ is nonempty.  Thus, if $M^*$ is the minimum such $M$, then the optimal value of the LP must be greater than $M^*-1$.
\end{proof}

\subsection{Solving the General ILP}

Finally, we remark on how to solve the general case of the ILP without the equality constraint given in \eqref{eqn:generalILP}.  First, we state a simple convexity result.

\begin{lemma}
Let $p^*_w(M)$ denote the optimal value of ILP \eqref{eqn:ILP} when the equality constraint is $x(E)=M$.  We claim that $p^*_w(M)$ is a convex function of $M$.
\end{lemma}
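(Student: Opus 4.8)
The plan is to deduce convexity of the integer optimal value from convexity of its linear-programming relaxation, together with the integrality that has already been established in this section. The crucial structural observation is that in \eqref{eqn:ILP}--\eqref{eqn:ILPeq2} the parameter $M$ enters \emph{only} through the right-hand side of the single equality constraint $x(E)=M$; the right-hand sides $c(U):=\big|\bigcap_{i\in E\setminus U}B_i\big|$ of the inequality constraints \eqref{eqn:ILPeq1} do not depend on $M$. First I would introduce the relaxation $P^{LP}_w(M):=\min\{w^Tx : x(U)\ge c(U)\ \forall U\in\mathcal F,\ x(E)=M\}$ and write down its dual. Assigning multipliers $\lambda_U\ge0$ to the inequalities and a free multiplier $\mu$ to the equality, the dual is
\[
\max\Big\{\textstyle\sum_{U\in\mathcal F}\lambda_U\, c(U) + \mu M \;:\; \sum_{U\in\mathcal F:\, i\in U}\lambda_U + \mu = w_i\ \forall i,\ \lambda\ge 0\Big\}.
\]

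The point of this reformulation is that the dual feasible region is completely independent of $M$, while the dual objective is affine in $M$ for each fixed $(\lambda,\mu)$ (with slope $\mu$). Since $w\ge0$ and the constraints already force $0\le x_i\le M$ (note $\{i\}\in\mathcal F$ gives $x_i\ge c(\{i\})\ge 0$), the primal LP is feasible and bounded exactly on the range $M\ge M^{\min}$, where $M^{\min}$ is the least value of $x(E)$ over feasible $x$. Strong duality therefore applies there and $P^{LP}_w(M)=\sup_{(\lambda,\mu)\ \text{dual feasible}}\big[\sum_U\lambda_U c(U)+\mu M\big]$. A pointwise supremum of a fixed family of affine functions of $M$ is convex; hence $P^{LP}_w(\cdot)$ is convex (indeed piecewise linear) on its domain, and assigning it the value $+\infty$ for $M<M^{\min}$ keeps it convex on all of $\mathbb R$.

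It then remains to transfer convexity from the relaxation to the integer program, i.e.\ to show $p^*_w(M)=P^{LP}_w(M)$ at every integer $M$. This is precisely the content of the optimality analysis in Section~\ref{sec:optim}: whenever \eqref{eqn:ILP}--\eqref{eqn:ILPeq2} is feasible, Algorithm~\ref{alg:TopLevel} returns an integral $x$ together with a dual-feasible $\pi$ satisfying complementary slackness, so this integral $x$ is optimal for the LP as well; and when the feasibility check fails, Section~\ref{sec:feas} shows the LP is already infeasible. Thus $p^*_w$ agrees at the integers with the convex function $P^{LP}_w$, and so for every integer $M$ in the feasible range $p^*_w(M)=P^{LP}_w(M)\le\tfrac12\big(P^{LP}_w(M-1)+P^{LP}_w(M+1)\big)=\tfrac12\big(p^*_w(M-1)+p^*_w(M+1)\big)$, which is exactly discrete convexity.

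The only genuinely delicate point is this integrality step, and I would lean on the primal--dual certificate already proved rather than re-derive it. A self-contained alternative avoiding the LP entirely is a direct exchange argument: given integral optima $a,b$ for levels $M-1,M+1$, the vector $a+e_i$ is feasible at level $M$ for every $i$ (raising a coordinate cannot violate a lower-bound constraint), and $b-e_j$ is feasible at level $M$ whenever $j$ lies in no tight proper set; choosing $i$ to minimise $w$ then gives $2p^*_w(M)\le w^T(a+e_i)+w^T(b-e_j)=p^*_w(M-1)+p^*_w(M+1)+w_i-w_j\le p^*_w(M-1)+p^*_w(M+1)$. The main obstacle in that route is exhibiting a reducible coordinate $j$ whenever $b(E)>M^{\min}$, which is exactly where the crossing-submodular structure of the tight sets (Lemma~\ref{lem:tight}) must be invoked to rule out every element being saturated; I expect the LP argument above to be the cleaner writeup.
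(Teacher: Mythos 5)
Your proof is correct, but it takes a genuinely different route from the paper's. The paper argues directly on the primal side: given integral optima $x^{(1)},x^{(2)}$ at levels $M_1,M_2$, the convex combination $x^{(\theta)}=\theta x^{(1)}+(1-\theta)x^{(2)}$ is a (possibly fractional) feasible point at level $M_\theta$, and since the optimality analysis of Section~\ref{sec:optim} shows the optimum at any level is attained by an integral vector, it concludes $p^*_w(M_\theta)\le w^Tx^{(\theta)}=\theta p^*_w(M_1)+(1-\theta)p^*_w(M_2)$ --- two lines, with no fresh appeal to duality. You instead prove convexity of the relaxed value function $P^{LP}_w(\cdot)$ by exhibiting it as a pointwise supremum of functions affine in $M$ over the $M$-independent dual feasible set, and then invoke the primal--dual certificate of Sections~\ref{sec:optim} and~\ref{sec:feas} to identify $p^*_w(M)=P^{LP}_w(M)$ at integers. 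Both arguments hinge on exactly the same key fact (the integral output of Algorithm~\ref{alg:TopLevel} is optimal for the LP relaxation, by complementary slackness, and failure of the feasibility check means even the LP is infeasible); the paper's version is more elementary, while yours buys piecewise linearity of the value function and a clean treatment of infeasible $M$ via the value $+\infty$. Two small points to tighten: you should verify the dual feasible set is nonempty so that the sup-of-affine representation is over a nonempty family and diverges to $+\infty$ exactly when the primal is infeasible --- e.g., order $w_1\ge\cdots\ge w_n$, take the chain $T_i=\{1,\dots,i\}\in\mathcal{F}$ for $i\le n-1$ with $\lambda_{T_i}=w_i-w_{i+1}\ge 0$ and $\mu=w_n$ (this is essentially the paper's laminar construction); and for strong duality you only need the objective bounded below, which follows from $w\succeq 0$ and $x\succeq 0$ (the singleton constraints), so the box bound $x_i\le M$ is dispensable. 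Your closing exchange-argument sketch is, as you acknowledge, incomplete at precisely the hard step (producing a reducible coordinate $j$, which would require the tight-set structure of Lemma~\ref{lem:tight}), but your main LP argument does not depend on it.
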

\begin{proof}
Let $M_1$ and $M_2$ be integers and let $\theta\in [0,1]$ be such that $M_{\theta}=\theta M_1+(1-\theta)M_2$ is an integer. Let $x^{(1)}$ and be $x^{(2)}$ optimal vectors that attain $p^*_w(M_1)$ and $p^*_w(M_2)$ respectively.  Let $x^{(\theta)}=\theta x^{(1)}+(1-\theta) x^{(2)}$.  By convexity, $x^{(\theta)}$ is feasible, though not necessarily integer. However, by the results from above, optimality is always attained by an integral vector. Thus, it follows that:
\begin{align}
\theta p^*_w(M_1)+(1-\theta) p^*_w(M_2)=\theta w^T x^{(1)}+(1-\theta) w^T x^{(2)} =w^T x^{(\theta)}  \geq p^*_w(M_\theta).
\end{align}
\end{proof}

Noting that $p^*_w(M)$ is convex in $M$, we can perform bisection on $M$ to solve the ILP in the general case.  For our purposes, it suffices to have relatively loose upper and lower bounds on $M$ since the complexity only grows logarithmically in the difference.  A simple  lower bound on $M$ is  given by $M\geq \max_i \left| B_i \right|$.

\subsection{Complexity}
Our aim in this paper is not to give a detailed complexity analysis of our algorithm.  This is due to the fact that the complexity is dominated by the the SFM over the set $E$ in Algorithm \ref{alg:checkFeasible}.  Therefore, the complexity of Algorithm \ref{alg:TopLevel} is essentially the same as the  complexity of the SFM solver employed.

However, we have performed a series of numerical experiments to demonstrate that Algorithm \ref{alg:TopLevel} performs quite well in practice.  In our implementation, we ran the Fujishige-Wolfe (FW) algorithm for SFM \cite{bib:Fujishige06theMinNormPoint} based largely on a Matlab routine by A. Krause \cite{bib:Krause_sfo}.  While the FW algorithm has not been proven to run in polynomial time, it has been shown to work quite well in practice \cite{bib:Fujishige06theMinNormPoint} (similar to the Simplex algorithm for solving Linear Programs).  Whether or not FW has worst-case polynomial complexity is an open problem to date.  We remark that there are several SFM algorithms that run in strongly polynomial time which could be used if a particular application requires polynomially bounded worst-case complexity \cite{bib:McCormick2005}.

In our series of experiments, we chose $B_i \subset F$ randomly, where $|F|=50$.  We let $n=|E|$ range from $10$ to $190$ in increments of $10$.  For each value of $n$, we ran $10$ experiments.  The average computation time is shown in Figure \ref{fig:exp}, with error bars indicating one standard deviation.  We consistently observed that the computations run in approximately $O(n^{1.85})$ time.  Due to the iterative nature of the SFM algorithm, we anticipate that the computation time could be significantly reduced by implementing the algorithm in C/C++ instead of Matlab. However, the $O(n^{1.85})$ trend should remain the same.  Regardless, we are able to solve the ILP problems under consideration with an astonishing $2^{190}$ constraints in approximately one minute.

\begin{figure}[h!]
 \centering
\includegraphics[width=0.95\textwidth]{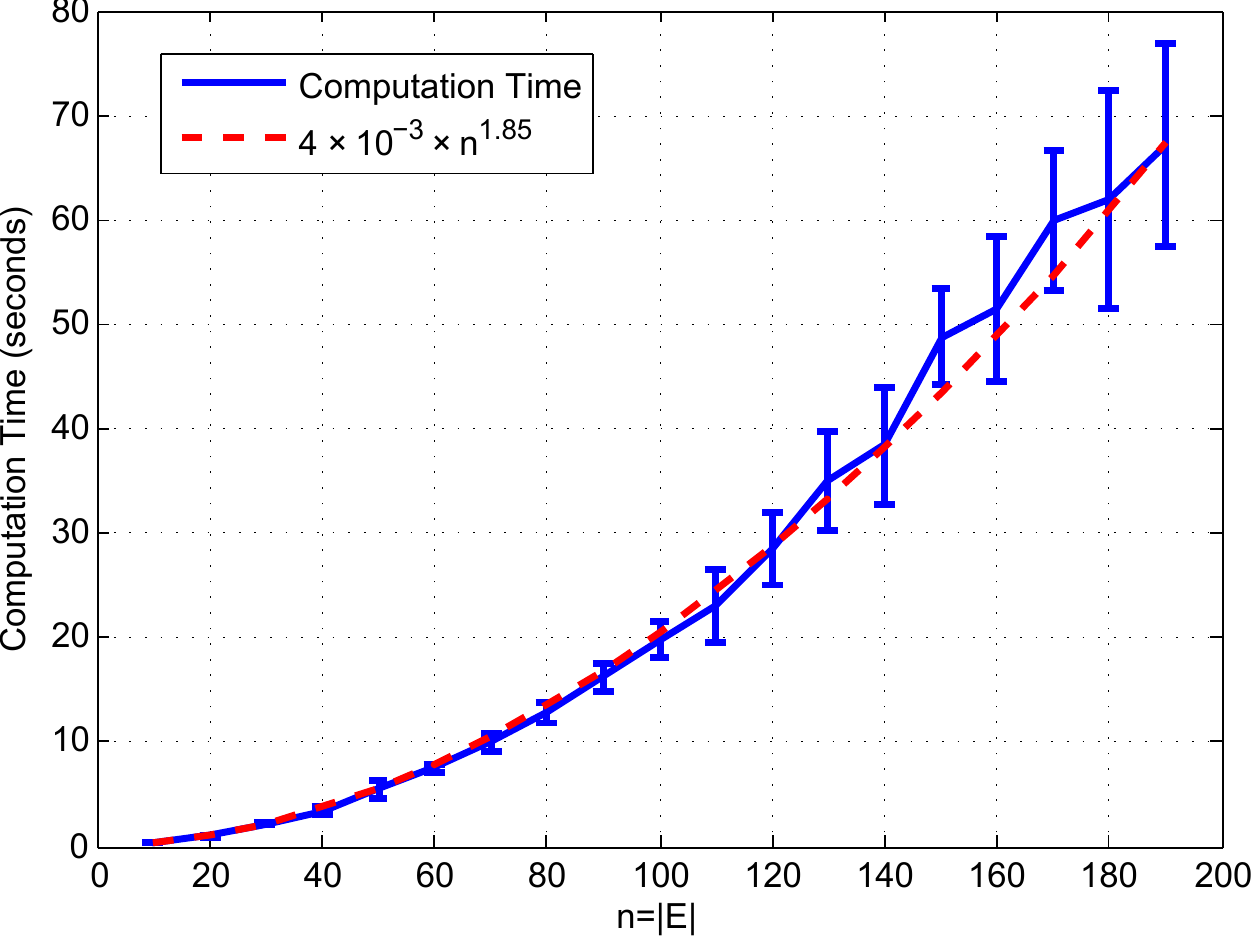}
\caption{Experimental results.  For the red dotted line, the multiplicative constant $\alpha$ and exponent $\beta$ were chosen to minimize the MSE $\sum_{i=1}^n|\log(\alpha n^\beta)-\log(\hat{m}_n)|^2$, where $\hat{m}_n$ is the sample mean of the computation times for $|E|=n$.}\label{fig:exp}
\end{figure}

\section{A Linear Programming Approximation Lemma} \label{app:LPlemma}

\begin{lemma} \label{lem:LP_LSapprox}
Let $A\in\mathbb{R}^{n \times n}$ be a symmetric matrix with nonnegative entries and all column sums equal to $d$. Let $\bar{x}_y$ be the vector of minimum Euclidean norm which minimizes $\|Ax_y-y\|_2$. There exists an optimal solution $x^*$ to the linear program 
\begin{align}
\mbox{minimize~~} & \mathds{1}^T x \label{eqn:lemmaLP}\\
\mbox{subject to:~~} & Ax \succeq y \notag
\end{align}
which satisfies
\begin{align*}
\|x^*-\bar{x}_y\|_{\infty} & \leq c_A \|A\bar{x}_y-y\|_{2},
\end{align*}
where $c_A$ is a constant depending only on $A$.
\end{lemma}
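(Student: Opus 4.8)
The plan is to exploit the two structural features of $A$: symmetry, so that $\operatorname{range}(A)$ and $\ker A$ are orthogonal complements and $AA^+=P_R$ is the orthogonal projector onto $R:=\operatorname{range}(A)$; and the constant column sums, so that $A\mathds{1}=d\mathds{1}$ (hence $\mathds{1}$ is an eigenvector lying in $R$) and $\mathds{1}^TA=d\mathds{1}^T$. Throughout I assume $d>0$, as holds for the $d$-regular graphs to which the lemma is applied.

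First I would record the basic facts and reduce the LP to the subspace $R$. Note $A\bar x_y=AA^+y=P_Ry$, so the residual is $\rho:=y-A\bar x_y=P_{\ker A}y\in\ker A$ with $\|\rho\|_2=\|A\bar x_y-y\|_2=:r$. The LP is feasible (a large multiple of $\mathds{1}$ satisfies $A(t\mathds{1})=td\mathds{1}\succeq y$) and bounded below (summing the rows of $Ax\succeq y$ and using $\mathds{1}^TA=d\mathds{1}^T$ gives $\mathds{1}^Tx\ge\tfrac1d\mathds{1}^Ty$), so an optimum is attained. Because $A$ annihilates $\ker A$ and $\mathds{1}\in R=(\ker A)^\perp$, both the constraint $Ax\succeq y$ and the objective $\mathds{1}^Tx$ depend only on $P_Rx$; projecting any optimal solution onto $R$ therefore yields an optimal solution $x^\ast\in R$. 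Since $\bar x_y=A^+y\in R$ as well, the error $\delta:=x^\ast-\bar x_y$ lies in $R$, where $A$ acts invertibly — the feature that will ultimately convert a bound on $\|A\delta\|_2$ into one on $\|\delta\|_\infty$.

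The heart of the argument is to control $\delta$. Feasibility of $x^\ast$ gives $A\delta=Ax^\ast-A\bar x_y\succeq y-A\bar x_y=\rho$, so the slack $s:=A\delta-\rho$ is nonnegative. The key identity comes from the column sums: since $\rho\perp\mathds{1}$, we have $\mathds{1}^Ts=\mathds{1}^TA\delta-\mathds{1}^T\rho=d\,\mathds{1}^T\delta$. To bound $\mathds{1}^T\delta$ from above I would exhibit the explicit feasible competitor $\tilde x=\bar x_y+\tfrac{r}{d}\mathds{1}$ (feasible because $\tfrac{r}{d}\,d=r\ge\rho_i$ for every $i$, as $\rho_i\le\|\rho\|_2=r$), whose objective exceeds that of $\bar x_y$ by $\tfrac{nr}{d}$; optimality of $x^\ast$ then forces $\mathds{1}^T\delta\le\tfrac{nr}{d}$, hence $\mathds{1}^Ts\le nr$. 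Nonnegativity of $s$ is the crucial lever: it lets me pass from the $\ell_1$ bound $\|s\|_1=\mathds{1}^Ts\le nr$ to $\|s\|_2\le nr$, whence $\|A\delta\|_2=\|s+\rho\|_2\le(n+1)r$.

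Finally, because $\delta\in R$ and $A$ restricted to $R$ is invertible, $\|A\delta\|_2\ge\sigma\,\|\delta\|_2$, where $\sigma>0$ is the smallest nonzero singular value of $A$ (equivalently, the smallest modulus among the nonzero eigenvalues of the symmetric matrix $A$). Combining with $\|\delta\|_\infty\le\|\delta\|_2$ gives $\|\delta\|_\infty\le\tfrac{n+1}{\sigma}\,r$, which is the claim with $c_A=(n+1)/\sigma$ depending only on $A$. I expect the only genuine obstacle to be the passage from ``the objective values are close'' to ``the vectors are close in $\ell_\infty$'', since closeness of a scalar linear objective alone says nothing about the full vector; the resolution is precisely the structural chain above — the nonnegative slack together with the identity $\mathds{1}^Ts=d\,\mathds{1}^T\delta$ turns the scalar objective gap into an $\ell_2$ bound on $A\delta$, which invertibility of $A$ on $R$ then transfers to $\delta$ itself.
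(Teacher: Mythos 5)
Your proof is correct, and it takes a genuinely different route from the paper's. The paper works with the shifted LP ($b = y - A\bar{x}_y$) and argues via duality: its Claim \ref{clm:dualsEqual} shows, by constructing dual solutions for an $\epsilon$-perturbed problem and running a sensitivity analysis, that adding the box constraints $x_i \leq (d\lambda)^{-1} n \|b\|_{\infty}$ does not change the optimal value, so some optimum satisfies them; its Claim \ref{clm:xLB} then derives the matching lower bound $x_i \geq -\lambda^{-1} n (n-1) \|b\|_{\infty}$ elementarily from a single constraint row with $a_{ji} \geq d/n$, yielding $c_A = \lambda^{-1} n \max\{n-1, d^{-1}\}$. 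You instead argue entirely on the primal side: restricting to $R = \operatorname{range}(A)$ (legitimate, since both the constraint and the objective depend only on $P_R x$, as $\mathds{1} \in R$ when $d>0$), exhibiting the explicit competitor $\bar{x}_y + \tfrac{r}{d}\mathds{1}$ to bound $\mathds{1}^T\delta \leq nr/d$, using the column-sum identity $\mathds{1}^T s = d\,\mathds{1}^T\delta$ together with nonnegativity of the slack $s$ to pass from the scalar objective gap to $\|s\|_2 \leq \|s\|_1 \leq nr$ and hence $\|A\delta\|_2 \leq (n+1)r$, and finally invoking invertibility of $A$ on $R$ to get $\|\delta\|_{\infty} \leq \|\delta\|_2 \leq (n+1)r/\lambda$. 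Every step checks out: the projection step preserves feasibility and objective value, $\rho \perp \mathds{1}$ holds because $\mathds{1} \in R = (\ker A)^{\perp}$, and the smallest nonzero singular value of the symmetric $A$ is exactly the paper's $\lambda$. Your argument is shorter, avoids the duality and perturbation machinery entirely, and delivers the sharper constant $c_A = (n+1)/\lambda$ (versus the paper's, which is of order $n^2/\lambda$); what the paper's approach buys is the standalone fact that the box-constrained and unconstrained LPs share the same optimal value, but for the lemma as stated your route is strictly cleaner. Your explicit flagging of the standing assumption $d > 0$ is appropriate — the paper also uses it implicitly (it divides by $d$ and needs $d$ to be a nonzero eigenvalue), and for $d = 0$ the lemma is degenerate.
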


\begin{proof}[Proof of Lemma \ref{lem:LP_LSapprox}]
To begin the proof, we make a few definitions.  Let $\lambda$ be the absolute value of the nonzero eigenvalue of $A$ with smallest modulus (at least one exists since $d$ is an eigenvalue).  Define $\mathcal{N}(A)$ to be the nullspace of $A$, and let $\mathcal{N}^{\perp}(A)$ denote its orthogonal complement.  Finally, let $A^+$ denote the Moore-Penrose pseudoinverse of $A$.

Fix $\bar{x}_y\in\mathbb{R}^n$, and note that $x^*$ is an optimal solution to LP \eqref{eqn:lemmaLP} if and only if $x^*-\bar{x}_y$ is an optimal solution to the linear program
\begin{align*}
\mbox{minimize~~} & \mathds{1}^T (x+\bar{x}_y) \\
\mbox{subject to:~~} & A(x+\bar{x}_y) \succeq y
\end{align*}
with variable $x\in \mathbb{R}^n$.  With this in mind, put $\bar{x}_y=A^+ y$ and define $b=y-A\bar{x}_y$.  By definition of the pseudoinverse, $\bar{x}_y$ is the vector of minimum Euclidean norm which minimizes  $\|Ax_y-y\|_2$.  Moreover, $b\in\mathcal{N}(A)$.

Thus, in order to prove the lemma, it suffices to show the existence of an optimal solution $x^*$ to the linear program 
\begin{align}
\mbox{minimize~~} & \mathds{1}^T x \label{eqn:LPwithoutconstraints} \\
\mbox{subject to:~~} & Ax \succeq b \notag 
\end{align}
which also satisfies the additional constraints
\begin{align}
|x_i| & \leq c_A \|b\|_{2} \mbox{~~for $i=1,\dots, n$,} \label{eqn:extraConstraints}
\end{align}
where $c_A$ is a constant depending only on $A$.

\begin{claim} \label{clm:dualsEqual}
There exists an optimal solution $x^*$ to Linear Program \eqref{eqn:LPwithoutconstraints} which satisfies
\begin{align}
x^*_i  \leq (d \lambda)^{-1} n \|b\|_{\infty} \mbox{~~for $i=1,\dots, n$.} \label{eqn:extraConstraints}
\end{align}
\end{claim}
The proof relies heavily on duality.  The reader is directed to \cite{bib:BoydVandenberghe2004} or any other standard text for details.

To prove the claim, consider LP \eqref{eqn:LPwithoutconstraints}.  By premultiplying the inequality constraint by $d^{-1} \mathds{1}^T$ on both sides, we see that $\mathds{1}^Tx \geq d^{-1} \mathds{1}^Tb > -\infty$.  Hence, the objective is bounded from below, which implies that strong duality holds.  Thus, let $\tilde{z}$ be an optimal solution to the dual LP of \eqref{eqn:LPwithoutconstraints}:
\begin{align}
\mbox{maximize~~} & b^T z \label{eqn:dualLP1} \\
\mbox{subject to:~~} & Az = \mathds{1} \notag \\
& z \succeq 0 \notag
\end{align}
with dual variable $z\in \mathbb{R}^n$.

Next, consider the dual LP of \eqref{eqn:LPwithoutconstraints} with the additional inequality constraints corresponding to \eqref{eqn:extraConstraints}:
\begin{align}
\mbox{maximize~~} & b^T z -(d\lambda)^{-1} n\|b\|_{\infty} \mathds{1}^T y \label{eqn:dualLP2}\\
\mbox{subject to:~~} & Az = \mathds{1}+y \notag\\
& z \succeq 0 \notag\\
& y \succeq 0 \notag
\end{align}
with dual variables $z\in \mathbb{R}^n$ and $y\in \mathbb{R}^n$.  Equivalently, by setting $z=\tilde{z}+\Delta z$ and observing that $y=A\Delta z$, we can write the dual LP \eqref{eqn:dualLP2} as 
\begin{align}
\mbox{maximize~~} & b^T \tilde{z} + b^T \Delta {z} -(d\lambda)^{-1} n\|b\|_{\infty} \mathds{1}^T A \Delta z \label{eqn:dualLP3}\\
\mbox{subject to:~~} 
& A \Delta z \succeq 0 \notag \\
& \tilde{z} +  \Delta z \succeq 0 \notag
\end{align} 
with dual variables $\Delta z\in \mathbb{R}^n$. We prove the claim by showing that the dual LPs \eqref{eqn:dualLP1} and \eqref{eqn:dualLP3} have the same optimal value.  Since strong duality holds, the corresponding primal problems must also have the same optimal value.

Without loss of generality, we can uniquely decompose $\Delta z = \Delta z_1 + \Delta z_2$ where $\Delta z_1 \in \mathcal{N}(A)$ and $\Delta z_2 \in \mathcal{N}^{\perp}(A)$.  Since  $b\in\mathcal{N}(A)$, we have $b^T \Delta z_2 = 0$ and we can rewrite \eqref{eqn:dualLP3} yet again as
\begin{align}
\mbox{maximize~~} & b^T \tilde{z} + b^T \Delta {z}_1 -(d\lambda)^{-1}n \|b\|_{\infty} \mathds{1}^T A \Delta z_2 \label{eqn:dualLP4}\\
\mbox{subject to:~~} 
& A \Delta z_2 \succeq 0 \notag\\
& \tilde{z} +  \Delta z_1 +\Delta z_2 \succeq 0 \label{eqn:ineqConstDz1Dz2}\\
& \Delta z_1 \in \mathcal{N}(A), \Delta z_2 \in \mathcal{N}^{\perp}(A).\notag
\end{align} 

By definition of $\lambda$, for any unit vector $u \in \mathcal{N}^{\perp}(A)$ with $\|u\|_2=1$ we have $\|Au\|_2 \geq \lambda$.
Using this and the fact that $A \Delta z_2 \succeq 0$ for all feasible $\Delta z_2$, we have the following inequality:
\begin{align*}
\mathds{1}^T A \Delta z_2 = \|A \Delta z_2 \|_1 \geq \|A \Delta z_2 \|_2 \geq \lambda \| \Delta z_2 \|_2.
\end{align*}
Thus, the objective \eqref{eqn:dualLP4} can be upper bounded as follows:
\begin{align}
b^T \tilde{z} + b^T \Delta {z}_1 - (d\lambda)^{-1} n \|b\|_{\infty} \mathds{1}^T A \Delta z_2
&\leq b^T \tilde{z} + b^T \Delta {z}_1 - d^{-1} n \|b\|_{\infty} \|\Delta z_2\|_2. \label{eqn:objUB}
\end{align}

Next, we obtain an upper bound on $b^T \Delta {z}_1$.  To this end, 
observe that constraint \eqref{eqn:ineqConstDz1Dz2} implies that $\tilde{z} +  \Delta z_1 \succeq -\mathds{1} \|\Delta z_2 \|_{\infty}$.  Motivated by this, consider the following $\epsilon$-perturbed LP:
\begin{align}
\mbox{minimize~~} &-b^T v \label{eqn:epsPrimal}\\
\mbox{subject to:~~} 
& \tilde{z} +  v \succeq -\epsilon \mathds{1} \notag \\
& v \in \mathcal{N}(A). \notag
\end{align} 
with variable $v$.  Let $p^{*}(\epsilon)$ denote the optimal value of the $\epsilon$-perturbed problem. First observe that $p^{*}(0)=0$.  To see this, note that if $\tilde{z}+v \succeq 0$, then $b^T  v \leq 0$, else we would contradict the optimality of $\tilde{z}$ since $z=\tilde{z}+v$ is a feasible solution to the dual LP \eqref{eqn:dualLP1} in this case.  Now, weak duality implies
\begin{align}
-b^T v \geq p^{*}(\epsilon) \geq p^{*}(0) - \epsilon \mathds{1}^T w^*, \label{eqn:sensAn}
\end{align}
where $w^*$ corresponds to an optimal solution to the dual LP of the unperturbed primal LP \eqref{eqn:epsPrimal}, given by:
\begin{align}
\mbox{maximize~~} &-\tilde{z}^T (Aw-b) \label{eqn:dualLP5}\\
\mbox{subject to:~~} 
& Aw \succeq b. \notag
\end{align} 
Hence, \eqref{eqn:sensAn} implies that
\begin{align}
b^T \Delta {z}_1 \leq \|\Delta z_2 \|_{\infty} \mathds{1}^T w^* \label{eqn:btzUB}
\end{align}
if $\Delta {z}_1, \Delta z_2$ are feasible for LP \eqref{eqn:dualLP4}.

By definition of $\tilde{z}$, $\tilde{z}^TA=\mathds{1}^T$, and hence a vector $w^*$ is optimal for \eqref{eqn:dualLP5} if and only if it also optimizes:
\begin{align*}
\mbox{minimize~~} & \mathds{1}^T w \\
\mbox{subject to:~~} 
& Aw \succeq b.
\end{align*} 
Combining this with \eqref{eqn:btzUB}, we have 
\begin{align*}
b^T \Delta {z}_1 \leq \|\Delta z_2 \|_{\infty} \mathds{1}^T w^* \leq \|\Delta z_2 \|_{\infty} \mathds{1}^T w
\end{align*}
for any vector $w$ satisfying $Aw \succeq b$.  Trivially, $w=d^{-1} \|b\|_{\infty} \mathds{1}$ satisfies this, and hence we obtain:
\begin{align*}
b^T \Delta {z}_1 \leq d^{-1} n \|b\|_{\infty} \|\Delta z_2 \|_{\infty}.
\end{align*}
Finally, we substitute this into \eqref{eqn:objUB} and see that
\begin{align*}
b^T z
&\leq b^T \tilde{z} + d^{-1} n \|b\|_{\infty} \|\Delta z_2 \|_{\infty} - d^{-1}n \|b\|_{\infty} \|\Delta z_2\|_2\\
&\leq b^T \tilde{z} + d^{-1} n \|b\|_{\infty} \|\Delta z_2 \|_{2} - d^{-1}n \|b\|_{\infty} \|\Delta z_2\|_2 \\
&\leq b^T \tilde{z}
\end{align*}
for all vectors $z$ which are feasible for the dual LP \eqref{eqn:dualLP2}.  This completes the proof of Claim \ref{clm:dualsEqual}.

\begin{claim} \label{clm:xLB}
There exists an optimal solution $x^*$ to Linear Program \eqref{eqn:LPwithoutconstraints} which satisfies
\begin{align}
|x_i| \leq c_A \|b\|_{2} \mbox{~~for $i=1,\dots, n$} 
\end{align}
for some constant $c_A$ depending only on $A$.
\end{claim}

First note that $\|b\|_{\infty}\leq \|b\|_2$ for any $b\in\mathbb{R}^n$, hence it suffices to prove the claim for the infinity norm.
Claim \ref{clm:dualsEqual} shows that each of the $x_i$'s can be upper bounded by $(d\lambda)^{-1}n \|b\|_{\infty}$ without affecting the optimal value of LP \eqref{eqn:LPwithoutconstraints}.  To see the lower bound, let $a_j^T$ be a row of $A$ with entry $a_{ji}\geq d/n$ in the $i^{th}$ coordinate (at least one exists for each $i$ since the columns of $A$ sum to $d$). Now, the inequality constraint $Ax \succeq b$ combined with the upper bound on each $x_i$ implies:
\begin{align}
a_{ji} x_i + (d-a_{ji})\lambda^{-1} n \|b\|_{\infty} \geq a_j^T x \geq b_j \geq -\|b\|_{\infty}. \label{eqn:xLB}
\end{align}
Since $a_{ji}\geq d/n$, \eqref{eqn:xLB} implies:
\begin{align*}
x_i  \geq -\lambda^{-1}n(n-1)\|b\|_{\infty}.
\end{align*}
Hence, we can take $c_A= \lambda^{-1}n \times \max\{ n-1, d^{-1}\}$.  This proves Claim \ref{clm:xLB}, and, by our earlier remarks, proves the lemma.
\end{proof}

\bibliographystyle{IEEEtran.bst}
\bibliography{sharingUR}

\end{document}